\definecolor{MyBlue}{rgb}{0.12, 0.12, 0.76}
\declaretheorem[name=Theorem,numberwithin=section]{thm}
\newtheorem{theorem}[thm]{Theorem}
\newtheorem{lemma}[theorem]{Lemma}
\newtheorem{proposition}[theorem]{Proposition}
\newtheorem{definition}{Definition}[section]
\newtheorem{lemcorollary}{Corollary}[lemma]
\newtheorem{thmComment}{Comment}[section]
\DeclareMathOperator*{\argmax}{arg\,max}
\newcommand\ben[1]{\textbf{\textcolor{red}{(Ben): {#1}}}}
\newcommand\ashish[1]{\textbf{\textcolor{green}{(Ashish): {#1}}}}
\newcommand\x{\mathbf{x}}
\newcommand\y{\mathbf{y}}
\newcommand\z{\mathbf{z}}
\newcommand\B{\mathcal{B}}
\newcommand\Rfrom{R^{\leftarrow}}
\newcommand\bbR{\mathbb{R}}
\newcommand\T{\mathcal{T}}
\newcommand\calh{\mathcal{H}}
\newcommand\tat{t\^{a}tonnement\xspace}
\begin{document}
\title{Markets for Public Decision-making}

\author{Nikhil Garg \and Ashish Goel \and Benjamin Plaut}
\date{\{nkgarg, ashishg, bplaut\}@stanford.edu\\ Stanford University}

\maketitle

\begin{abstract}

A public decision-making problem consists of a set of issues, each with multiple possible alternatives, and a set of competing agents, each with a preferred alternative for each issue. We study adaptations of market economies to this setting, focusing on binary issues. Issues have prices, and each agent is endowed with artificial currency that she can use to purchase probability for her preferred alternatives (we allow randomized outcomes). We first show that when each issue has a single price that is common to all agents, market equilibria can be arbitrarily bad. 

This negative result motivates a different approach. We present a novel technique called \emph{pairwise issue expansion}, which transforms any public decision-making instance into an equivalent Fisher market, the simplest type of private goods market. This is done by expanding each issue into many goods: one for each pair of agents who disagree on that issue. We show that the equilibrium prices in the constructed Fisher market yield a \emph{pairwise pricing equilibrium} in the original public decision-making problem which maximizes Nash welfare. More broadly, pairwise issue expansion uncovers a powerful connection between the public decision-making and private goods settings; this immediately yields several interesting results about public decisions markets, and furthers the hope that we will be able to find a simple iterative voting protocol that leads to near-optimum decisions.
\end{abstract}

%

\section{Introduction}\label{sec:intro}

Fair and transparent public decision-making is a key element of a democratic society, but many public decisions are made by government officials behind closed doors. In this paper, we investigate mechanisms for large-scale public decision-making where citizens directly vote on a set of issues at the same time, focusing on the case where each issue has exactly two alternatives. In particular, we examine connections to \emph{private goods} allocation (i.e., standard resource allocation). One can think of each issue that is under consideration as a ``good", and public decision-making as ``allocating" the good to one of the alternatives. We allow randomized outcomes, where the outcome can put nonzero probability on multiple alternatives: this is analogous to \emph{divisible} private goods, where a good can be split among multiple agents\footnote{An alternative interpretation is that the issues themselves are divisible: for example, in the case of a city choosing how much money to a particular project, any amount of money is a valid outcome.}. The fundamental difference is that in private goods allocation, each agent's utility depends only on the bundle of goods she receives; in public decision-making, the group makes a single decision that affects all agents. 

Market economies are one of the longest-studied areas in the distributions of private goods. The simplest market model is that of a \emph{Fisher market} (see~\cite{brainard_how_2005} for a modern exposition), which consists of a set of available goods and a set of agents with money they wish to spend. It is typically assumed that the agents have no value for the money itself, and wish to spend their entire budget to acquire goods that they desire. Each good has a \emph{price} which determines how much money an agent must spend to purchase one unit of the good. The goods are divisible, meaning that an agent can purchase any fraction of a good. A \emph{market equilibrium} assigns a price to each good such that every agent purchases their favorite bundle that is affordable under the prices, and that the demand meets the supply. Under some mild conditions (elaborated on in Section~\ref{sec:model}), a market equilibrium always exists~\cite{arrow_existence_1954}.

\subsection{Our contribution}

We consider adaptations of markets to the public decision-making setting. Many democracy theorists believe that it is unethical (eg. see~\cite{satz_why_2012}) and many democratic countries stipulate that it is illegal to allow citizens to purchase political influence with actual money. Instead, we think of each agent being endowed with the same amount of ``artificial currency" that is useful only for voting on these issues; thus our approach to public decision markets is consistent with the spirit of ``one person one vote". Prices are assigned to issues, and agents can use their artificial currency to ``purchase" probability for their preferred alternatives on the issues they most value\footnote{This can also be thought of as a private goods market with externalities: each agent's utility depends not only on her own bundle, but also other agents' bundles.}.

Markets have the desirable property that each agent can choose how to allocate her money across goods, based on their relative values to her. In the context of large-scale public decision-making, this allows agents to express their relative weights for the different issues in a fine-grained way. This is in contrast to approaches like asking agents to rank the issues by importance, which are more limited in expressiveness. Markets have the additional property that the equilibria are ``supported" by prices: prices provide a sort of certificate of fairness, in that each agent can verify that she is spending her budget in the best way possible.

The simplest pricing model assigns a single price to each issue, and all agents are subject to the same set of prices. We refer to this as ``per-issue pricing", or just ``issue pricing". In the private goods setting, per-good pricing is sufficient to yield a market equilibrium with optimal \emph{Nash welfare}: the product of agent utilities\footnote{The concept of Nash welfare is due to \cite{nash_bargaining_1950} and \cite{kaneko_nash_1979}.}. Unfortunately, we show in Section~\ref{sec:per-issue-lower} that issue pricing in the public decisions setting can result in very poor equilibria: the Nash welfare of the equilibrium may be a factor of $O(n)$ worse than optimal, where $n$ is the number of agents. The same instance shows that the utilitarian welfare (the sum of agent utilities) and egalitarian welfare (the minimum agent utility) may both be a factor of $O(n)$ worse than optimal as well.

\subsubsection{Pairwise issue expansion.}

This negative result motivates a more complex market model. Our main contribution is a reduction which transforms any public decision-making instance into a private goods Fisher market instance that is ``equivalent" in a strong sense. For each issue, we construct a good for each pair of agents who disagree on that issue. The outcome on that issue can be thought of as the result of pairwise negotiations between each pair of agents who disagree. We refer to this reduction as \emph{pairwise issue expansion}. The equilibrium prices of the constructed Fisher market yield a ``pairwise pricing equilibrium" in the original public decisions instance. We show that the resulting pairwise pricing equilibrium maximizes Nash welfare in the public decisions instance.

Furthermore, pairwise issue expansion allows us to directly import results for Fisher markets to the public decisions setting. If the utilities in the public decisions instance are in class $\calh$ (say, linear utilities), the utilities in the constructed Fisher market will be nested $\calh$-Leontief (for example, nested linear-Leontief)\footnote{These utility classes will be defined and discussed later.}. This means that any result which works for Fisher markets with nested $\calh$-Leontief utilities can be imported to public decisions instances with utilities in class $\calh$. The main Fisher market results we consider are: (1) a strongly polynomial-time algorithm for finding a Fisher market equilibrium with two agents and any utility functions~\cite{chakrabarty_2006_rationality}, (2) a strongly polynomial-time algorithm for finding a Fisher market equilibrium for Leontief utilities with weights in $\{0,1\}$ \cite{garg_2005_primal}, (3) a polynomial-time algorithm for a Fisher market with Leontief utilities which yields a $O(\log n)$ approximation simultaneously for all canonical welfare functions (i.e. Nash welfare, utilitarian welfare, egalitarian welfare, etc)~\cite{goel_2014_price}, and (4) a discrete-time \tat process for finding the Fisher market equilibrium for nested CES-Leontief utilities that converges in polynomial-time~\cite{avigdor-elgrabli_convergence_2014}. Thus pairwise issue expansion yields the following results for the public decision-making setting:

\begin{enumerate}

\item A strongly polynomial-time algorithm for finding a public decisions market equilibrium with two agents and any utility functions.

\item A strongly polynomial-time algorithm for finding a public decisions market equilibrium for Leontief utilities with weights in $\{0,1\}$.

\item A polynomial-time algorithm for a public decisions instance with Leontief utilities which yields a $O(\log n)$ approximation simultaneously for all canonical welfare functions (i.e. Nash welfare, utilitarian welfare, egalitarian welfare, etc).

\item A discrete-time \tat process for finding a public decisions market equilibrium for CES utilities that converges in polynomial-time.

\end{enumerate}

These Fisher market results yield the analogous results for public decisions instances for two agents with any utilities, Leontief utilities with weights in $\{0,1\}$\footnote{A nested Leontief-Leontief function is still a Leontief function. Incidentally, this also implies that for a public decision making problem where agent utilities are Leontief, we get a Fisher market which has exactly the same form, i.e. with Leontief utilities.}, any Leontief utilities, and CES utilities, respectively. We also discuss public decisions \tat in more depth, and show how our reduction can be used to implement a \tat process where agents only interact with the public decisions instance, and never see the constructed Fisher market.

%

More broadly, our work uncovers a powerful connection between private goods allocation and public decision-making. We hope that pairwise issue expansion will have applications in future work as well. One particularly promising direction is to design an iterative local voting scheme akin to prediction markets~(see \cite{chen_designing_2010}), where agents (or pairs of agents) arrive sequentially and move the current decision vector in their preferred direction subject to offered issue prices. Our proof of the existence of a simple \tat for public decision markets offers hope that such a scheme may be possible.

\subsection{Related work}\label{sec:related}

Equilibrium theory has a long history in economics~\cite{arrow_existence_1954, brainard_how_2005,  varian_equity_1974, walras_elements_1874}. More recently, this topic has garnered significant interest in the computer science community as well (see~\cite{vazirani_2007_combinatorial} for an algorithmic introduction).

\paragraph{Foley's work on Lindahl Equilibria} The market concept most directly relevant to our public decision markets is that of Lindahl equilibria, developed by Foley~\cite{foley_lindahls_1970}, who showed that personalized prices (i.e., each agent may be assigned a different price for each good with no restrictions) can support any Pareto optimal solution in the context of public goods\footnote{In public goods, all agents have nonnegative utility for every good, and the question is how to allocate their money between the goods. In contrast, in public decision-making, agents have opposing preferred alternatives and are in direct competition on each issue. With a careful modification, Foley's work does carry over to the public decision-making setting.}. Our work can be thought of as improving upon Foley's work to get much stronger properties for the special case of public decision-making. We obtain these stronger properties using a more sophisticated reduction, one which is in fact weaker in the sense that there is a correspondence between the public decisions market and the private market \emph{only at equilibrium}. Our reduction explicitly relies on the fact that agents are in opposition on each issue in the public decisions setting, which is not the case in the public goods setting.

The stronger properties we obtain are as follows. First, Foley's work~\cite{foley_lindahls_1970} allows arbitrary personalized prices, whereas we only require pairwise prices: for each issue, there is a price for each pair of agents who disagree on that issue. Our Fisher market can be thought of as negotiating independently with each person that disagrees with you through a normal market; we are not aware of any such simple interpretation that follows from Foley's very general work. Second, in our private goods reduction, a feasible public goods decision (where each agent shares the same societal decision) emerges naturally: we leverage properties of nested-Leontief utilities and the Nash welfare objective function to implicitly represent the feasibility constraints, which allows us to obtain the correspondence \emph{only at equilibrium}. In contrast, Foley adds cone constraints to a private goods market to explicitly enforce the feasibility constraints of the public decision-making problem; these constraints have no natural real-world analogue. Third, we reduce the public goods setting to a Fisher market, arguably the simplest possible and most-studied private goods market. Because of this, our reduction allows us to lift many Fisher market equilibrium results to the public decision-making setting. In particular, our reduction allows us to obtain a \tat for public decision-making, even though intermediate steps in the \tat are in a regime where the public and private markets are not in direct correspondence. It is unclear whether this is possible with Foley's construction. We discuss this in technical detail and elaborate on how our work relates to Lindahl equilibria in Appendix~\ref{sec:foley}.

\paragraph{T\^{a}tonnement} As mentioned above, one of our results is a \tat for public decision-making. A \tat is an iterative process which presents agents with a set of prices, asks what they would buy given those prices, and updates the price of each good based on the aggregate demand of each good. A \tat-like process for computing the maximum Nash welfare outcome in participatory budgeting (see e.g.~\cite{goel_knapsack_2016} for more on PB) was recently given by Fain et al.~\cite{fain_core_2016}. They showed that the maximum Nash welfare outcome can be computed by using a stochastic gradient descent style algorithm. Their algorithm iteratively elicits agents' demands using a process very similar to quadratic voting~\cite{lalley_nash_2014} and updates the current solution accordingly. While this is similar to a \tat, there is one crucial difference. A true \tat (such as the one we present) allows the agents to directly change the current point: the price of each good is updated by a fixed rule based on the aggregate demand of that good. In contrast, the algorithm of \cite{fain_core_2016} moves to a point that is {\em different} from the one elicited by the quadratic voting. Also, their result also holds only for linear utilities\footnote{This discussion is thanks to Kamesh Munagala via private correspondence.}.

A \tat, with a similar elicitation scheme, has been shown to work in practice in the participatory budgeting setting~\cite{garg_collaborative_2017,garg_iterative_2018}. In those works, a new budget is directly elicited from voters, and the mechanism works for $\ell_p$ normed cost functions. However, 
their mechanism finds a total welfare maximizing point as opposed to a Nash welfare maximizing outcome. One direction for future work is to adapt the \tat from this work into such an implementable mechanism with a large number of voters.

\paragraph{Inefficiencies of pricing schemes} Another relevant paper from the economics literature is \cite{danziger_graphic_1976}, which shows that per-good pricing can lead to inefficiency for public goods. Their examples do not provide bounds on how much worse per-good pricing can be: in contrast, we show that for linear utilities, issue pricing can be a factor of $O(n)$ worse than optimal. Also, we note that it is easy to adapt the examples in Section 3 to show that two other popular market-based approaches, namely Quadratic Voting \cite{lalley_nash_2014} and Trading Post Prices \cite{shapley_trade_1977}, also do not result in good equilibria with issue pricing in our public decision market setting.

\paragraph{Strategic agents} A key property in mechanism design is \emph{strategy-proofness} (or lack thereof). A mechanism is strategy-proof if even a selfish agent would always honestly report her preferences. Most relevant to us is~\cite{schummer_strategy_1996}, which shows that even for two agents with linear utilities over divisible goods, any mechanism which is both strategy-proof and Pareto optimal\footnote{An outcome is Pareto optimal if there is no way to improve the utility of any agent without hurting another agent.} is \emph{dictatorial}, meaning that one agent receives all of the resources\footnote{A similar result holds for indivisible goods~\cite{klaus_strategy_2002}.}. Our binary-issue public decisions setting generalizes the two agent private goods setting, and hence we immediately inherit this impossibility result: any mechanism which is both strategy-proof and Pareto optimal is dictatorial. A dictatorial solution is clearly not desirable, and we would like our outcomes to be Pareto optimal, so we assume throughout this paper that agents honestly report their preferences and do not address the issue of strategic behavior.  Other incentive compatibility results for implementation of general classes of social choice functions are discussed in~\cite{dasgupta_implementation_1979}. 

We note that several works extending Foley relax the assumption that agents report their preferences truthfully, by building voting games in which the equilibrium is one in which truthful reporting is incentive compatible for each agent~\cite{groves_optimal_1977,walker_simple_1981,kaneko_ratio_1977}. Most notably, Groves and Ledyard~\cite{groves_optimal_1977} construct an allocation-taxation scheme -- using message passing -- for a market with both private goods and public commodities, such that the equilibrium behavior results in a Pareto optimal solution. As in our work, however, their mechanism is still susceptible to a manipulation in which a consumer considers how future prices and the behavior of others are affected by her current decisions.


\paragraph{Other voting schemes, and one person one vote} Other works also propose alternate voting schemes for multiple issues. Storable Votes~\cite{casella_2005_storable} allows members of a committee to store votes for future meetings so as to spend their votes on issues that matter most to them; the work proves welfare gains in the case of two voters but does not give a principled way to balance the relative importance or cost of different issues, as we do here. In \cite{conitzer_2017_fair}, the authors study adaptations of private goods fairness notions (such as proportionality) to the public decision-making context when randomized outcomes are not allowed. In contrast, we allow fractional solutions (i.e. randomized outcomes) and exactly maximize Nash welfare. 

Such works, especially this one, may seem to violate the principle of One Person One Vote~\cite{gabis_one_1978,hayden_false_2003,briffault_who_1993,ansolabehere_end_2008,gersbach_why_2004,karabarbounis_one_2011}. In particular, as we propose individual prices, a given issue may ``cost'' more for one voter than for another. However, as discussed below, these prices are generated in a principled manner -- for each issue, there is a single price for each \textit{pair} of voters who disagree on the issue. Furthermore, we note that, at the onset, each voter is allocated the same ``budget'' through which to vote on issues.

\paragraph{Other works in market equilibria for public goods} Finally, we note that many more strands of literature, too many to detail here, discuss and extend the work of Foley~\cite{foley_lindahls_1970} and more generally the idea of equilibria for the funding of public goods. This work includes both stronger results in a more specific model, as this work, and computational hardness analysis of equilibria theory in general~\cite{richter_non-computability_1999,velupillai_algorithmic_2006}. To our knowledge, our public decision-making setting has not been studied as a special case of such public goods markets.

In~\cite{bowen_interpretation_1943} and~\cite{bergstrom_when_1981}, the authors ask what happens when the decision to fund a single public good is simply made through a majority vote; in particular, they study under what conditions of voter preferences for the public good and distribution of tax shares of each voter is the funding of the good Pareto optimal. They find that majority vote can fall short of optimal if income is asymmetrically distributed. In Section~\ref{sec:per-issue-lower}, we show that the case with multiple public decisions is far worse: a generalization of majority vote -- where each issue has a price -- leads to highly suboptimal outcomes, even when everyone is endowed with the same income. 

Another strand aims to study the implications of relationships between individuals. For example, in~\cite{ray_coalitional_2001}, agents are allowed to form coalitions through binding contracts, resulting in inefficiencies. In~\cite{elliott_network_2013},  there are people who can ``produce'' a given public good and those who ``benefit'' from that good. These relationships can be represented by a network in a certain way, and the Lindahl outcomes correspond to a solution characterized by the eigenvector centralities of each node. In this work, voters who agree on a given issue end up on the same side of a bipartite graph, resulting in them purchasing the same probability for that issue.

The assumptions and philosophical underpinnings of equilibria theory are also well-studied, as are applications to other fields.
Sen~\cite{sen_rational_1977} challenges the notion that people have consistent preferences that can be elicited. In particular, he posits that people have ``commitments'' to a particular social group of other people, whose welfare they care about. We note that the assumption of a utility function is nevertheless common, though it is important to be aware of the limitations of such behavioral abstractions. In~\cite{riley_justice_1989}, competitive equilibria is connected to the idea that in capitalism people are given fruits commensurate to their labor, as part of a discussion of the relationship of notions of justice and capitalism. General equilibrium theories are even connected to Structuralism within the philosophy of science~\cite{hands_structuralist_1985}. One prominent application of the economics of public goods has been to study environmental (non\nobreakdash-)cooperation~\cite{chander_core-theoretic_2006,maler_chapter_1985}. Our work extends such applications by connecting market equilibria ideas to voting on different issues in a fair and efficient way, as discussed above.
\\\\The rest of the paper is organized as follows. Section~\ref{sec:model} formally defines the models of private goods allocation, public decision-making, and Fisher markets. Section~\ref{sec:per-issue-lower} shows that issue pricing can result in (very) poor equilibria for public decisions markets. Section~\ref{sec:reduction} presents the concept of pairwise issue expansion, and shows how this can be used to obtain optimal equilibria, as well as other properties. Section~\ref{sec:applications} gives examples of Fisher market results that we can import to the public decisions setting using our reduction. Section~\ref{sec:tat} focuses on a particular such result: \tat. Finally, Appendix~\ref{sec:foley} discusses the connection to~\cite{foley_lindahls_1970} and Lindahl equilibria in more depth. Proofs are also deferred to the appendix. 

\section{Model}\label{sec:model}

We first introduce general notation that applies to both private goods allocation and public decision-making. As much as possible, we intentionally use the same notation for the private and public settings, as one of our primary contributions is to highlight the connections between these. We then discuss our assumptions on utility functions. Finally we discuss aspects specific to private goods and specific to public decision-making.

Let $[k]$ denote the set $\{1,2 , \dots, k\}$. A \emph{private goods instance} consists of a set of agents $N = [n]$ and a set of goods $M = [m]$; a \emph{public decisions instance} consists of a set of agents $N = [n]$ and a set of issues $M = [m]$. We will typically use $i$ and $k$ to denote agents, and $j$ and $\ell$ to denote goods/issues. We assume that issues are binary, meaning that each issue $j$ has two alternatives: 0 and 1. Each agent $i \in N$ has a preferred alternative for each issue $j$, denoted by $a_{ij}$, which they truthfully report.

We assume that goods/issues are divisible, meaning that a single good/issue can split among multiple agents. In a public decision instance, divisibility can be interpreted as randomization over alternatives. An outcome of a private goods instance is an \emph{allocation} $\x \in [0,1]^{m\times n}$, where $x_i \in [0,1]^m$ is the \emph{bundle} given to agent $i$, and $x_{ij} \in [0,1]$ is the fraction of good $j$ given to agent $i$. An allocation cannot allocate more than the available supply\footnote{Although the entire supply is typically allocated, it is standard in the private goods literature to allow for outcomes where this does not occur, i.e. $\sum_{i \in N} x_{ij} < 1$. This will be discussed in Section~\ref{sec:model-market}.}: $\x$ is valid only if $\sum_{i \in N} x_{ij} \leq 1$ for all $j\in M$. The outcome of a public decisions instance is denoted by $\z = (z^1, \dots, z^m) \in [0,1]^{m\times 2}$, where $z^j = (z^{j,0}, z^{j,1}) \in [0,1]^2$, and $z^{j,a} \in [0,1]$ is the probability put on alternative $a$ for issue $j$. An outcome $\z$ is valid only if $\sum_{a \in \{0,1\}} z^{j,a} \leq 1$ for all $j\in M$.

 
\subsection{Utility functions}
\label{sec:utilityfuncs}
In a private goods instance, we use $u_i(\x) \in \mathbb{R}$ to denote $i$'s utility for allocation $\x$; in a public decisions instance,  we use $u_i(\z) \in \bbR$ to denote $i$'s utility for outcome $\z$. In a private goods instance, it is assumed that an agent's utility depends on only the bundle she receives: $u_i(\x) = u_i(x_i)$. In a public decisions instance, agents do not receive separate bundles: instead, the group makes a single decision that affects all agents. We will assume that agents only have utility for their preferred alternative: this will let us standardize notation as follows. For a public decisions outcome $\z$, let $x_{ij}(\z) = z^{j, a_{ij}}$ for all  $i\in N$ and $j \in M$ (we will typically write $x_{ij}(\z) = x_{ij}$ for brevity). Then we can define agent $i$'s \emph{public bundle} as $x_i =(x_{i1}, \dots, x_{im})$. An agent's public bundle represents the fraction of the public decision allocated to her preferred alternative, and so we have $u_i(\z) = u_i(x_i)$ in a public decisions instance as well. 

Throughout the paper, we make the following standard assumptions on each agent's utility function $u_i$:

\begin{enumerate}
	\item Continuous: $u_i: [0,1]^{m} \to \mathbb{R}_{\geq 0}$ is a continuous function.
	\item Normalized: $u_i(0,0,...0) = 0$.
		\item Non-constant: There exists a bundle $x_i$ where $u_i(x_i) > 0$.
	\item Monotone: For any bundles $x_i$ and $x'_i$ where $x_{ij} \geq x'_{ij}$ for all $j$, $u_i(x_i) \geq u_i(x'_i)$.
	\item Concave: For any bundles $x_i$ and $x'_i$ and constant $\lambda \in [0,1]$, we have $u_i(\lambda x_i + (1 - \lambda) x'_i) \geq \lambda u_i( x_i) + (1-\lambda)u_i( x'_i)$.
	\item Homogeneous of degree 1: For any bundles $x_i$ and $x'_i$ and constant $\lambda \geq 0$ where $x_{ij} = \lambda x'_{ij}$ for all $j$, $u_i(x_i) = \lambda u_i(x'_i)$.
\end{enumerate}

The first five are standard assumptions in the market literature. The last is less ubiquitous, but still common: in particular, the vast majority of the popular subclasses of utility functions satisfy this assumption. For example, it is often assumed in real-world applications that utility functions are \emph{linear}, meaning that
\[
u_i(x_i) = \sum_{j \in M} w_{ij} x_{ij}
\]
where $w_{ij} \geq 0$ is the weight agent $i$ has for good $j$. Another important class is \emph{Leontief} functions, where
\[
u_i(x_i) = \min_{j \in M: w_{ij} \ne 0}\ \frac{x_{ij}}{w_{ij}}
\]
Linear utilities imply that goods are independent, whereas Leontief utilities represent perfect complements: goods that only have value in combination. For Leontief utilities, $w_{ij}$ is the relative proportion agent $i$ needs of good $j$.

Both Linear and Leontief utilities are generalized by the class of \emph{constant elasticity of substitution} (CES) utilities, where $u_i(x_i) = \Big(\sum\limits_{j \in M} w_{ij}^{\rho} x_{ij}^{\rho}\Big)^{1/\rho}$ for some constant $\rho \in (-\infty, 0) \cup (0, 1]$. Linear utilities are obtained by setting $\rho = 1$, and taking the limit as $\rho$ approaches $-\infty$ yields Leontief utilities. Taking the limit as $\rho$ approaches $0$ gives Cobb-Douglas utility functions, which have the form $u_i(x_i) = \Big(\prod\limits_{j \in M} x_{ij}^{w_{ij}}\Big)^{1/\sum_{j \in M} w_{ij}}$\footnote{The weights $w_{ij}$ have different interpretations for Leontief utilities vs other CES utilities. For example, if there is only a single good, the CES utility form reduces to $w_{i1} x_{i1}$ and the Leontief utility form reduces to $x_{i1}/w_{i1}$. When we say that taking the limit as $\rho \to - \infty$ yields Leontief utilities, we mean that we obtain the form of Leontief utilities (i.e., a minimization over all the goods).}.

While many of our results hold for any utility functions satisfying our six assumptions, some hold only for particular subclasses; we will make it clear when this is the case.

\subsection{Private goods \& Fisher markets}
\label{sec:model-private}
\label{sec:model-market}

In this work, we primarily consider private goods instances that are Fisher markets. A Fisher market~\cite{brainard_how_2005} is a private goods instance $(N, M)$ where each agent $i$ also has a \emph{budget} $B_i \geq 0$. Each agent's budget can be interpreted as her relative importance. We would typically expect all agents to have the same importance, especially in public decision-making, but allow for the possibility of different budgets for completeness.

For prices $p = (p_1, \dots, p_m) \in \mathbb{R}_{\geq 0}^m$, bundle $x_i$ is \emph{affordable} for agent $i$ if $x_i \cdot p = \sum_{j\in M} x_{ij}p_j \leq B_i$. agent $i$'s \emph{demand set} for prices $p$ is
\[
D_i(p) = \argmax\limits_{x_i \in \bbR^m_{\geq 0}:\ x_i \cdot p \leq B_i} u_i(x_i)
\]
i.e., the set of her favorite affordable bundles. A \emph{market equilibrium} (ME) $(\x, p)$ is an allocation $\x$ and prices $p$ where
\begin{enumerate}
	\item Each agent receives a bundle in her demand set: $x_i \in D_i(p)$.\label{condition:demand}
	\item The market clears: for all $j$, $\sum\limits_{i \in N} x_{ij} \leq 1$. Also, if $p_j > 0$, then $\sum\limits_{i \in N} x_{ij} = 1$.\label{condition:sold-completely}
\end{enumerate}
The most natural case is when all agents have the same budget, in which case the ME is also called the \emph{competitive equilibrium from equal incomes}~\cite{varian_equity_1974}.

Condition~\ref{condition:sold-completely} states that the demand never exceeds the supply, and that any good whose supply is not fully exhausted must have price zero. This implies that agents have no utility for the leftover goods: otherwise they would simply buy more with no additional cost. Note that agents can demand more of a good than the available supply if the cost is less than their budget. It is the role of prices at equilibrium to ensure that demand does not exceed supply. 

Under the first five assumptions on utility functions described in Section~\ref{sec:utilityfuncs}, a market equilibrium is guaranteed to exist for any Fisher market instance~\cite{arrow_existence_1954}. With the addition of the sixth assumption (homogeneity of degree 1), the equilibrium allocations are exactly the allocations maximizing the \emph{Nash welfare}\footnote{This is technically the ``budget-weighted" Nash welfare, but we will omit ``budget-weighted" throughout the paper.}:
\[
NW(\x) = \Big(\prod_{i \in N} u_i(x_i)^{B_i} \Big)^{1/\B}
\]
where  $\B = \sum_{i \in N} B_i$. Maximum Nash welfare allocations can be computed in polynomial time by the celebrated Eisenberg-Gale (EG) convex program~\cite{eisenberg_aggregation_1961, eisenberg_consensus_1959}\footnote{This correspondence still holds under slightly weaker assumptions that our six assumptions~\cite{jain_eisenberggale_2010}.}. The Nash welfare has been lauded as a compromise between fairness and efficiency, and it will be the primary objective function we seek to maximize.


\subsection{Public decisions}\label{sec:model-public}

As in the private markets case, the maximum Nash welfare outcome can be found via a convex program:
\begin{align}
\max\limits_{\z \in [0,1]^{m\times2}} \Big(\prod_{i \in N} u_i(\z)^{B_i} \Big)^{1/\B} \quad 
s.t. \quad z^{j,0} + z^{j,1}\leq 1\ \ \forall j \in M  \label{eqn:egpdm}
\end{align}
 
 The solution to this convex program can be found in polynomial time. This program is very different than the EG program; however, we will show via our reduction that these programs become identical under a transformation of utility functions and issue space.

Furthermore, even without any knowledge of utility functions, a $1/2$ approximation of this program emerges. Because issues are binary, we can very easily guarantee each agent half of her maximum possible utility simply by putting equal probability on each alternative, i.e., $z^{j,0} = z^{j,1} = 1/2$. It follows from concavity and $u_i(0,0,...,0) = 0$ that this also achieves $1/2$ of the maximum possible Nash welfare.

\begin{proposition}\label{prop:half}
Let $\Gamma$ be a public decisions instance $(N,M)$ with agent budgets $B = (B_1...B_n)$, and let $\z$ be the outcome where $z^{j,0} = z^{j,1} = 1/2$ for all $j \in M$. Then $\mfrac{\max\limits_{\mathbf{z'}} NW(\mathbf{z'})}{NW(\z)} \leq 2$\footnote{Whenever we maximize over outcomes of a public decisions instance, i.e., $\max_{\z} NW(\z)$, we implicitly assume that only valid outcomes are considered, meaning that $z^{j,0} + z^{j,1} \leq 1$ for all $j\in M$. The same is true when we maximize over outcomes of a private goods instance, and we adopt these conventions throughout the paper.}.
\end{proposition}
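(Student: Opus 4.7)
The plan is to exploit the symmetry of the uniform outcome together with the concavity/normalization assumptions on the utility functions. The key observation is that under $\z$ with $z^{j,0} = z^{j,1} = 1/2$, each agent's public bundle $x_i(\z)$ is exactly $(1/2,1/2,\ldots,1/2)$, regardless of her preferences $a_{ij}$, because on every issue the probability on her preferred alternative is $1/2$.

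First I would lower-bound $u_i(\z)$ by $\tfrac{1}{2} u_i(1,1,\ldots,1)$. This follows from concavity and the normalization $u_i(0,\ldots,0) = 0$, since
\[
u_i(\tfrac{1}{2},\ldots,\tfrac{1}{2}) = u_i\bigl(\tfrac{1}{2}(1,\ldots,1) + \tfrac{1}{2}(0,\ldots,0)\bigr) \geq \tfrac{1}{2} u_i(1,\ldots,1) + \tfrac{1}{2} u_i(0,\ldots,0) = \tfrac{1}{2} u_i(1,\ldots,1).
\]
(Homogeneity of degree 1 gives the same bound with equality, but concavity and normalization suffice.)

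Next I would upper-bound $u_i(\z')$ for any valid outcome $\z'$ by the same quantity $u_i(1,1,\ldots,1)$. For any valid $\z'$, the public bundle satisfies $x_{ij}(\z') = z'^{j,a_{ij}} \leq 1$ componentwise, so monotonicity yields $u_i(\z') \leq u_i(1,1,\ldots,1)$. Combining the two inequalities gives $u_i(\z') \leq 2\, u_i(\z)$ for every agent $i$ and every valid $\z'$.

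Finally I would lift this per-agent bound to Nash welfare. Since $u_i(\z') \leq 2\,u_i(\z)$ for all $i$, taking the budget-weighted geometric mean gives
\[
NW(\z') = \Bigl(\prod_{i \in N} u_i(\z')^{B_i}\Bigr)^{1/\B} \leq \Bigl(\prod_{i \in N} (2\, u_i(\z))^{B_i}\Bigr)^{1/\B} = 2 \cdot NW(\z),
\]
since $\sum_i B_i = \B$. Taking the maximum over valid $\z'$ yields the claim. There is no real obstacle here; the only thing to be careful about is that the bound $u_i(\z') \leq u_i(1,\ldots,1)$ is genuinely driven by validity ($z'^{j,0} + z'^{j,1} \leq 1$ implies $z'^{j,a_{ij}} \leq 1$) plus monotonicity, and that the exponents in the weighted Nash welfare aggregate correctly to give the factor $2$ rather than something depending on $n$ or $\B$.
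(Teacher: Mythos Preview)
Your argument is correct and is exactly the one the paper sketches in the text preceding the proposition (``It follows from concavity and $u_i(0,0,\dots,0)=0$ that this also achieves $1/2$ of the maximum possible Nash welfare''); the paper does not give a more detailed proof than this.
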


%

In light of this, we would expect any reasonable mechanism for public decision-making to do no worse than this (in terms of Nash welfare), and hopefully do substantially better. Unfortunately, we show in the next section that the natural adaption of Fisher markets to the public decisions setting does no better than this for several important classes of utility functions. Even worse, in the case of linear utilities -- the most important class of utilities in practice -- the Nash welfare can be a factor of $O(n)$ worse than optimal.

\section{Inefficiency of public decisions markets with issue pricing}\label{sec:per-issue-lower}

In a Fisher market, each good is assigned a single price which is common to all agents: thus all agents are treated the same, which is desirable for fairness. This section shows that in the public decisions setting, setting a single price for each issue (issue pricing) can result in very poor equilibria. Although we primarily consider Nash welfare in this paper, the same family of instances will show that the utilitarian welfare (sum of agent utilities) and egalitarian welfare (the minimum agent utility) can also be much worse than optimal.

A \emph{public decisions market} (PDM) consists of a public decisions instance $(N, M)$ along with agent budgets $B = (B_1\dots B_n)$. This definition is independent of the pricing scheme: we  use the term ``PDM" to describe all notions of markets for the public decisions setting. This section uses the following scheme: each issue has a price, each agent uses her budget to purchase probability for her preferred alternatives, and the total probability placed on an alternative is the sum over agents of the probability purchased for that alternative. 

Given per-issue prices $p \in \mathbb{R}_{\geq 0}^m$, a \emph{private bundle} $y_i \in\mathbb{R}_{\geq 0}^m$ is affordable if $y_i \cdot p \leq B_i$. Throughout the paper, we will use $y_i$ to refer to $i$'s private bundle, and $x_i$ to refer to $i$'s public bundle. This distinction only matters in the public decisions setting: we use $y_i$ and $x_i$ interchangeably in the private goods setting.

In this section, for private bundles $\y = (y_1...y_n)$, the corresponding outcome $\z = (z^1...z^m) \in [0,1]^{m\times 2}$ is
\[
z^{j,a} = \sum\limits_{i \in N:\ a_{ij} = a} y_{ij}
\]
The above definition of $\z$ as a function of $\y$ is specific to the issue pricing scheme. The different pricing scheme discussed in Section~\ref{sec:reduction} will define $\z$ differently.

In a Fisher market, an agent's demand set contains the bundles which maximize her utility subject to being affordable. In a PDM with issue pricing, an agent's utility depends not only on her own bundle, but also on other agent's bundles. Thus if we want to define an agent's demand set as the bundles which maximize her utility subject to being affordable, the demand set must depend not only on the prices, but also on the private bundles of other agents. With this in mind, we define the demand set by
\[
D_i(p, y_{-i}) = \argmax\limits_{y_i \in \bbR^m_{\geq 0}:\ y_i \cdot p \leq B_i} u_i(y_{-i}, y_i)
\]
where $y_{-i}$ is the list of private bundles other than that of agent $i$, and with slight abuse of types, $u_i(y_{-i}, y_i)$ is agent $i$'s utility for the outcome when $i$ purchases private bundle $y_i$ and the other agents purchase private bundles $y_{-i}$\footnote{As in the Fisher market setting, agents are allowed to demand more than 1 unit of an issue if the cost is less than their budget. The interpretation of demanding more than unit probability is difficult, but the prices will ensure that this never occurs in equilibrium.}.

An \emph{issue-pricing market equilibrium} (IME) $(\y, p)$ is a list of private bundles $\y$ and issue prices $p \in \mathbb{R}_{\geq 0}^m$ where
\begin{enumerate}
\item Each agent receives a private bundle in her demand set: $y_i \in D_i(p, y_{-i})$.
\item The market clears: for all $j$, $\sum\limits_{i \in N} y_{ij} \leq 1$. Also, if $p_j > 0$, then $\sum\limits_{i \in N} y_{ij} = 1$.\label{condition:sold-public}
\end{enumerate}
By the same reasoning as in the private setting, whenever an issue is not sold completely, agents have no utility for the unsold fraction of the issue\footnote{If some issue $j$ is not sold completely and so $z^{j,0} + z^{j,1} < 1$, one can think of the remaining $1 - z^{j,0} - z^{j,1}$ being allocated to some third option that has no value for any agent.}.

In general it is not known whether every PDM admits an IME. However, for several important utility classes, we give an instance where an IME does exist, but where every IME has poor Nash welfare.

\subsection{Linear utilities}\label{sec:lin-per-issue-lower}
We first show that for linear utilities, an IME always exists. Furthermore, the set of IMEs is identical to the set of private goods MEs that would be obtained if the input were instead treated as a Fisher market (i.e., if each agent's utility only depended on her private bundle).

To see this, we can write agent $i$'s utility for private bundles $\y$ as
\[
u_i(\y) = \sum\limits_{j \in M} w_{ij} \sum\limits_{\substack{k \in N:\\ a_{kj} = a_{ij}}} y_{kj} = 
\sum\limits_{j \in M} w_{ij} y_{ij} + \sum_{j \in M} w_{ij} \sum\limits_{\substack{k \in N\backslash\{i\}:\\ a_{kj} = a_{ij}}} y_{kj}
\]
Agent $i$ cannot affect the actions of other agents, and so has no control over the second term. Thus agent $i$ maximizes her utility by maximizing the first term, $\sum_{j \in M}w_{ij}y_{ij}$, which is exactly the utility function of an agent in a Fisher market. This is expressed formally by Theorem~\ref{thm:lin-per-issue-exist}, whose proof appears in Appendix~\ref{sec:omitted-proofs}.

\begin{restatable}{theorem}{linPerIssueExist}
\label{thm:lin-per-issue-exist}
For a PDM $(N,M, B)$ with linear utilities given by weights $w_{ij} \geq 0$, for every list of private bundles $\y$ and list of prices $p$, $(\y, p)$ is an IME if and only $(\y, p)$ is a ME for the Fisher market $(N,M,B)$ with linear utilities given by the same weights.
\end{restatable}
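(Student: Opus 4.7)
The plan is to observe that the IME conditions and the Fisher ME conditions coincide term-by-term once one exploits the key structural fact already highlighted in the body text: with linear utilities, agent $i$'s utility function decomposes additively into a term depending only on her own private bundle $y_i$ and a term depending only on $y_{-i}$.

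First I would fix prices $p$ and any profile $y_{-i}$ of other agents' bundles, and write
\[
u_i(y_{-i}, y_i) = \sum_{j \in M} w_{ij} y_{ij} \;+\; \sum_{j \in M} w_{ij} \sum_{\substack{k \in N\setminus\{i\}:\\ a_{kj}=a_{ij}}} y_{kj}.
\]
The second summand is a constant with respect to $y_i$, so the $\argmax$ over the budget set $\{y_i \in \bbR^m_{\geq 0} : y_i \cdot p \leq B_i\}$ is unchanged when we drop it. Thus
\[
D_i(p, y_{-i}) \;=\; \argmax_{y_i \in \bbR^m_{\geq 0}:\ y_i \cdot p \leq B_i} \sum_{j \in M} w_{ij} y_{ij},
\]
which is exactly agent $i$'s Fisher-market demand set $D_i(p)$ under the same weights and budget. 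This establishes that demand-set condition~1 of the IME is satisfied by $(\y,p)$ if and only if demand-set condition~1 of the Fisher ME is satisfied.

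Next I would observe that the market-clearing condition (condition 2) is defined identically in both settings: in both cases we require $\sum_{i} y_{ij} \leq 1$ for every issue/good $j$, with equality whenever $p_j > 0$. So condition~2 transfers between the two settings with no change. Combining the two equivalences gives the biconditional: $(\y,p)$ is an IME for the PDM if and only if $(\y,p)$ is an ME for the corresponding Fisher market.

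There is no real obstacle here; the only thing to be careful about is making the ``constant in $y_i$'' reasoning fully rigorous for the \emph{set-valued} $\argmax$ (not just the optimum value), which follows because adding a constant to the objective preserves the argmax set. Existence of an IME then comes for free from the known existence of a Fisher-market ME under our standing assumptions on the $u_i$ (cited from \cite{arrow_existence_1954}).
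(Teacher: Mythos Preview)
The proposal is correct and follows essentially the same approach as the paper: both exploit the additive decomposition of linear utilities into a term depending only on $y_i$ and a constant term in $y_{-i}$ to conclude $D_i(p,y_{-i})$ coincides with the Fisher demand set, and then observe that the market-clearing condition is literally identical in the two definitions. Your presentation is slightly more concise, but there is no substantive difference.
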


We now define the family of instances that exhibit poor equilibria in the issue pricing model. For any integer $n\geq 2$ and real number $w \geq 0$, let $\Phi(n, w)$ be the PDM defined by $n = m$, $w_{ii} = w$ for all $i$, $w_{ij} = 1$ for all $j \neq i$, $a_{ii} = 0$, $a_{ij} = 1$ for all $j\neq i$, and $B_i = 1$ for all $i$. In words, on each issue $i$, agent $i$ is alone on one side of the issue, and the other $n-1$ agents are on the opposite side. Each agent $i$ has weight $w$ for issue $i$, and weight $1$ for every other issue.

Our next theorem shows that for linear utilities, the Nash welfare of the IME can be a linear factor worse than optimal. This is especially dreadful in light of how easy it is to achieve half of the optimal Nash welfare via Proposition~\ref{prop:half}.

\begin{restatable}{theorem}{linPerIssueLower}
\label{thm:lin-per-issue-lower}
For any $\epsilon > 0$, $\Phi(n, 1+\epsilon)$ with linear utilities has a unique equilibrium $(\y, p)$, where
\[
\frac{\max\limits_{\mathbf{z'}} NW(\mathbf{z'})}{NW(\y)} \geq \frac{n-1}{1+\epsilon}
\]
\end{restatable}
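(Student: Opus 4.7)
The plan is to apply Theorem~\ref{thm:lin-per-issue-exist} to reduce the question to finding the equilibrium of the Fisher market with identical weights, then identify the ME explicitly, and finally compare its Nash welfare to that of an easy-to-construct ``flipped'' outcome.

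First I would propose the candidate equilibrium: uniform prices $p_j = 1$ for all $j$, and allocation $y_{ii} = 1$ with $y_{ij} = 0$ for $j \neq i$. To verify this is a Fisher market ME, I would compute agent $i$'s bang-per-buck, which is $w_{ii}/p_i = 1+\epsilon$ on good $i$ and $w_{ij}/p_j = 1$ on every other good; since $1+\epsilon > 1$, her demand set is the singleton consisting of one full unit of good $i$, on which she spends her entire budget $B_i = 1$. Market clearing holds because only agent $i$ demands good $i$, and she demands exactly one unit. For uniqueness, I would invoke the classical fact that equilibrium prices of linear Fisher markets are unique; since at $p = (1, \dots, 1)$ each agent has a strictly best good, the bundles are then uniquely pinned down as well.

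Next I would compute $NW(\y)$. The corresponding public decisions outcome has $z^{j,0} = y_{jj} = 1$ and $z^{j,1} = \sum_{i \neq j} y_{ij} = 0$ because agent $j$ is the only one purchasing good $j$ and her preferred alternative on issue $j$ is $0$. Agent $i$'s utility is therefore $w_{ii} \cdot 1 = 1+\epsilon$ on issue $i$ (where her preferred alternative is realized) and $w_{ij} \cdot 0 = 0$ on every other issue, giving $u_i(\y) = 1+\epsilon$ for all $i$ and hence $NW(\y) = 1+\epsilon$.

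Finally, to lower bound $\max_{\z'} NW(\z')$, I would exhibit the ``flipped'' outcome $\z'$ with $z'^{j,1} = 1$ and $z'^{j,0} = 0$ for every $j$. Here agent $i$ gets $w_{ii} \cdot 0 = 0$ on issue $i$ (her preferred alternative is not realized) and $w_{ij} \cdot 1 = 1$ on each of the other $n-1$ issues, so $u_i(\z') = n-1$ and $NW(\z') = n-1$. Taking the ratio yields the desired bound $(n-1)/(1+\epsilon)$. The only delicate part of the argument is the uniqueness claim, which rests on the standard uniqueness-of-prices result for linear Fisher markets; everything else is symmetry and direct computation.
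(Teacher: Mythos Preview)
Your proposal is correct and follows essentially the same approach as the paper: reduce to the Fisher market via Theorem~\ref{thm:lin-per-issue-exist}, identify the equilibrium $p_j=1$, $y_{ii}=1$ via bang-per-buck, compute $NW(\y)=1+\epsilon$, and compare to the flipped outcome with $NW=n-1$. The only difference is in establishing uniqueness: the paper gives a short self-contained argument (if some $p_\ell<1$ were minimal, agent $\ell$ would over-demand good $\ell$), whereas you appeal to the classical uniqueness of linear Fisher market prices; both are valid, and your route is slightly less elementary but perfectly standard.
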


The proof is in Appendix Section~\ref{sec:omitted-proofs}, but we give some intuition here. We observe that an agent's demand set in a Fisher market always maximizes her ``bang-per-buck" ratio: $w_{ij}/p_j$. To see this, suppose agent $i$ spends some money on a good that does not maximize her bang-per-buck ratio: she could instead spend the same amount of money to get strictly more utility by spending it on a good with maximum bang-per-buck. By Theorem~\ref{thm:lin-per-issue-exist}, this property carries over to the IME.

By symmetry, every issue will have the same price. Since $w_{ii} > w_{ij}$ for all $i$ and for all $j\neq i$, agent $i$'s bang-per-buck ratio is maximized only by good $i$. Thus each agent $i$ spends all of her budget on good $i$. This leads to the outcome where $y_{ii} = 1$ for all $i$, and $y_{ij} = 0$ for all $j \neq i$. Thus $z^{j,0} = 1$ for all $j \in M$. The utility of each agent for this outcome $1+\epsilon$, so the Nash welfare is also $1+\epsilon$. But in the outcome where $z^{j,1} = 1$, for all $j$, each agent has utility $n-1$, so the Nash welfare is $n-1$. This yields the desired bound of $(n-1)/(1+\epsilon)$.

If we used $w_{ii} = w_{ij} = 1$ for all $i,j$, the outcome where $z^{j,0} = 1$ would still be an IME. However, there would now be many more IMEs, including ones with optimal Nash welfare. By setting $w_{ii} = 1+\epsilon$ instead of $w_{ii} = 1$, we can make the outcome where $x_{ii} = 1$ for all $i$ the unique equilibrium. This same issue is not present for Cobb-Douglas and CES utilities with $\rho \in (-\infty, 0) \cup (0, 1)$, which we examine in the next section.

\subsection{Other utilities}\label{sec:other-per-issue-lower}
We briefly mention two results we have for other classes of utility functions. Using the same $\Phi$ construction, Theorems~\ref{thm:cd-per-issue-lower} and \ref{thm:ces-per-issue-lower} state that the Nash welfare of an IME cannot be much better than $1/2$ for Cobb-Douglas utilities and CES utilities, respectively. The formal proofs of Theorems~\ref{thm:cd-per-issue-lower} and \ref{thm:ces-per-issue-lower} appear in Appendix~\ref{sec:omitted-proofs-lower}. 



\begin{restatable}{theorem}{cdLower}
\label{thm:cd-per-issue-lower}
For any IME $(\y, p)$ of $\Phi(n, 1)$ with Cobb-Douglas utilities,
\[
\frac{\max\limits_{\mathbf{z'}} NW(\mathbf{z'})}{NW(\y)} \geq \frac{2 - 2/n}{(n-1)^{1/n}}
\]
\end{restatable}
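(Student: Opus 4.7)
I would first pin down the maximum Nash welfare of $\Phi(n,1)$ with Cobb-Douglas utilities. Every agent has weights $w_{ij}=1$, so $u_i(x_i)=\bigl(\prod_j x_{ij}\bigr)^{1/n}$. The instance is symmetric under simultaneously permuting agents and issues, and $NW$ is concave in $\z$, so by averaging over the symmetric orbit I may assume that at an optimum $z^{j,0}=\alpha$ and $z^{j,1}=1-\alpha$ are the same constants for all $j$. Each agent then has public bundle $(\alpha,1-\alpha,\dots,1-\alpha)$ and utility $\alpha^{1/n}(1-\alpha)^{(n-1)/n}$, so $NW$ equals the same. Differentiating in $\alpha$ gives $\alpha^*=1/n$, hence $\max_{\z}NW(\z)=(1/n)^{1/n}((n-1)/n)^{(n-1)/n}$.

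Next I analyse an arbitrary IME $(\y,p)$. Because $u_i$ is Cobb-Douglas, agent $i$ must achieve $x_{ij}>0$ for every $j$; in particular $y_{ii}>0$ (as $x_{ii}=y_{ii}$), and for every issue $j$ there is some $k\neq j$ with $y_{kj}>0$ (otherwise $z^{j,1}=0$ and every $i\neq j$ has utility $0$). Writing the KKT conditions for $i$'s problem $\max (1/n)\sum_j\log x_{ij}$ subject to $y_i\cdot p\le 1$, and using $\partial x_{ij}/\partial y_{ij}=1$, I get
\begin{align*}
 y_{ii}p_i &= r_i,\qquad x_{ij}p_j \ge r_i\text{ for }j\neq i\text{, with equality if }y_{ij}>0,
\end{align*}
where $r_i:=1/(n\lambda_i)>0$ is $i$'s inverse multiplier. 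The crucial observation is that for $i\neq j$, $x_{ij}=z^{j,1}$ does not depend on $i$. Hence $v_j:=z^{j,1}p_j$ is a number depending only on $j$, and the KKT inequality forces $v_j\ge r_i$ for all $i\neq j$, with equality for any $i$ that actually buys issue $j$.

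To conclude that the IME is unique up to irrelevant reshuffling, suppose for contradiction that the $r_i$'s are not all equal; let $r^*=\max_i r_i$ and $I^*=\{i:r_i=r^*\}\neq [n]$. For each $j$ there exists $k\neq j$ with $y_{kj}>0$, and then $v_j=r_k$; but also $v_j\ge r^*$ (taking any $i\in I^*$ with $i\neq j$, which exists because $I^*\neq[n]$ ensures $|I^*|\ge 1$ and I can always pick $j\notin I^*$ or use $|I^*|\ge 2$ as needed). Hence $r_k=v_j\ge r^*$, forcing $k\in I^*$. Therefore no agent $i\notin I^*$ ever has $y_{ij}>0$ for $j\neq i$, so her whole budget is spent on issue $i$: $r_i=y_{ii}p_i=1$. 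Plugging back into the budget of any $i'\in I^*$ gives $r^*+\sum_{j\neq i'}y_{i'j}p_j=1$ with $r^*>1$ and the remaining sum nonnegative, a contradiction. Hence $r_i\equiv r$, so $v_j=r$ for every $j$; combining $y_{jj}=r/p_j$ and $z^{j,1}=r/p_j$ with market clearing $y_{jj}+z^{j,1}=1$ yields $p_j=2r$ and $z^{j,0}=z^{j,1}=1/2$ for every $j$. Each agent's public bundle is $(1/2,\dots,1/2)$, so $u_i=1/2$ and $NW(\y)=1/2$.

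Finally I divide and simplify:
\begin{align*}
 \frac{\max_{\mathbf{z'}}NW(\mathbf{z'})}{NW(\y)}
 &= 2\cdot n^{-1/n}\cdot (n-1)^{(n-1)/n}\cdot n^{-(n-1)/n}
 = \frac{2(n-1)}{n}\cdot (n-1)^{-1/n}
 = \frac{2-2/n}{(n-1)^{1/n}}.
\end{align*}
The part I expect to be most delicate is the middle step: ruling out asymmetric (in particular, non-interior) IMEs in which some $y_{ij}=0$. The argument above turns on the issue-rather-than-agent-dependence of $v_j$ together with the budget constraint; this is the one place where the public-decision structure (as opposed to a pure Fisher market) must be exploited carefully.
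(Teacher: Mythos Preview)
Your approach differs from the paper's in two places. First, you compute the maximum Nash welfare exactly via symmetry and calculus; the paper simply exhibits the outcome $z^{j,0}=1/n$, $z^{j,1}=(n-1)/n$ (which happens to be your optimiser) and computes its Nash welfare as a lower bound on $\max_{\mathbf{z'}}NW(\mathbf{z'})$. Second, and more substantively, the paper does not argue directly via the multipliers $r_i$ and the quantities $v_j$. Instead it isolates a general lemma: if for every issue $j$ that agent $i$ spends money on one has $x_{ij}^{c}p_j=\min_\ell x_{i\ell}^{c}p_\ell$ (your KKT condition is the case $c=1$), then necessarily $x_{ij}=1/2$ for all $i,j$. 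The lemma's proof compares $x_{ij}^{c}p_j$ to the threshold $1/2^{c}$ and exploits that on each issue two agents sit on opposite sides; this single lemma is then reused verbatim for the CES lower bound with $c=1-\rho$, which is the main payoff of the paper's route.

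Your $r_i$/$v_j$ argument has a genuine gap in the case $|I^*|=1$. Your parenthetical ``I can always pick $j\notin I^*$ or use $|I^*|\ge 2$'' only establishes $v_j\ge r^*$ for $j\notin I^*$, so the conclusion ``no agent $i\notin I^*$ ever has $y_{ij}>0$ for $j\neq i$'' does not follow: when $I^*=\{j_0\}$, an agent $i\notin I^*$ may well have $y_{ij_0}>0$, and then her budget is split between issues $i$ and $j_0$, so you cannot conclude $r_i=1$. The patch is not hard but is not what you wrote: in that case one shows (i) for every $j\neq j_0$, agent $j_0$ is the \emph{only} off-diagonal buyer, hence $y_{j_0 j}p_j=v_j=r^*$ and agent $j_0$'s budget gives $n\,r^*=1$, so $r^*=1/n$; and (ii) any $i\neq j_0$ with $y_{ij_0}>0$ has $y_{ij_0}p_{j_0}\le z^{j_0,1}p_{j_0}=v_{j_0}=r_i$, so her budget $r_i+y_{ij_0}p_{j_0}\le 2r_i$ forces $r_i\ge 1/2$, contradicting $r_i<r^*=1/n$; while any $i\neq j_0$ with $y_{ij_0}=0$ has $r_i=1>r^*$, again a contradiction. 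With this case handled your argument goes through; the paper's threshold-based lemma avoids this case split altogether.
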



\begin{restatable}{theorem}{cesLower}
\label{thm:ces-per-issue-lower}
For any IME $(\y, p)$ of $\Phi(n, 1)$ with CES utilities for parameter $\rho \in (-\infty, 0) \cup (0, 1)$, 
\[
\frac{\max\limits_{\mathbf{z'}} NW(\mathbf{z'})}{NW(\y)} \geq 2(1 - 1/n)^{1/\rho}
\]
\end{restatable}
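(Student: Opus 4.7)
My plan is to exploit the full symmetry of $\Phi(n,1)$, where $w_{ij}=1$ for every $i,j$ and each agent has budget $1$. Because CES utilities with $\rho\in(-\infty,0)\cup(0,1)$ are strictly quasi-concave on a linear budget set, each agent's best response is single-valued; together with the invariance of $\Phi(n,1)$ under any agent permutation $\pi$ (which also swaps issue $i$ with issue $\pi(i)$), I would argue that every IME is symmetric, with a single common price $p$ and each agent $i$ purchasing some $\alpha$ on her lone side-$0$ issue $i$ and some $\beta$ on each of the other $n-1$ issues on which she is on side $1$.

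Next I would solve for $\alpha,\beta,p$ via first-order conditions. Under the symmetric ansatz, agent $i$'s public bundle is $x_{ii}=\alpha$ and, for $j\ne i$, $x_{ij}=\beta+(n-2)\beta=(n-1)\beta$ (her own purchase plus that of the $n-2$ other agents on side $1$). Since all prices equal $p$, the FOCs on $u_i=\bigl(x_{ii}^{\rho}+\sum_{j\ne i}x_{ij}^{\rho}\bigr)^{1/\rho}$ equate $x_{ij}^{\rho-1}$ across $j$, forcing $x_{ii}=x_{ij}$, i.e.\ $\alpha=(n-1)\beta$. Market clearing $\alpha+(n-1)\beta=1$ combined with agent $i$'s full budget usage then forces $p=1$, $\alpha=\tfrac12$, and $\beta=\tfrac1{2(n-1)}$. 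Each agent's public bundle is $(\tfrac12,\ldots,\tfrac12)$, so
\[
u_i(\y)=\bigl(n\cdot(\tfrac12)^{\rho}\bigr)^{1/\rho}=\tfrac12\,n^{1/\rho},\qquad NW(\y)=\tfrac12\,n^{1/\rho}.
\]

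With the IME pinned down, I would lower-bound $\max NW$ by an explicit comparison outcome. For $\rho\in(0,1)$, take $\z^{*}$ with $z^{*\,j,1}=1$ for every $j$. Then agent $i$'s public bundle is $(0,1,\ldots,1)$, so $u_i(\z^{*})=(n-1)^{1/\rho}$ (using $0^{\rho}=0$ for $\rho>0$) and $NW(\z^{*})=(n-1)^{1/\rho}$. Dividing gives
\[
\frac{\max_{\mathbf z'}NW(\mathbf z')}{NW(\y)}\;\ge\;\frac{NW(\z^{*})}{NW(\y)}\;=\;\frac{(n-1)^{1/\rho}}{n^{1/\rho}/2}\;=\;2\bigl(1-\tfrac1n\bigr)^{1/\rho},
\]
exactly matching the claim.

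The main obstacle is the $\rho<0$ case: the candidate $z^{*\,j,1}=1$ breaks down because $0^{\rho}=+\infty$ forces $u_i(\z^{*})=0$. To handle $\rho<0$ I would need a comparison outcome that keeps every coordinate of the public bundle strictly positive, e.g.\ a symmetric $z^{*\,j,0}=c$, $z^{*\,j,1}=1-c$, picking the value of $c$ that maximizes $g(c)=c^{\rho}+(n-1)(1-c)^{\rho}$ in the right direction (minimizing $g$ when $\rho<0$) and then checking that the resulting ratio $NW(\z^{*})/NW(\y)$ still dominates $2(1-1/n)^{1/\rho}$. This reduction to a one-variable calculation is the delicate step, since the trivial upper bound from $\z^{*}=(0,1,\ldots)$ is no longer available and one must control the interior optimizer carefully.
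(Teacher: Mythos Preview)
Your derivation of $x_{ij}=1/2$ via a symmetry argument has a genuine gap. Single-valued best responses together with permutation invariance of $\Phi(n,1)$ only tell you that permuting an IME yields another IME; they do not force any particular IME to be symmetric unless you also know the IME is unique, which you have not shown (and which is not automatic for Nash-type equilibria of symmetric games). The paper avoids this entirely: it proves that at \emph{any} IME, the first-order conditions force $x_{ij}^{1-\rho}p_j=\min_\ell x_{i\ell}^{1-\rho}p_\ell$ whenever agent $i$ spends on issue $j$, and then invokes a structural lemma (Lemma~\ref{lem:other-utilities-eq}, applied with $c=1-\rho$) that, using only market clearing and the combinatorics of $\Phi(n,1)$, squeezes out $p_j=1$ and $x_{ij}=1/2$ for all $i,j$. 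No symmetry or uniqueness is assumed. Once $x_{ij}=1/2$ is in hand, your computation $NW(\y)=n^{1/\rho}/2$ matches the paper exactly.

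On the comparison outcome, the paper simply takes $\mathbf{z'}$ with $x'_{ii}=0$ and $x'_{ij}=1$ for $j\neq i$ uniformly across all $\rho$, writes $\sum_j (x'_{ij})^\rho=\sum_{j\neq i}1=n-1$, and obtains $NW(\mathbf{z'})=(n-1)^{1/\rho}$, giving the stated ratio $2(1-1/n)^{1/\rho}$. So for $\rho\in(0,1)$ your argument and the paper's coincide verbatim. The obstacle you flag for $\rho<0$---that $0^\rho=+\infty$ drives $u_i(\mathbf{z'})$ to zero---is a legitimate technical objection, but the paper's proof does not address it: it drops the $j=i$ term without comment. Your instinct to pass to an interior comparison point $z^{j,0}=c\in(0,1)$ is the right repair, but there is no corresponding step in the paper to compare against; as written, the paper's calculation is only literally valid for $\rho\in(0,1)$.
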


As the number of agents approaches infinity, the bounds in Theorems~\ref{thm:cd-per-issue-lower} and \ref{thm:ces-per-issue-lower} approach $2$. This means that for those classes of utility functions, the issue pricing market model cannot be guaranteed to do better than simply picking the midpoint on every issue (Proposition~\ref{prop:half}). The situation is even worse for linear utilities, where the Nash welfare of an IME can be arbitrarily worse than the optimal Nash welfare. 

One may wonder why Cobb-Douglas and CES utilities with $\rho \in (-\infty, 0) \cup (0, 1)$ do not fail as badly as linear utilities on this family of instances. On a high level, the reason is that both Cobb-Douglas and CES utilities exhibit diminishing returns: the more one buys of a particular good, the less value it adds. This leads to agents splitting their money across multiple goods, regardless of their weights on the individual goods. As a result, small changes in agents' weights end up not affecting their purchases too much. In contrast, for linear utilities, an agent might spend her entire budget on a single good: in fact, if there is a unique good which maximizes her bang-per-buck, she must spend her entire budget on that good. This is exactly the property we use in our inefficiency example, where the fact the $w_{ii} = 1 + \epsilon > w_{ij}$ for $j \neq i$ causes agent $i$ to spend her entire budget on good $i$.

These negative results motivate a more complex market model, which we present in the next section.

\section{Pairwise issue expansion and pairwise pricing}\label{sec:reduction}

In this section, we describe a more complex model of a public decisions market, which relies on \emph{pairwise pricing}: for each issue, there will be a price for each pair of agents who disagree on that issue. We then present our main result: a reduction from any PDM to an equivalent Fisher market. This reduction, which we call \emph{pairwise issue expansion}, can be used to construct a pairwise pricing equilibrium that maximizes the Nash welfare.

The section is organized as follows. Section~\ref{sec:reduction-intro} introduces pairwise issue expansion and gives an informal argument for correctness. Section~\ref{sec:reduction-setup} gives some additional notation and setup, and states our theorems. The formal proofs of correctness are somewhat technical and appear only in the appendix. Finally, Section~\ref{sec:applications} discusses some Fisher market results that this reduction allows us to immediately lift to the public decisions setting.



\subsection{Pairwise issue expansion}\label{sec:reduction-intro}
For any PDM $\Gamma$, we construct a Fisher market $R(\Gamma)$ as follows. The set of agents $N = \{1...n\}$ and their budgets $B_1...B_n$ will be the same. Every issue $j \in M$ will become $O(n^2)$ goods in $R(\Gamma)$. Specifically, for every issue $j$, there will one good for each pair of agents who disagree on issue $j$. Let $R(M)$ be the set of goods in $R(\Gamma)$: then
\[
R(M) = \big\{(i,k,j)\ |\ j \in M,\ i,k \in N,\ a_{ij} \neq a_{kj}\big\}
\]
We we will refer to goods $(k,k',j)$ where $i \in \{k,k'\}$ as agent $i$'s ``pairwise goods". Note that $(i,k,j)$ and $(k,i,j)$ refer to the same good.

If $y_i$ is a bundle associated with $\Gamma$ (denoted $y_i \sim \Gamma$), then $y_i \in \mathbb{R}_{\geq 0}^{|M|}$. If $y_i$ is a bundle associated with $R(\Gamma)$ (denoted $y_i \sim R(\Gamma)$), then $y_i \in \mathbb{R}_{\geq 0}^{|R(M)|}$. 

We will use $j$ to represent issues in $M$ and $\ell$ to represent goods in $R(M)$. We also use $y_{i(ikj)}$ to denote $y_{i\ell}$ when $\ell = (i,k,j)$.

In order to purchase $\alpha$ units of issue $j$ in the PDM, agent $i$ will need to purchase at least $\alpha$ units of all of her pairwise goods for issue $j$. Formally, agent $i$'s utility for a bundle $y_i \in \mathbb{R}_{\geq 0}^{|R(M)|}$ is
\[
u_i\left( \min\limits_{\substack{k \in N:\\ a_{i1} \neq a_{k1}}} y_{i (ik1)},\
\min\limits_{\substack{k \in N:\\ a_{i2} \neq a_{k2}}} y_{i (ik2)},\
\dots\min\limits_{\substack{k \in N:\\ a_{im} \neq a_{km}}} y_{i (ikm)}  \right)
\]
Agent $i$'s utility is as if she purchased $\min\limits_{\substack{k \in N: a_{ij} \neq a_{kj}}} y_{i (ikj)}$ probability of each issue $j$ in the PDM $\Gamma$. For example, if agent $i$'s utility in $\Gamma$ is linear with weights $w_{ij}$, her utility in $R(\Gamma)$ would be
\[
\sum\limits_{j \in M} w_{ij}\Big( \min\limits_{\substack{k \in N:\\ a_{ij} \neq a_{kj}}} y_{i (ikj)}\Big)
\]
These utility functions are \emph{nested Leontief}; this will be discussed formally in Section~\ref{sec:reduction-setup}.

Figure~\ref{fig:reduction} gives a graphical representation of $R(\Gamma)$ for five agents and a single issue $j$, where $a_{1j} = a_{2j} = a_{3j} = 0$ and $a_{4j} = a_{5j} = 1$.  An edge from an agent to a good indicates that that agent desires that good. One key aspect of pairwise issue expansion is that on each issue $j$, each agent is in competition with everyone she disagrees with, and not in competition with anyone she agrees with.

\begin{figure}[h]
\centering
\resizebox{!}{1.5 in}{ 
\begin{tikzpicture}[->,>=stealth',shorten >=1pt,auto,node distance=4cm,
  thick,main node/.style={circle,fill=blue!20,minimum size=14mm,draw,font=\sffamily\LARGE}]
  \newcommand{\hgap}{4}
  \newcommand{\vgap}{2.5}
  \node[main node] (14) at (0,0) {$1,4$};
  \node[main node] (24) at (\hgap, 0) {$2,4$};
  \node[main node] (34) at (2*\hgap, 0) {$3,4$};
  \node[main node] (15) at (3*\hgap, 0) {$1,5$};
  \node[main node] (25) at (4*\hgap, 0) {$2,5$};
  \node[main node] (35) at (5*\hgap, 0) {$3,5$};
   \node [main node, fill=white] (p1) at (1.5*\hgap, \vgap) {$p1$};
   \node [main node, fill=white] (p2) at (2.5*\hgap, \vgap) {$p2$};  
   \node [main node, fill=white] (p3) at (3.5*\hgap, \vgap) {$p3$};  
   \node [main node, fill=white] (p4) at (1.8*\hgap, -\vgap) {$p4$};
   \node [main node, fill=white] (p5) at (3.2*\hgap, -\vgap) {$p5$};
   
  \path[every node/.style={font=\sffamily\normalsize}]
    (p1) edge node {} (14)
    (p1) edge node {} (15)    
    (p2) edge node {} (24)    
    (p2) edge node {} (25)
    (p3) edge node {} (34)        
    (p3) edge node {} (35)   
    (p4) edge node {} (14)     
    (p4) edge node {} (24)    
    (p4) edge node {} (34)    
    (p5) edge node {} (15)    
    (p5) edge node {} (25)    
    (p5) edge node {} (35)    
      ;
\end{tikzpicture}
}
\caption{A graphical representation of the constructed Fisher market $R(\Gamma)$ for five agents and a single issue $j$, where $a_{1j} = a_{2j} = a_{3j} = 0$ and $a_{4j} = a_{5j} = 1$.}\label{fig:reduction}
\end{figure}

We first argue informally for the correctness of the reduction. Agent $i$ will only ever spend money on her pairwise goods, because other goods do not affect her utility. Because of the nested Leontief structure of the utilities in $R(\Gamma)$, for a fixed issue $j$, agent $i$ will buy the same amount of each of her pairwise goods: buying a larger amount of one of the goods would not increase her utility (because it would not increase the minimum), so she would be wasting money. Thus for a fixed issue, agent $i$ buys the same amount of each of her pairwise goods (though this can differ across issues). 

So suppose that for each issue $j$, agent $i$ buys $\alpha_{ij}$ of each of her pairwise goods for that issue. If $R(\Gamma)$ is at equilibrium, every agent $k$ who disagrees with agent $i$ on issue $j$ can receive at most $1-\alpha_{ij}$ of good $(i,k,j)$, since the total supply of each good is $1$. As argued above, agent $k$ will never buy more than $1 - \alpha_{ij}$ of any of her pairwise goods on issue $j$, because of the nested Leontief structure. Thus every agent $k$ who disagrees with agent $i$ on issue $j$ will buy exactly $1- \alpha_{ij}$ of each of her pairwise goods for issue $j$. This leaves exactly $\alpha_{ij}$ for everyone who agrees with agent $i$ on issue $j$. Thus in equilibrium, everyone who agrees with agent $i$ buys $\alpha_{ij}$ of their pairwise goods on issue $j$, and everyone who disagrees with agent $j$ buys $1- \alpha_{ij}$ of their pairwise goods on issue $j$.

This means that when $R(\Gamma)$ is in equilibrium, whenever two agents agree on an issue, they buy the same amount of their pairwise goods for that issue, and whenever they disagree, the amounts they buy sum to 1. Let $\z$ be the outcome where $z^{j,a_{ij}} = \alpha_{ij}$ and $z^{j, 1-a_{ij}} = 1 - \alpha_{ij}$ for all $j \in M$. Then $\z$ is a valid outcome of the PDM. Also, because $R(\Gamma)$ is a Fisher market, an equilibrium price vector assigns a single price to each good: this yields a price for each pairwise disagreement on each issue. This leads to the pairwise pricing equilibrium notion, which $\z$ as defined above will satisfy.

Furthermore, we know that any Fisher market equilibrium maximizes Nash welfare. The agents will have the same utilities in both the PDM and the constructed Fisher market at equilibrium, so the Fisher market equilibrium will respond to a pairwise pricing equilibrium which maximizes Nash welfare in the PDM.

Finally, we mention that this reduction can be generalized to $d$-ary issues under the assumption that each agent has utility for at most one alternative per issue. Instead of one good for each pair of agents who disagree, there would be one good for each set of $d$ agents where each agent has a different preferred alternative, and a similar argument will hold.

\subsection{Additional setup and formal theorem statements}\label{sec:reduction-setup}

Some additional notation will be useful. We define relations $R$ and $\Rfrom$ which will map bundles and prices between $\Gamma$ and $R(\Gamma)$.

For a bundle $y_i \sim\Gamma$, we define a corresponding bundle $R(y_i) \sim R(\Gamma)$ by
\[R(y_i)_{(kk'j)} =
\begin{cases}
y_{ij}\ \text{if}\ i \in \{k,k'\}\\
0\ \text{if}\ i\not\in \{k, k'\}
\end{cases}\hspace{.1 in} \forall (k,k',j) \in R(M)
\]
where $R(y_i)_{(kk'j)}$ denotes the quantity of good $(k, k', j)$ in bundle $R(y_i)$. For a bundle $y_i \sim R(\Gamma)$, the corresponding bundle $\Rfrom(y_i) \sim \Gamma$ is defined by
\[
\Rfrom(y_i)_j = \min\limits_{\substack{k \in N:\\ a_{ij} \neq a_{kj}}} y_{i (ikj)}\hspace{.3 in} \forall j\in M
\]
where $\Rfrom(y_i)_j$ denotes the quantity of issue $j$ in bundle $\Rfrom(y_i)$. Also, for a list of private bundles $\y \sim \Gamma$, we use $R(\y)$ to refer to the list of private bundles in $R(\Gamma)$ where agent $i$'s bundle is $R(y_i)$. Similarly, for any $\y \sim R(\Gamma)$, $\Rfrom(\y)$ is a list of private bundles in $\Gamma$ where agent $i$'s bundle is $\Rfrom(y_i)$. 

It is important to note that while the equilibria of $\Gamma$ and $R(\Gamma)$ coincide, the correspondence is not always meaningful for non-equilibrium outcomes. In particular, not every $y_i \sim R(\Gamma)$ satisfies $y_i = R(\Rfrom(y_i))$: for example if $y_{i(kk'j)} > 0$ when $i \not\in \{k,k'\}$.

Let $u_i$ be agent $i$'s utility function in $\Gamma$. Then agent $i$'s utility function in $R(\Gamma)$ is given by
\[
u_i^R(y_i) = u_i(\Rfrom(y_i))
\]
This is equivalent to the definition of agent utilities given in Section~\ref{sec:reduction-intro}: simply subtitute the definition of $\Rfrom(y_i)$. Also note that for any $y_i \sim \Gamma$, we have $y_i = \Rfrom(R(y_i))$, and so $u_i(y_i) = u_i^R(R(y_i))$.

We would also like to relate prices in $\Gamma$ and $R(\Gamma)$. Since $R(\Gamma)$ is a Fisher market, any price vector $p$ associated with $R(\Gamma)$ (denoted $p \sim R(\Gamma)$) assigns a single price to each good $\ell \in R(M)$: $p \in \mathbb{R}_{\geq 0}^{|R(M)|}$. We will be considering per-person per-issue prices for the PDM $\Gamma$, so any set of prices $p$ associated with $\Gamma$ (denoted $p \sim \Gamma$) assigns one price to each person $i \in N$ for each issue $j \in M$: $p \in \mathbb{R}_{\geq 0}^{m\times n}$.

For a price vector $p \sim R(\Gamma)$, we define prices $\Rfrom(p) \sim \Gamma$ by
\[
\Rfrom(p)_{ij} = \sum\limits_{\substack{k \in N:\\ a_{ij} \neq a_{kj}}} p_{(ikj)}\hspace{.3 in} \forall i\in N, j\in M
\]
where $\Rfrom(p)_{ij}$ is the price of issue $j$ for agent $i$ in price vector $\Rfrom(p)$. In words, $\Rfrom(p)_{ij}$ is the sum of agent $i$'s pairwise prices for issue $j$. We will also use $\Rfrom(p)_i$ to denote the vector of agent $i$'s prices: $\Rfrom(p)_i = (\Rfrom(p)_{i1}...\Rfrom(p)_{im})$.


Before we stating our theorems, we should verify that the utilities in $R(\Gamma)$ satisfy the necessary requirements. If the utility functions in $\Gamma$ are in class $\calh$ ($\calh$ could be the set of linear utility functions, for example), the utility functions in $R(\Gamma)$ will be $\calh$-\emph{nested Leontief}.

\begin{definition}
For some agent $i$, let $f_{i1}, f_{i2}...f_{iL}$ be Leontief utility functions. Then a utility function $u_i$ is $\mathcal{H}$-nested-Leontief if there exists a utility function $h_i: \bbR^L_{\geq 0} \to \bbR_{\geq 0}$ such that $h_i \in \calh$, and
\[
u_i(y_i) = h_i\Big(f_{i1}(y_i), f_{i2}(y_i)\dots f_{iL}(y_i)\Big)
\]
for any bundle $y_i$.
\end{definition}

In our setting, $L = m$ for all agents, and for each $j \in M$, $f_{ij}(y_i) = \Rfrom(y_i)_j = \min\limits_{\substack{k \in N: a_{ij} \neq a_{kj}}} y_{i (ikj)}$. Then for each agent $i$, $u_i^R(y_i) = u_i(\Rfrom(y_i)) = u_i\big(f_{i1}(y_i), f_{i2}(y_i)...f_{iL}(y_i)\big)$.

The next lemma states that as long as $h_i$ and $f_{i1}, f_{i2}...f_{iL}$ satisfy our assumptions on utility functions, their composition does as well.

\begin{restatable}{lemma}{lemNestedUtilities}
\label{lem:nestedhomog}
	Suppose that the functions $h_i, f_{i1}, f_{i2}\dots f_{iL}$ are continuous, normalized, concave, homogeneous of degree 1, non-decreasing, and non-constant. Then $u_i = h_i(f_{i1}, f_{i2}\dots f_{iL})$ meets the same conditions.
\end{restatable}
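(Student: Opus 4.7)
The plan is to verify the six properties one at a time, checking each property of $u_i = h_i(f_{i1},\dots,f_{iL})$ by invoking the corresponding property (or sometimes several) of $h_i$ and the $f_{ij}$'s. Let $F(y_i) = (f_{i1}(y_i),\dots,f_{iL}(y_i))$, so $u_i(y_i) = h_i(F(y_i))$.

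First, continuity is immediate as a composition of continuous maps. Normalization follows by noting $f_{ij}(0)=0$ for each $j$, hence $u_i(0) = h_i(0,\dots,0) = 0$. Homogeneity is equally routine: for $\lambda \geq 0$, homogeneity of the $f_{ij}$'s gives $F(\lambda y_i) = \lambda F(y_i)$, and then homogeneity of $h_i$ yields $u_i(\lambda y_i) = h_i(\lambda F(y_i)) = \lambda h_i(F(y_i)) = \lambda u_i(y_i)$. Monotonicity is a two-step chain: if $y_i \geq y'_i$ coordinatewise, monotonicity of each $f_{ij}$ gives $F(y_i) \geq F(y'_i)$ coordinatewise, and then monotonicity of $h_i$ gives $u_i(y_i) \geq u_i(y'_i)$.

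The main step requiring a little care is concavity, and this is the one place where we need to combine two different properties of $h_i$. For $\lambda \in [0,1]$, concavity of each $f_{ij}$ gives
\[
F(\lambda y_i + (1-\lambda) y'_i) \geq \lambda F(y_i) + (1-\lambda) F(y'_i)
\]
coordinatewise. Applying monotonicity of $h_i$ to this inequality and then concavity of $h_i$ to the right-hand side gives
\[
u_i(\lambda y_i + (1-\lambda) y'_i) \geq h_i(\lambda F(y_i) + (1-\lambda) F(y'_i)) \geq \lambda u_i(y_i) + (1-\lambda) u_i(y'_i),
\]
which is exactly concavity of $u_i$. Note that monotonicity of $h_i$ is essential here; without it, the outer composition with $h_i$ could reverse the inequality produced by concavity of the inner $f_{ij}$'s.

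The subtlest property is non-constancy, which I expect to be the main obstacle. Non-constancy of $h_i$ only guarantees a point $t^*\in\bbR^L_{\geq 0}$ where $h_i(t^*)>0$, and non-constancy of each $f_{ij}$ only guarantees some input $y^{(j)}$ where $f_{ij}(y^{(j)})>0$; we need to combine these to produce a single input at which $u_i$ is positive. The plan is to take $y^* = \sum_{j=1}^L y^{(j)}$, so that monotonicity of each $f_{ij}$ (applied to $y^* \geq y^{(j)}$) gives $f_{ij}(y^*) > 0$ for every $j$. Setting $c = \min_j f_{ij}(y^*) > 0$ and $M = \max_k t^*_k$, homogeneity of each $f_{ij}$ lets us scale: $F((M/c)y^*) \geq (M,\dots,M) \geq t^*$ coordinatewise. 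Then monotonicity of $h_i$ gives $u_i((M/c)y^*) = h_i(F((M/c)y^*)) \geq h_i(t^*) > 0$, so $u_i$ is non-constant.
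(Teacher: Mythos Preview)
Your proof is correct. The paper states this lemma without proof (it is never restated in the appendix with an accompanying argument), presumably because the verification is routine; your property-by-property check is exactly the standard argument, and your handling of the two less obvious points---using monotonicity of $h_i$ to push through the concavity inequality, and combining the witnesses $y^{(j)}$ and scaling via homogeneity to establish non-constancy---is clean and complete.
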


We will claim that each market equilibrium in $R(\Gamma)$ corresponds to a \emph{pairwise-pricing market equilibrium} (PME) in $\Gamma$. The formal definition of a PME appears in Appendix~\ref{sec:pairwise-pricing}. Informally, it is a list of private bundles $\y$ and per-person per-issue prices $p \in \bbR_{\geq 0}^{m\times n}$ generated by pairwise issue expansion (i.e., $p = \Rfrom(p')$ for some $p' \sim R(\Gamma)$) such that
\begin{enumerate}
\item Every agent receives a private bundle in her demand set.
\item Whenever two agents agree on an issue, they purchase the same amount of that issue.
\item Whenever two agents disagree on an issue, they amounts of that issue that they purchase sum to 1.
\end{enumerate}

This is exactly the definition alluded to via the $\alpha_{ij}$ variables in the informal argument given in Section~\ref{sec:reduction-intro}. This leads to the following theorem:

\begin{restatable}{theorem}{thmReductionEq}
\label{thm:reduction-eq}
For an allocation $\y \sim R(\Gamma)$ and prices $p \sim R(\Gamma)$, $(\y, p)$ is a ME of the market $R(\Gamma)$ if and only if $(\Rfrom(\y), \Rfrom(p))$ is a PME of the PDM $\Gamma$.
\end{restatable}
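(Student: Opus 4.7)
Proof proposal: The plan is to prove both implications by leveraging two structural observations about the Fisher market $R(\Gamma)$. First, because each agent's utility $u_i^R$ depends only on the minimum of her pairwise purchases per issue, equilibrium incentives force her to either equalize her pairwise holdings $y_{i(ikj)}$ across all $k$ disagreeing with her on $j$ (when those goods carry positive price, otherwise she would strictly save budget) or leave her free to do so at no cost (when the price is zero, in which case the $\min$ in $\Rfrom$ absorbs the slack). Second, the supply constraint on each good $(i,k,j)$ couples the purchases of disagreeing pairs, which is exactly what the pairwise pricing condition encodes in the PDM.

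For the forward direction, suppose $(\y,p)$ is a ME of $R(\Gamma)$. For the PME demand condition, budget feasibility of $\Rfrom(y_i)$ under $\Rfrom(p)_i$ is immediate from $\Rfrom(y_i)_j \leq y_{i(ikj)}$, which gives
\[
\Rfrom(y_i) \cdot \Rfrom(p)_i \;=\; \sum_{j \in M} \Rfrom(y_i)_j \sum_{k:\, a_{kj}\neq a_{ij}} p_{(ikj)} \;\leq\; \sum_{(ikj)} y_{i(ikj)}\, p_{(ikj)} \;\leq\; y_i \cdot p \;\leq\; B_i.
\]
For utility-maximality, any PDM bundle $y_i'$ that is affordable under $\Rfrom(p)_i$ can be lifted to $R(y_i') \sim R(\Gamma)$ whose Fisher cost $R(y_i') \cdot p = y_i' \cdot \Rfrom(p)_i \leq B_i$ and whose Fisher utility equals $u_i^R(R(y_i')) = u_i(\Rfrom(R(y_i'))) = u_i(y_i')$, so an improving $y_i'$ would contradict $y_i \in D_i(p)$. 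For the pairwise consistency conditions, disagreement yields $\Rfrom(y_i)_j + \Rfrom(y_k)_j \leq y_{i(ikj)} + y_{k(ikj)} \leq 1$ directly from supply clearing of good $(i,k,j)$. For agreement, I would route through any witness $k'$ on the opposite side: the equalization observation applied to $i$ and clearing give $\Rfrom(y_i)_j = 1 - y_{k'(ik'j)}$, and the same applied to $k$ gives $\Rfrom(y_k)_j = 1 - y_{k'(kk'j)}$; a second application of equalization to agent $k'$ gives $y_{k'(ik'j)} = y_{k'(kk'j)}$, hence $\Rfrom(y_i)_j = \Rfrom(y_k)_j$.

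The backward direction is essentially the reverse of the above. Assume $(\Rfrom(\y),\Rfrom(p))$ is a PME. Since $u_i^R$ ignores goods outside agent $i$'s pairwise set, we may restrict attention to those, so $y_i \cdot p = \sum_{j,k} y_{i(ikj)}\, p_{(ikj)}$, which by equalization equals $\Rfrom(y_i) \cdot \Rfrom(p)_i \leq B_i$. Demand-optimality in $R(\Gamma)$ follows by applying $\Rfrom$ to any hypothetical improving Fisher bundle $y_i'$: the descent $\Rfrom(y_i')$ is at-most-equally-costly under $\Rfrom(p)_i$ with strictly greater PDM utility $u_i(\Rfrom(y_i')) = u_i^R(y_i') > u_i^R(y_i) = u_i(\Rfrom(y_i))$, violating the PME demand condition. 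Finally, market clearing of $(i,k,j)$ in $R(\Gamma)$ corresponds exactly to the PME sum-to-one condition on the disagreeing pair. The main obstacle is formalizing the equalization lemma (and carefully handling the zero-price/indifference case and degenerate issues where one alternative has no supporters, which contribute no goods to $R(\Gamma)$); everything else is bookkeeping translating $\Rfrom$ and $R$ between costs and utilities.
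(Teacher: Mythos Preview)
Your high-level structure matches the paper's: both hinge on the ``equalization'' observation (the paper's Lemma~\ref{lem:reduction-demand-pairwise}) and the demand-set correspondence (the paper's Lemma~\ref{lem:reduction-demand}), and your cost/utility bookkeeping for the demand conditions is essentially the same as theirs.

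The gap is in your market-clearing argument for the forward direction. You aim to show that agreeing agents purchase \emph{equal} amounts by routing through a witness $k'$ on the opposite side and invoking equalization plus clearing. But clearing gives $y_{i(ik'j)} + y_{k'(ik'j)} = 1$ only when $p_{(ik'j)} > 0$, and equalization for $k'$ forces $y_{k'(ik'j)} = y_{k'(kk'j)}$ only when both $p_{(ik'j)} > 0$ and $p_{(kk'j)} > 0$. There is no guarantee such a witness exists, and indeed when the relevant pairwise prices are zero, agreeing agents need not purchase equal amounts at all. This is not just a technicality you can patch: the informal ``agree $\Rightarrow$ equal, disagree $\Rightarrow$ sum to $1$'' description you are targeting is strictly stronger than the formal PME definition (Appendix~\ref{sec:pairwise-pricing}), which only requires $\Rfrom(y_i)_j \leq z^{j,a_{ij}}$ with equality when $\Rfrom(p)_{ij} > 0$.

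The paper sidesteps this by \emph{constructing} $\z$ asymmetrically: set $z^{j,0} = \max_{i:\,a_{ij}=0} \Rfrom(y_i)_j$ and $z^{j,1} = 1 - z^{j,0}$. Then $\Rfrom(y_i)_j \leq z^{j,a_{ij}}$ is automatic on the $0$-side and follows from the supply bound on the $1$-side; equality when $\Rfrom(p)_{ij} > 0$ is then argued by contradiction using the equalization lemma. This is exactly the device that handles the zero-price case you flagged as an obstacle, and it is the missing idea in your proposal.
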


Finally, we wish to claim the maximum Nash welfare outcomes in $\Gamma$ and $R(\Gamma)$ correspond. We will actually prove this correspondence for all welfare functions, not just the Nash welfare, and even for approximations of welfare functions.

Formally, let $\Psi: \mathbb{R}_{\geq 0}^n \to \mathbb{R}$ be a function. When the $n$ inputs to $\Psi$ are understood to be the $n$ agent utilities for a particular outcome (of a pubic or private instance), we call $\Psi$ a \emph{welfare function}. Because $\Psi$ depends only on the agent utilities, we will use this terminology and notation for both the public and private settings. With slight abuse of types, we will write $\Psi(\z) = \Psi(u_1(\z), u_2(\z),...u_n(\z))$\footnote{Throughout most of the paper, we use $\z$ to refer to the outcome of a public decisions instance and $\x$ to refer to the outcome of a private goods instance. In this discussion, the welfare functions are the same for both public and private instances, so we will use $\z$ to denote outcomes for both.}.

Common welfare functions include the utilitarian welfare function, $\Psi(\z) = \sum_{i \in N} u_i(\z)$, the egalitarian welfare function, $\Psi(\z) = \min_{i \in N} u_i(\z)$, and most importantly for us, the (budget-weighted) Nash welfare function, $\Psi(\z) = \Big(\prod_{i \in N} u_i(\z)^{B_i}\Big)^{1/\B}$. We say that an outcome $\z$ is a $\alpha$-approximation of $\Psi$ if
\[
\Psi(\z) \geq \alpha \cdot \max_{\mathbf{z'} \sim \Gamma} \Psi(\mathbf{z'})
\]


If $\z$ is an outcome of a public decisions instance, technically $R(\z)$ does not typecheck, since $\z = (z^1...z^m)$ is not a list of bundles. We interpret $R(\z)$ to mean $R(x_1, x_2...x_n)$, where $x_i$ is agent $i$'s public bundle as induced by $\z$.

\begin{restatable}{theorem}{thmReductionMax}
\label{thm:reduction-max}
Let $\Psi$ be a welfare function, let $\Gamma$ be the public decisions instance $(N,M)$ with budgets $B_1...B_n$, and let $\alpha \geq 0$. Then $\z$ is an $\alpha$-approximation of $\Psi$ in $\Gamma$ if and only if $R(\z)$ is an $\alpha$-approximation of $\Psi$ in $R(\Gamma)$.
\end{restatable}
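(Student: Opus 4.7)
The plan is to first observe that the reduction preserves individual agent utilities, and then use this to match the optimal welfares on the two sides. The key identity is $\Rfrom(R(x_i)) = x_i$ for any public bundle $x_i$, which follows directly from the definitions: $R(x_i)_{(kk'j)}$ equals $x_{ij}$ when $i \in \{k,k'\}$ and $0$ otherwise, so $\Rfrom(R(x_i))_j$ is a minimum over agent $i$'s pairwise goods on $j$, each of which equals $x_{ij}$. Consequently $u_i^R(R(x_i)) = u_i(\Rfrom(R(x_i))) = u_i(x_i)$, so for any valid $\z \sim \Gamma$ the allocation $R(\z) \sim R(\Gamma)$ produces the same utility vector, giving $\Psi(R(\z)) = \Psi(\z)$. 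Once this is in hand, the theorem reduces to showing $\max_{\z \sim \Gamma} \Psi(\z) = \max_{\y \sim R(\Gamma)} \Psi(\y)$, since both approximation statements then read as $\Psi(\z) \geq \alpha$ times this common optimum.

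For the direction $\max_\y \Psi(\y) \geq \max_\z \Psi(\z)$, I would verify that $R(\z)$ is always a valid allocation in $R(\Gamma)$. For each good $(k,k',j) \in R(M)$,
\[
\sum_{i \in N} R(x_i)_{(kk'j)} \;=\; x_{kj} + x_{k'j} \;=\; z^{j, a_{kj}} + z^{j, a_{k'j}} \;=\; z^{j,0} + z^{j,1} \;\leq\; 1,
\]
where the penultimate equality uses $a_{kj} \neq a_{k'j}$ (guaranteed by the definition of $R(M)$) and the inequality uses validity of $\z$. Combined with utility preservation, this direction follows.

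The harder direction is $\max_\z \Psi(\z) \geq \max_\y \Psi(\y)$. Given any valid $\y \sim R(\Gamma)$, I would construct $\z$ by setting $z^{j,a} = \max_{i : a_{ij}=a} \Rfrom(y_i)_j$ for each issue $j$ and alternative $a$. To verify validity, let $i_0$ and $i_1$ attain these two maxima on some issue $j$. Since $\Rfrom(y_{i_0})_j$ is a minimum over agent $i_0$'s pairwise goods on $j$, one of which is $(i_0, i_1, j)$, we obtain $z^{j,0} \leq y_{i_0,(i_0 i_1 j)}$, and analogously $z^{j,1} \leq y_{i_1,(i_0 i_1 j)}$; summing and using the capacity constraint on good $(i_0, i_1, j)$ in $\y$ yields $z^{j,0} + z^{j,1} \leq 1$. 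By construction, each agent $i$'s induced public bundle in $\z$ satisfies $x_{ij} \geq \Rfrom(y_i)_j$ componentwise, so monotonicity of $u_i$ gives $u_i(\z) \geq u_i^R(y_i)$. Assuming $\Psi$ is monotone in each coordinate, which holds for all canonical welfare functions (Nash, utilitarian, egalitarian), this yields $\Psi(\z) \geq \Psi(\y)$.

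The main obstacle is this last direction: an arbitrary allocation $\y \sim R(\Gamma)$ need not satisfy $\y = R(\Rfrom(\y))$, so agents who agree on an issue may have purchased different amounts of their pairwise goods. The coordinate-wise maximum construction resolves this at the cost of possibly raising agents' utilities, and monotonicity of $\Psi$ absorbs the slack. Combining the two max equalities with the utility-preservation identity $\Psi(R(\z)) = \Psi(\z)$ immediately yields the $\alpha$-approximation biconditional in both directions.
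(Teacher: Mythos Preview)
Your argument is sound and, in fact, more careful than the paper's own proof. Both proofs share the same skeleton: verify that $R(\z)$ is a valid allocation in $R(\Gamma)$, observe that $\Psi(\z)=\Psi(R(\z))$ via $u_i^R(R(x_i))=u_i(x_i)$, and then argue that the two maxima $\max_{\z'\sim\Gamma}\Psi(\z')$ and $\max_{\y\sim R(\Gamma)}\Psi(\y)$ coincide. The difference lies entirely in how the reverse inequality $\max_{\z'}\Psi(\z')\ge\max_{\y}\Psi(\y)$ is handled.

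The paper asserts that for every valid $\y\sim R(\Gamma)$ the list $\Rfrom(\y)$ gives an outcome of $\Gamma$ with the same $\Psi$ value. But $\Rfrom(\y)$ is only a list of private bundles, and in general there is no single public outcome $\z$ realizing all of them simultaneously: two agents who agree on issue $j$ may have $\Rfrom(y_i)_j\neq\Rfrom(y_k)_j$ when $\y$ is not an equilibrium allocation. So the paper's claim that the achievable utility vectors in $\Gamma$ and $R(\Gamma)$ coincide is not literally true, and indeed the theorem as stated (for an arbitrary $\Psi:\mathbb{R}_{\ge0}^n\to\mathbb{R}$) can fail. Your construction---taking $z^{j,a}=\max_{i:a_{ij}=a}\Rfrom(y_i)_j$ and invoking monotonicity of $u_i$ to get $u_i(\z)\ge u_i^R(y_i)$, then monotonicity of $\Psi$ to conclude $\Psi(\z)\ge\Psi(\y)$---is exactly the right repair. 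The extra hypothesis you flag (that $\Psi$ be coordinate-wise nondecreasing) is not a concession but a genuine requirement; it holds for every welfare function the paper actually uses, and the proof does not go through without it.
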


Note that by the same reasoning, $\z \sim R(\Gamma)$ is an $\alpha$-approximation of $\Psi$ if and only if $\Rfrom(\z) \sim \Gamma$ is also an $\alpha$-approximation of $\Psi$. Thus for any welfare function $\Psi$ and any $\alpha \geq 0$, the $\alpha$-approximations of $\Gamma$ and $R(\Gamma)$ correspond exactly.

\subsection{Lifting Fisher markets results using pairwise issue expansion}\label{sec:applications}

In addition to uncovering a surprising conceptual connection, pairwise issue expansion allows us to immediately lift many results from the Fisher market literature to the public decision-making setting. In particular, if a result holds for Fisher market with $\calh$-nested Leontief utilities, it holds in the public decisions setting for $\calh$ utilities. Any Fisher market result regarding the ME can be lifted using Theorem~\ref{thm:reduction-eq}, and any Fisher market result regarding any approximation of any welfare function can be lifted using Theorem~\ref{thm:reduction-max}. The following Fisher market results are known:

\begin{enumerate}
\item There exists a strongly polynomial-time algorithm for finding a Fisher market equilibrium with two agents and any utility functions~\cite{chakrabarty_2006_rationality}\footnote{For completeness, we mention an additional mild condition required for this result: the polytope containing the set of feasible utilities of the two agents must be able to be described via a combinatorial LP.}.

\item There exists a strongly polynomial-time algorithm for finding a Fisher market equilibrium for Leontief utilities with weights in $\{0,1\}$ \cite{garg_2005_primal}.

\item There exists a polynomial-time algorithm for a Fisher market with Leontief utilities which yields a $O(\log n)$ approximation simultaneously for all canonical welfare functions (i.e. Nash welfare, utilitarian welfare, egalitarian welfare, etc)~\cite{goel_2014_price}.
\end{enumerate}

The first two can be lifted using Theorem~\ref{thm:reduction-eq}, and the third can be lifted using Theorem~\ref{thm:reduction-max}. Note that nested Leontief-Leontief functions are just Leontief functions. This yields the following PDM results:

\begin{enumerate}

\item There exists a strongly polynomial-time algorithm for finding a PME with two agents and any utility functions.

\item There exists a strongly polynomial-time algorithm for finding a PME for Leontief utilities with weights in $\{0,1\}$.

\item There exists a polynomial-time algorithm for a PDM with Leontief utilities which yields a $O(\log n)$ approximation simultaneously for all canonical welfare functions (i.e. Nash welfare, utilitarian welfare, egalitarian welfare, etc).
\end{enumerate}

The final result we are interested in lifting is a discrete-time \tat process for finding the Fisher market equilibrium for nested CES-Leontief utilities that converges in polynomial-time~\cite{avigdor-elgrabli_convergence_2014}. The next section discusses this in more depth.

As a final comment, there are also results of interest that do not immediately fall under the framework of Theorems~\ref{thm:reduction-eq} and \ref{thm:reduction-max}, but which we conjecture could be lifted using our reduction. For example, \cite{goel_2018_beyond} studies Leontief utilities in the context of \emph{price curves}, where the cost of a good may be any increasing function of the quantity purchased (as opposed to just a linear function, as is usually assumed). We believe that pairwise issue expansion could be used to generate \emph{pairwise price curves}, where there would be a price curve assigned to each pair of agents who disagree on an issue, instead of a single price. Pairwise issue expansion seems general enough to apply to non-standard market models, such as that of \cite{goel_2018_beyond}, but we leave it for future work.

\section{Public market t\^{a}tonnements}
\label{sec:tat}
\begin{figure}[ht!bp]
	\centering
	\resizebox{!}{1.35 in}{ 
		\begin{tikzpicture}[->,>=stealth',shorten >=1pt,node distance=5.5cm]
		\tikzstyle{every state}=[align=center]
		\node[state,minimum size=80pt] (Start)
		{Public\\market \\agents};
		\node[state,minimum size=80pt] (Mitte) [right of=Start] 
		{Public to\\private \\reduction};
		\node[state,minimum size=80pt] (Ende)   [right of=Mitte] 
		{Private\\market\\t\^{a}tonnement};
		\path (Start) edge [bend left]               node[above] {demands $y$ at time $t$} (Mitte);
		\path (Mitte) edge [bend left]              node[below] {prices $\Rfrom(p^{t+1})$} (Start);
		\path (Mitte) edge [bend left]               node[above] {demands $R(y)$ at time $t$} (Ende);
		\path (Ende) edge [bend left]              node[below] {prices ${p}^{t+1}$} (Mitte);
		\end{tikzpicture}
	}
	\caption{$\Rfrom(\mathcal{T})$ illustration using a hidden private market \tat}\label{fig:tatonnement}
\end{figure}

In this section, we describe how the reduction immediately leads to existence of several public market t\^{a}tonnements. In particular, we show that any Fisher market \tat that works for $\calh$-nested utilities yields a PDM \tat for $\calh$ utilities. 

This does not immediately follow from pairwise issue expansion for several reasons. The first is that \tat deals with approximate equilibria, and Theorem~\ref{thm:reduction-eq} only considers exact equilibria. Because this correspondence holds for approximate equilibria as well (see proof of Theorem 5.4), any Fisher market \tat that works for $\calh$-nested utilities immediately yields an \emph{algorithm} for computing PDM equilibria for $\calh$ utilities, but not a \tat. A true PDM \tat would only have access to agents' demands in the PDM, but the resulting algorithm would need to elicit agents' demands in the constructed Fisher market. We handle this by running the Fisher market \tat as a hidden subroutine within the PDM \tat, as demonstrated by Figure~\ref{fig:tatonnement}.

Let a Fisher market \tat $\mathcal{T}$ be an iterative algorithm that starts at an initial price vector $p^0$, and then at each time $t$, \begin{enumerate}
	\item Receives demand set $D_i(p^t)$\footnote{With strictly concave utility functions, each agent's demand at a given price is unique. With linear utilities, the demand set can be expressed as the set of goods that are equally desirable at the given prices. Also, we assume that agents are \emph{price-taking}, meaning that they honestly report their demand given a set of prices, and do not anticipate how prices will change as a result of their actions.} from each agent $i$.
	\item Updates prices as some function $g$ of the demands, $p^{t+1} = g_{\mathcal{T}}(p^t, D(p^t))$. 
\end{enumerate}
As time increases, prices and associated demands approach an approximate equilibrium, for some notion of approximate. Figure~\ref{fig:tatonnement} illustrates the meta-algorithm for public market t\^{a}tonnements. From a Fisher market \tat $\mathcal{T}$, let $\Rfrom(\mathcal{T})$ be the induced public market t\^{a}tonnement that initializes an initial price vector $p^0$ in the hidden Fisher market and then at each time $t$,\begin{enumerate}
	\item Converts prices $p^t$ to public market prices $\Rfrom(p^{t})$ and shows them to agents.
	\item Receives demand set $D_i(\Rfrom(p^{t}))$ from each agent $i$.
	\item Converts the agent demands to the associated private market demand set $\mathbf{y}^R = \{R(y_i)\}_{y_i\in D_i(\Rfrom(p^{t}))}$. 
	\item Updates prices through the Fisher market \tat function, $p^{t+1} = g_{\T}(p^t, \mathbf{y}^R)$. 
\end{enumerate}


We begin with the definitions of approximate equilibria and convergence.

\begin{definition}
	A $\delta$-equilibrium $(\x, p)$ in a Fisher market  is an allocation $\x$ and prices $p$ where
	\begin{enumerate}
		\item Each agent receives a bundle in her demand set: $x_i \in D_i(p)$.\label{condition:appdemand}
		\item $p_j > \delta \implies \sum\limits_{i \in N} x_{ij} > 1-\delta$ \label{con:fisherdelp} 
		\item $\forall j$, $\sum\limits_{i \in N} x_{ij} \leq 1+\delta$
	\end{enumerate}
\label{def:fishapprox}
\end{definition}

Note that this definition is introduced in~\cite{avigdor-elgrabli_convergence_2014}.

\begin{definition}
A $\delta$-PME $(\y, p)$ is a list of private bundles $\y$ and per-person per-issue prices $p \in \mathbb{R}_{\geq 0}^{m\times n}$ where
\begin{enumerate}
	\item Each agent receives a private bundle in her demand set: $y_i \in D_i(p_i)$.
	\item The market approximately clears: there exists a outcome $\z = (z^1...z^m) \in [0,1]^{m\times 2}$ where for every issue $j \in M$, all of the following hold:
	\begin{enumerate}
		\item $z^{j,0} + z^{j,1} \leq 1+\delta$
		\item For all $i\in N$, $y_{ij} \leq z^{j, a_{ij}} + \delta$. If $p_{ij} > n\delta$, then $y_{ij} > z^{j,a_{ij}} - \delta$.
	\end{enumerate}\label{condition:market-clears}
\end{enumerate}
\label{def:pdmapprox}
\end{definition}

\begin{definition}
	A \tat $\T$ has converged to a $\delta$-equilibrium at time $T$ if $\exists \y$ where $(\y, p^T)$ is a $\delta$-equilibrium. Similarly, $\Rfrom(\T)$ has converged to a $\delta$-PME at time $T$ if $\exists \y$ where $(\y, \Rfrom(p^T))$ is a $\delta$-PME.
	\end{definition}

The definition does not imply that \textit{all} demands at the equilibrium prices form an approximate equilibrium, only that there exists an allocation consistent with demands at the equilibrium prices such that the supply constraints are met. However, note that when utility functions are strictly concave, demands are unique.

Our first theorem shows that any Fisher market \tat for $\mathcal{H}$-nested Leontief utility functions yields a PDM \tat for $\mathcal{H}$ utility functions. Theorem $\ref{thm:tatonnementlifting}$, whose proof appears in the appendix, allows one to lift both convergence and convergence rates from Fisher market t\^{a}tonnements. 

\begin{restatable}{theorem}{tatLift}
	Consider a Fisher market t\^{a}tonnement $\mathcal{T}$. Suppose $\mathcal{T}$ converges to a $\delta$-equilibrium for $\mathcal{H}$-nested leontief utilities in $O(\kappa(m,n,\delta))$ time steps, where $n$ is the number of agents and $m$ the number of goods. Then $\Rfrom(\mathcal{T})$ converges to a $3\delta$-PME for the PDM with $\mathcal{H}$ utilities in $O(\kappa(n^2m,n,\delta))$. 
	\label{thm:tatonnementlifting}
\end{restatable}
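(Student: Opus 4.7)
The strategy is to apply the Fisher market t\^{a}tonnement $\T$ as a hidden subroutine on $R(\Gamma)$. Since $R(\Gamma)$ has $|R(M)|=O(n^2 m)$ goods, $n$ agents, and utilities that are $\calh$-nested Leontief (by our construction together with Lemma~\ref{lem:nestedhomog}), the convergence hypothesis on $\T$ yields a $\delta$-equilibrium in $O(\kappa(n^2 m, n, \delta))$ time steps. First I would verify that running $\Rfrom(\T)$ against the PDM produces exactly the same price trajectory as running $\T$ directly on $R(\Gamma)$; this reduces to showing a demand-set correspondence at every price vector $p\sim R(\Gamma)$, namely that PDM demands at prices $\Rfrom(p)$ map via $R$ into Fisher demands at $p$ and vice versa. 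This is the same argument as in Section~\ref{sec:reduction-intro}: the nested Leontief structure forces $y_{i(ikj)}=\alpha_{ij}$ to be constant in $k$ (for $k$ disagreeing with $i$ on $j$), so $u_i^R(y_i)=u_i(\alpha_i)$ and the budget identity $y_i\cdot p = \sum_j \alpha_{ij}\,\Rfrom(p)_{ij}$ aligns the two optimization problems exactly.

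The bulk of the work is translating the Fisher $\delta$-equilibrium $(\y,p^T)$ at convergence into a $3\delta$-PME $(\Rfrom(\y),\Rfrom(p^T))$. Using the constant-in-$k$ property above, set $\alpha_{ij}:=y_{i(ikj)}$ (for any disagreeing $k$) so that $\Rfrom(y_i)_j=\alpha_{ij}$, and define the PDM outcome by $z^{j,a}:=\min\bigl\{1,\,\max_{i:a_{ij}=a}\alpha_{ij}\bigr\}$. The demand condition of Definition~\ref{def:pdmapprox} is inherited from the Fisher side via the correspondence just discussed. For the upper bound $y_{ij}\le z^{j,a_{ij}}+3\delta$: if the unclamped max is at most $1$ then $z^{j,a_{ij}}\ge \alpha_{ij}$ directly; otherwise $z^{j,a_{ij}}=1$ and Condition~3 of Definition~\ref{def:fishapprox} applied to any pairwise good $(i,k,j)$ gives $\alpha_{ij}\le 1+\delta\le z^{j,a_{ij}}+3\delta$. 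For the sum bound $z^{j,0}+z^{j,1}\le 1+3\delta$: picking maximizers $i_0,i_1$ (who disagree on $j$), Condition~3 on the good $(i_0,i_1,j)$ yields $\alpha_{i_0j}+\alpha_{i_1j}\le 1+\delta$, and clamping only shrinks this.

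The main obstacle is the lower bound in condition 2(b), which requires carefully combining the Fisher $\delta$-equilibrium's two-sided supply constraints. Suppose $\Rfrom(p^T)_{ij}>3n\delta$. Writing $\Rfrom(p^T)_{ij}=\sum_{k:a_{kj}\neq a_{ij}} p^T_{(ikj)}$ as a sum of at most $n-1$ pairwise prices, some term $p^T_{(ikj)}$ exceeds $3\delta$, and in particular exceeds $\delta$; by Condition~2 of Definition~\ref{def:fishapprox} this forces $\alpha_{ij}+\alpha_{kj}>1-\delta$. Letting $i^*\in\arg\max_{i':a_{i'j}=a_{ij}}\alpha_{i'j}$, the good $(i^*,k,j)$ lies in $R(M)$ (since $a_{i^*j}=a_{ij}\neq a_{kj}$), so Condition~3 gives $\alpha_{i^*j}+\alpha_{kj}\le 1+\delta$. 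Subtracting yields $\alpha_{i^*j}-\alpha_{ij}<2\delta$, hence $y_{ij}=\alpha_{ij}>\alpha_{i^*j}-2\delta\ge z^{j,a_{ij}}-3\delta$, as required. The $3\delta$ factor is mildly generous: the chain actually produces a $2\delta$ slack, but $3\delta$ cleanly absorbs the $\delta$ slacks appearing independently in the upper and lower Fisher supply conditions, which is the reason the approximation degrades under the reduction rather than being preserved exactly.
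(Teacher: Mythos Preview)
Your approach mirrors the paper's almost exactly: run $\T$ on $R(\Gamma)$ via the hidden-subroutine construction, invoke the demand correspondence (the paper cites Corollary~\ref{cor:reduction-demand}), and then translate the Fisher $\delta$-equilibrium into a $3\delta$-PME by exhibiting a witness outcome $\z$. Two small points are worth flagging.

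First, your assertion that ``the nested Leontief structure forces $y_{i(ikj)}=\alpha_{ij}$ to be constant in $k$'' is not true for arbitrary demand-set bundles; it holds only for pairwise goods with $p_{(ikj)}>0$. This is exactly the content of Lemma~\ref{lem:reduction-demand-pairwise}, which the paper invokes explicitly at the key step. Fortunately your lower-bound argument only needs the equality at the one $k$ with $p^T_{(ikj)}>\delta$, so citing that lemma (rather than claiming global constancy) suffices; and your upper-bound and sum-bound arguments already go through with $\alpha_{ij}:=\Rfrom(y_i)_j=\min_k y_{i(ikj)}$, since you only use $\alpha_{ij}\le y_{i(ikj)}$ there.

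Second, your witness $\z$ is defined symmetrically in the two alternatives, whereas the paper sets $z^{j,0}=\max_{i:a_{ij}=0}\Rfrom(y_i)_j$ and $z^{j,1}=\max(1-z^{j,0},0)$. Both choices work; your subtraction trick $\alpha_{i^*j}-\alpha_{ij}<2\delta$ is a clean variant of the paper's chain through $1-z^{j,a_{\tilde k j}}$, and as you observe it even gives $2\delta$ slack rather than $3\delta$.
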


%
%
%

One Fisher market \tat that we can lift using Theorem~\ref{thm:tatonnementlifting} comes from \cite{avigdor-elgrabli_convergence_2014}, which gives a polynomial-time t\^{a}tonnement that converges to a $\delta$-equilibrium for CES-Leontief utilities with $\rho \in (-\infty, 0)\cup(0, 1)$ in polynomial time. By Theorem~\ref{thm:tatonnementlifting}, this yields a PDM \tat for CES utilities.


We would also like a PDM \tat that works for a wider range of utility functions, especially linear utilities. Section~\ref{sec:tatprivatedetour} presents a stochastic gradient descent style t\^{a}tonnement for Fisher markets which converges asymptotically to an equilibrium for all EG utility functions\footnote{functions that meet the 5 conditions from Section~\ref{sec:model}.}, following the framework of~\cite{cheung_tatonnement_2013}. Combined with Theorem~\ref{thm:tatonnementlifting}, this t\^{a}tonnement implies existence of a PDM t\^{a}tonnement with asymptotic convergence to an equilibrium for all EG utility functions.






\section{Conclusion}

In this paper, we studied adaptations of markets to the public decision-making setting. In Section~\ref{sec:per-issue-lower}, we showed that issue pricing in the public decisions setting can yield very poor equilibria: for linear utilities, the Nash welfare can be a factor of $O(n)$ worse than optimal. This is in contrast to private goods, where per-good pricing is the accepted standard, and yields optimal equilibria. We showed in Section~\ref{sec:reduction} that pairwise issue expansion reduces any public decisions market to an equivalent Fisher market, how optimal equilibria can be constructed using this reduction. We used pairwise issue expansion to lift various Fisher market results to the public decision-making context, including \tat, which we discussed in Section~\ref{sec:tat}.

Most importantly, our reduction uncovers a powerful connection between the private goods and public decision-making settings that we believe has many possible applications. For example, suppose we had a mechanism for private goods which computes some desirable outcome other than maximum Nash welfare (maybe it computes the allocation which maximizes the minimum utility, for example). If that algorithm works for nested $\calh$-Leontief utilities for private goods, we imagine that it could be immediately lifted to work for $\calh$ utilities in for public-decisions. More generally, it seems like more or less any result that applies to $\calh$-Leontief utilities for private goods would apply for $\calh$ utilities for public decisions. We believe this merits more study.

\section*{Acknowledgements}
This research was supported in part by NSF grant CCF-1637418, 
ONR grant N00014-15-1-2786, and the NSF Graduate Research Fellowship under grants DGE-114747 and DGE-1656518.

\bibliographystyle{plain}
\bibliography{bibliography}

\appendix

\section{Lindahl Equilibria}\label{sec:foley}
In this section, we show how our public decisions setting corresponds to a natural public goods market in the setting of Lindahl equilibria, and how our reduction can also be used to compute Lindahl prices for this public goods market. The Lindahl Equilibrium has a long history and, at times, the term has been used to mean slightly different things~\cite{van_den_nouweland_lindahl_2015}. Foley \cite{foley_lindahls_1970} gave general conditions for the existence of the Lindahl Equilibrium and its correspondence to the core. We first introduce a simplified definition of Lindahl Equilibria. 

\begin{definition}[\cite{foley_lindahls_1970, van_den_nouweland_lindahl_2015}]
	A Lindahl Equilibrium with $m$ public goods, 1 private good, $n$ agents, and entry private good amounts $\{w_i\}_{i=1}^n$ is an (public goods allocation, private goods allocation, per-person per-issue price) vector $(z^*\in \bbR_+^m, \{y_i^* \in \bbR_+\}_{i=1}^n, \{p_i^* \in \bbR_+^m\}_{i=1}^n)$ such that
	\begin{enumerate}
		\item $z^*$ is a solution to $\max_z \left(\sum_{j=1}^n p_j^*\right)z - c(z)$
		\item  For each $i\in \{1, \dots, n\}$, $(z^*, y_i^*)$ is a solution to	$\max_{(z,y_i)}
		u_i(z, y_i)$ subject to $p^*_i z + y_i \leq w_i$
	\end{enumerate}
where $c(z)$ is the cost to produce the public good vector $z$ in terms of the private good and $u_i$ is the participant utility function in terms of the public and private goods. \label{def:lindahl}
\end{definition}

In a Public Decision Market, the private good is the ``influence'' of each agent, for which agents have no utility, i.e., influence not spent is lost. Furthermore, there are $2$ public goods per issue, 1 for each alternative, and each with the same price for each agent. Similarly,
\[
c(z) = \begin{cases*}
0 & $z^{j,0} + z^{j,1} \leq 1\,\,\,\forall j$\\
\infty & \text{else}
\end{cases*}
\]
i.e., if in the case in which each alternative on each issue is implemented with some probability, then there is no cost of using the entire probability, and no possibility of creating more probability.

\begin{lemma}
	An equilibrium $(z^*\in \bbR_+^{2\times m}, \{y_i^* \in \bbR_+\}_{i=1}^n, \{p_i^* \in \bbR_+^{2\times m}\}_{i=1}^n)$ of the PDM with $m$ issues and agent budgets $B_i$, where $z^*$ is the decision vector, $p^*$ are the per-person per-issue prices from the Fisher market reduction, and final influence vectors are $y_i^* = 0$, is a Lindahl Equilibrium with entry private good amounts $w_i = B_i$. \label{lem:pdmlindahl}
\end{lemma}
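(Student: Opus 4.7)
The plan is to exhibit a concrete choice of Lindahl prices derived from the PME prices, and then verify the two defining conditions of Definition~\ref{def:lindahl} directly. For each agent $i$, issue $j$, and alternative $a \in \{0,1\}$, set $p_i^{*j,a} = \Rfrom(p)_{ij}$ when $a = a_{ij}$ and $p_i^{*j,a} = 0$ otherwise. Intuitively, each agent is charged only for the alternative she prefers, at the price she would pay in the PME. Since agents have no utility for the private good, setting $y_i^* = 0$ is consistent with them spending their entire budget, exactly as happens in the PME.

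For condition 1, I would first observe that $c(z)$ forces the maximization to range over valid public outcomes, over which the objective reduces to $\sum_j \sum_{a \in \{0,1\}} \bigl(\sum_i p_i^{*j,a}\bigr) z^{j,a}$. The key symmetry is that pairwise prices produce equal Lindahl totals on the two sides of each issue:
\[
\sum_{i:\, a_{ij}=0} \Rfrom(p)_{ij} \;=\; \sum_{i:\, a_{ij}=0}\,\sum_{k:\, a_{kj}=1} p_{(ikj)} \;=\; \sum_{i:\, a_{ij}=1}\,\sum_{k:\, a_{kj}=0} p_{(ikj)} \;=\; \sum_{i:\, a_{ij}=1} \Rfrom(p)_{ij},
\]
since both double sums range over the same set of pairwise goods $\{(i,k,j) : a_{ij} \neq a_{kj}\}$. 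Hence on each issue any valid $z^j$ with $z^{j,0}+z^{j,1}=1$ achieves the maximum, and at a PME the equilibrium outcome $z^*$ indeed satisfies $z^{*j,0}+z^{*j,1}=1$ on every issue with disagreement (on fully agreed issues the total price is 0 so any valid $z^j$ is optimal).

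For condition 2, I would use that agent $i$'s utility $u_i(z,y_i)$ depends only on her preferred-alternative coordinates $(z^{j,a_{ij}})_j = x_i$, so adding influence $y_i > 0$ strictly relaxes the constraint without improving utility, which is why $y_i^* = 0$ is optimal. With this choice, $p_i^* z + y_i \leq B_i$ collapses to $\Rfrom(p)_i \cdot x_i \leq B_i$, which is exactly the PME budget constraint for agent $i$. By the PME demand-set property (Theorem~\ref{thm:reduction-eq} translated through the definition of PME), $x_i^*$ maximizes $u_i$ subject to this constraint, so $(z^*, 0)$ solves the agent's program.

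The main obstacle is the price-symmetry step in condition 1: without it, allocating the full probability mass to one alternative or the other could yield a strictly higher objective and $z^*$ would fail to be a maximizer. Once the double-summation identity above is observed, the rest is routine. A minor bookkeeping point is handling issues on which all agents agree, where $R(\Gamma)$ has no corresponding goods; on such issues all Lindahl prices are zero and any valid $z^j$ (including the natural choice $z^{j,a_{ij}}=1$) trivially satisfies both conditions, so the lemma holds with no modification.
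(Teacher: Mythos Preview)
Your proposal is correct and follows essentially the same route as the paper. The heart of both proofs is the price-symmetry identity $\sum_{i:a_{ij}=0}\Rfrom(p)_{ij}=\sum_{i:a_{ij}=1}\Rfrom(p)_{ij}$, obtained exactly as you do by noting that both double sums range over the same set of pairwise goods; the paper's proof records this as $p^0_j=p^1_j$ and then observes that any feasible $z$ with $z^{j,0}+z^{j,1}=1$ is a maximizer, while condition~2 is dispatched in one line by invoking the PME demand-set property. Your write-up is somewhat more explicit (you spell out the price assignment $p_i^{*j,a}=\Rfrom(p)_{ij}\cdot\mathbf{1}[a=a_{ij}]$, you note why $y_i^*=0$ is optimal, and you separately handle issues with no disagreement), but the logical content is the same.
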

\begin{proof}
	Condition 2 follows directly from the equilibrium condition that optimal allocation is in the demand set of each agent at the equilibrium prices. Condition 1 requires a bit more work. Note that $\forall j$, for each copy of the good, the sub-price $p_{ikj} = p_{kij}$, where $a_{ij} \neq a_{kj}$. Then, $\forall j, p^0_j = \sum_{i:a_{ij} = a} \sum_{k:a_{kj} \neq a} p_{ikj}$ $\implies$ $p^0_j = p^1_j$. Thus, all feasible $x$ are in the solution set in the first condition, and an equilibrium of the PDM is feasible.
\end{proof}

Lemma~\ref{lem:pdmlindahl}, alongside Theorem~\ref{thm:reduction-eq} and the existence of Fisher market equilibria~\cite{arrow_existence_1954} establishes the existence of a Lindahl equilibrium in our setting. 

We note that Lemma~\ref{lem:pdmlindahl} further establishes that the solution is in the \textit{core}, as Lindahl Equilibria are in the core~\cite{foley_lindahls_1970}.
\subsection{Economies with public goods}
The existence of Lindahl Equilibria in Public Good economies (of which Public Decision Markets are a special case, as we will show) was established by Foley~\cite{foley_lindahls_1970}. The chief technique is a reduction to Private Goods economies. The reduction yields a non-constructive existence proof, and operates as follows: create a copy of each good for each participant, with equality of the amount of each good enforced through conic constraints. Then, the proof is finished by invoking the existence of equilibria satisfying certain conditions in Private Goods economies, after showing that the additional constraints restricting the cone of production do not violate any assumptions~\cite{debreu_theory_1959}. We note that existence of Lindahl equilibria of the PDM can also be established non-constructively through the same technique, by showing that the resulting market satisfies the assumptions in \cite{foley_lindahls_1970}. 

One natural question is how Foley's reduction to private goods compares to the reduction in this work. Equation~\eqref{eqn:egprogramfoley} contains the convex program to find MNW through Foley's reduction. We use $s\in \{0,1\}$ to denote each side of the issue. Equation~\eqref{eqn:egprogram2} contains our reduction, which has a nested utility function structure. 


\begin{align}
\max\limits_{\x \in [0,1]^{(2\times m)\times n}} &\Big(\prod_{i \in N} u_i(x_i)^{B_i} \Big)^{1/\B}\nonumber\\
s.t.\,\,\,\,&x^0_{ij} + x^1_{ij}\leq 1&\forall j \in M,\ i \in N \label{eqn:egprogramfoley}\\
&x^s_{ij} = x^s_{kj}&\forall j \in M, \ i,k \in N, s\in\{0, 1\} \nonumber\\\nonumber\\
\max\limits_{v \in [0,1]^{m\times n}, \x \in [0,1]^{\tilde{n}}} &\Big(\prod_{i \in N} u_i(v_i)^{B_i} \Big)^{1/\B}\nonumber\\
s.t.\,\,\,\,&v_{ij} \leq x_{i(ikj)}& \forall j \in M,\ i,k \in N, a_{ij}\neq a_{kj} \label{eqn:egprogram2} \\
&x_{i(ikj)} + x_{k(ikj)} \leq 1& \forall j \in M,\ i,k \in N, a_{ij}\neq a_{kj}\nonumber 
\end{align}
Where $\tilde{n}$ is $2(\text{\# of disagreement pairs across issues})$. 
Both programs can be solved in polynomial time. However, Foley's reduction does not obviously resemble a Fisher market due to the extra equality constraints. 
\begin{thmComment}
	Program~\eqref{eqn:egprogramfoley} does not transform into a Fisher market.
\end{thmComment}
\begin{proof}
	We write the Lagrangian of the Program~\eqref{eqn:egprogramfoley}, with the objective function written in log form.
	\begin{align*}
	L(x, p, q) &= \sum_i B_i \log (u_i(x_i)) - \sum_{i,j} p_{ij}(x^0_{ij} + x^1_{ij}) + \sum_{i,j}p_{ij} - \sum_{i\neq k,j,s} q^s_{ikj} (x^s_{ij} - x^s_{kj})\\ & s.t.\,\,\,\, p\geq 0, x\geq0, q^s_{ikj}\geq 0\\
	\end{align*}
	This Lagrangian has per-person per-issue prices $q^s_{ikj}, p_{ij}$ for each side of each issue that cannot be trivially turned into per-good prices with separate goods not having joint constraints. If one considers each $(i,j,s)$ tuple a good in the Fisher market, the two goods associated with the two sides of each issue, $x^0_{ij}, x^1_{ij}$ are coupled through $p_{ij}$. Similarly, if pair $(i,j)$ corresponds to a good, goods across individuals $x^s_{ij}, x^s_{kj}$ are coupled through $q^s_{ikj}$ in a way that does not resemble supply constraints. Merging the goods for each side, such that the good represents the probability on one of the alternatives for each issue, would violate the non-decreasing constraint for utility functions. Eliminating or combining either of these variables would amount to eliminating the corresponding constraints. This coupling prevents a non-trivial mapping to the Fisher market case, which does not have cross-good constraints but has per-good prices. 
\end{proof}

 Our reduction fills this gap for Public Decision markets, showing that these prices can emerge from a pure Fisher market with a modification of buyer utilities. In Program~\eqref{eqn:egprogram2}, there is a good $(i,k,j)$ for each $(i,k)$ pair that disagrees on issue $j$, with $i$'s utility function only dependent on the amount $i$ buys, $x_{ikj}$. This program thus yields goods with per-good pricing (on the Fisher market goods) and no cross-goods constraints.

\begin{proposition}
Programs~\eqref{eqn:egprogramfoley}~and~\eqref{eqn:egprogram2} are equivalent.
\end{proposition}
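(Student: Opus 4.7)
The plan is to establish equivalence at the level of optimal values by constructing feasibility-preserving maps in both directions that weakly preserve the objective. Since both programs use the same Nash welfare objective applied to the induced public bundles, and the underlying utilities $u_i$ are monotone (Section~\ref{sec:utilityfuncs}), it suffices to show that every feasible solution to one program can be translated into a feasible solution to the other whose induced public bundles pointwise dominate. From this, equality of optima follows by chaining the two inequalities.

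For the forward direction, given any feasible $\{x^s_{ij}\}$ to \eqref{eqn:egprogramfoley}, I would use the equality constraints $x^s_{ij} = x^s_{kj}$ to identify common per-side values $z^{j,s}$, and then set $v_{ij} := z^{j,a_{ij}}$ together with $x_{i(ikj)} := z^{j,a_{ij}}$ for every disagreeing pair $(i,k)$ on each issue $j$. The constraints of \eqref{eqn:egprogram2} then hold immediately: $v_{ij} = x_{i(ikj)}$ satisfies the inequality, and $x_{i(ikj)} + x_{k(ikj)} = z^{j,0} + z^{j,1} \leq 1$ because $i$ and $k$ lie on opposite sides of $j$. The agent utilities $u_i(v_i)$ reduce to $u_i$ evaluated on exactly the public bundle used in \eqref{eqn:egprogramfoley}, so the objective is preserved.

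For the reverse direction, given feasible $(v, \x)$ to \eqref{eqn:egprogram2}, I would define $z^{j,0} := \max_{i:\,a_{ij}=0} v_{ij}$ and $z^{j,1} := \max_{k:\,a_{kj}=1} v_{kj}$ (adopting the convention that an empty max is $0$), then set $x'^s_{ij} := z^{j,s}$ for all $i, j, s$. The equality constraints of \eqref{eqn:egprogramfoley} hold trivially by construction. The crux, and the main obstacle, is recovering the single-price sum constraint $z^{j,0} + z^{j,1} \leq 1$ from the pairwise constraints; letting $i^*$ and $k^*$ be argmax agents on the two sides, the three inequalities $v_{i^*j} \leq x_{i^*(i^*k^*j)}$, $v_{k^*j} \leq x_{k^*(i^*k^*j)}$, and $x_{i^*(i^*k^*j)} + x_{k^*(i^*k^*j)} \leq 1$ chain to the desired bound, and this is precisely where the structural feature of pairwise issue expansion, namely that every opposing pair shares a unit-supply good, becomes essential. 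Since each agent's new public bundle coordinate $z^{j,a_{ij}}$ dominates $v_{ij}$ by definition of the max, monotonicity of $u_i$ yields that the objective value in \eqref{eqn:egprogramfoley} is at least that in \eqref{eqn:egprogram2}, completing the argument.
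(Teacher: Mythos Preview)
Your argument is correct, but it takes a different route from the paper. The paper does not compare Programs~\eqref{eqn:egprogramfoley} and~\eqref{eqn:egprogram2} directly; instead it factors both through the original PDM program~\eqref{eqn:egpdm}. The equivalence of~\eqref{eqn:egprogramfoley} with~\eqref{eqn:egpdm} is declared immediate (combine the per-agent copies via the equality constraints), and the equivalence of~\eqref{eqn:egprogram2} with~\eqref{eqn:egpdm} is obtained by invoking the main reduction theorem (Theorem~\ref{thm:reduction-eq}, or equivalently Theorem~\ref{thm:reduction-max} with $\alpha=1$ and $\Psi$ the Nash welfare). Your approach, by contrast, is fully elementary: you build explicit feasibility-preserving maps in both directions and use only monotonicity of $u_i$. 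The key step in your reverse direction---picking argmax agents $i^*,k^*$ on opposite sides of issue $j$ and chaining $v_{i^*j}\le x_{i^*(i^*k^*j)}$, $v_{k^*j}\le x_{k^*(i^*k^*j)}$, $x_{i^*(i^*k^*j)}+x_{k^*(i^*k^*j)}\le 1$---is precisely the combinatorial heart of why pairwise issue expansion enforces the single feasibility constraint $z^{j,0}+z^{j,1}\le 1$, and it is nice to see it isolated so cleanly. The paper's approach is shorter given that Theorem~\ref{thm:reduction-eq} is already in hand and emphasizes~\eqref{eqn:egpdm} as the common specialization; your approach is self-contained and does not depend on any of the Section~\ref{sec:reduction} machinery.
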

\begin{proof}
	Both are equivalent to Program \eqref{eqn:egpdm}, which we repeat below:\begin{align}
	\max\limits_{\z \in [0,1]^{2\times m}} &\Big(\prod_{i \in N} u_i(z)^{B_i} \Big)^{1/\B}\nonumber\\
	s.t.\,\,\,\,&z^{j,0} + z^{j,1}\leq 1\ \ &\forall j \in M\nonumber 
	\end{align}
Program~\eqref{eqn:egprogramfoley} and~\eqref{eqn:egpdm} are immediately equivalent by combining variables. The equivalence of the reduction (Theorem~\ref{thm:reduction-eq}) establishes that Program~\eqref{eqn:egprogram2} and PDM Program~\eqref{eqn:egpdm} have the same solution. 
\end{proof}

\section{Omitted definitions and proofs from Section 4}

Section~\ref{sec:pairwise-pricing} contains the formal definitions of the pairwise pricing model. Section~\ref{sec:reduction-proofs} contains the formal analysis of $R$ and $\Rfrom$, leading to Theorems~\ref{thm:reduction-eq} and \ref{thm:reduction-max}. 

\subsection{Pairwise pricing}\label{sec:pairwise-pricing}

In the issue pricing model of Section~\ref{sec:per-issue-lower}, the price for an issue was the same for all agents, and the amount of probability put on alternative $a$ on issue $j$ in the outcome (denoted $z^{j,a}$) was the sum of the agents' purchases. Formally, Section~\ref{sec:per-issue-lower} defined $z^{j,a} = \sum\limits_{k \in N:\ a_{ij} = a} y_{ij}$ where $y_{ij}$ is the probability that agent $i$ purchased on issue $j$ ($y_i$ is agent $i$'s private bundle).

The definition of $z^{j,a}$ will be different here. This section describes a model where agents may have different prices for the same issue. This will allow us to enforce that in equilibrium, all agents who agree on issue $j$ will purchase the same amount of issue $j$. For every $i \in N$ and $j \in M$ where agent $i$'s price for issue $j$ is nonzero, at equilibrium\footnote{The outcome will not be well-defined for a list of private bundles not at equilibrium, since agents may have incompatible demands. This will not be important; we mention it only for completeness.} we will have
\[
z^{j, a_{ij}} = y_{ij}
\] 
The key consequence is that each agent's private and public bundles will be the same in equilibrium, and so $u_i(y_i) = u_i(x_i(\z))$. Thus each agent's utility can be written as a function of only her private bundle: this will enable the reduction to private goods.

We now formally describe the personalized pricing model. For prices $p \in \bbR_{\geq 0}^{m\times n}$, let $p_{ij}$ be the price for agent $i$ for issue $j$, and let $p_i = (p_{i1}...p_{im})$. Formally, a private bundle $y_i$ is affordable if $y_i \cdot p_i = \sum_{j\in M}y_{ij}p_{ij} \leq B_i$. Because we will have $u_i(y_i) = u_i(x_i)$ in equilibrium, we can define agent $i$'s demand to be independent of other agents' private bundles\footnote{We assume that agents truthfully report their demands according to this definition: recall that we do not consider strategic behavior.}:
\[
D_i(p) = \argmax\limits_{y_i \in \bbR^m_{\geq 0}:\ y_i \cdot p_i \le B_i} u_i(y_i)
\]

A \emph{personalized-pricing market equilibrium} (PME) $(\y, p)$ is a list of private bundles $\y$ and personalized prices $p \in \mathbb{R}_{\geq 0}^{m\times n}$ where
\begin{enumerate}
\item Each agent receives a private bundle in her demand set: $y_i \in D_i(p_i)$.
\item The market clears: there exists an outcome $\z = (z^1...z^m) \in [0,1]^m$ where for every issue $j \in M$, all of the following hold:
	\begin{enumerate}
	\item $z^{j,0} + z^{j,1} = 1$
	\item For all $i\in N$, $y_{ij} \leq z^{j, a_{ij}}$. If $p_{ij} > 0$, then $y_{ij} = z^{j,a_{ij}}$.
	\end{enumerate}\label{condition:market-clears}
\end{enumerate}

The market clearing condition (Condition~\ref{condition:market-clears}) is different than in traditional private goods markets. Instead of the sum of the agent's demands being equal to the supply, the condition here is that there is a single outcome that is consistent with every agent's demand. Roughly speaking, this means that whenever two agents agree on an issue, they demand the same quantity of that issue, and whenever two agents disagree, the sum of their demands equals the supply. This can be thought of as all agents buying the ``same" private bundle, modulo their preferred alternatives. 

At equilibrium, $\z$ is treated as the outcome of the public decisions instance. However, $\z$ may not be unique: if $y_{ij} < z^{j, a_{ij}}$ for some $i,j$, there may be multiple outcomes compatible with the list of agent demands. The following proposition shows that all outcomes compatible with $\y$ are more or less the same.


\begin{proposition}\label{prop:same-utility}
Let $(\y, p)$ be a PME. Then for any outcome $\z$ satisfying the market clearing condition, $u_i(\z) = u_i(y_i)$ for all $i\in N$.
\end{proposition}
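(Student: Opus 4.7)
The plan is to prove the two inequalities $u_i(y_i) \le u_i(\z)$ and $u_i(\z) \le u_i(y_i)$ separately, using monotonicity of $u_i$ for the first and the demand-maximality of $y_i$ for the second. Recall that by definition $u_i(\z) = u_i(x_i)$ where $x_{ij} = z^{j, a_{ij}}$, so it suffices to compare the two bundles $y_i$ and $x_i$ coordinatewise.

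First I would extract from the market-clearing condition the two facts: (i) $y_{ij} \le x_{ij}$ for every $j$; and (ii) whenever $y_{ij} < x_{ij}$ strictly, we must have $p_{ij} = 0$ (otherwise the conditional in the market-clearing definition would force $y_{ij} = x_{ij}$). Monotonicity of $u_i$ together with (i) immediately gives $u_i(y_i) \le u_i(x_i) = u_i(\z)$, which is one direction.

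For the reverse direction, I would show that $x_i$ is affordable under agent $i$'s prices $p_i$, and then invoke $y_i \in D_i(p)$. Affordability follows from a direct cost computation:
\[
p_i \cdot x_i \;=\; \sum_{j:\, p_{ij} > 0} p_{ij} x_{ij} + \sum_{j:\, p_{ij} = 0} p_{ij} x_{ij} \;=\; \sum_{j:\, p_{ij} > 0} p_{ij} y_{ij} \;=\; p_i \cdot y_i \;\le\; B_i,
\]
where the middle equality uses fact (ii) above. Since $x_i$ lies in agent $i$'s budget set and $y_i \in D_i(p)$, we get $u_i(x_i) \le u_i(y_i)$. Combining the two inequalities yields $u_i(\z) = u_i(x_i) = u_i(y_i)$, as required.

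The argument is essentially routine; the only subtlety is the use of fact (ii), namely that strict slack between $y_{ij}$ and $x_{ij}$ can only happen on zero-priced coordinates. This is exactly what makes the two bundles cost-equivalent, which in turn lets the optimality of $y_i$ in the demand set transfer to $x_i$. No appeal to concavity or homogeneity of $u_i$ is needed beyond monotonicity and the definition of the demand set.
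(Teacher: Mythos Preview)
Your proof is correct and follows essentially the same approach as the paper: show that the public bundle $x_i$ has the same cost as $y_i$ (because the coordinates where they differ have price zero), deduce affordability, and invoke demand-set optimality. If anything, you are slightly more thorough, since you explicitly supply the monotonicity inequality $u_i(y_i)\le u_i(x_i)$, which the paper leaves implicit.
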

\begin{proof}
Fix some agent $i \in N$, and let $y'_i$ be the private bundle where $y'_{ij} = z^{j, a_{ij}}$ for all $j \in M$. For every issue $j$ where $y_{ij} \neq z^{j, a_{ij}}$, we have $p_{ij} = 0$. Thus $y_i$ and $y'_i$ have the same cost. Since $y_i$ is in agent $i$'s demand set, $y_i$ is affordable. Thus $y'_i$ is also affordable. Suppose $u_i(\z) = u_i(y'_i) > u_i(y_i)$: then $y_i$ would not be in agent $i$'s demand set, which is a contradiction.
\end{proof}

Since each agent's private and public bundles are the same at equilibrium in this model, we mostly omit ``private" and ``public" and just use the term ``bundle". We reserve $\z$ for denoting the overall outcome of the PDM, and just use $y_i$ to denote agent $i$'s bundle.

\subsection{Formal analysis of pairwise issue expansion}\label{sec:reduction-proofs}

We begin with Lemma~\ref{lem:reduction-cost}, which states that as long as agents only buy their pairwise goods, the cost of a bundle $y_i \sim R(\Gamma)$ at prices $p$ is the same as the cost of bundle $\Rfrom(y_i)\sim \Gamma$ at prices $\Rfrom(p)$. The proof primary consists of arithmetic and substituting definitions.

\begin{lemma}\label{lem:reduction-cost}
Given prices $p \sim R(\Gamma)$ and a bundle $y_i \sim R(\Gamma)$,
\begin{enumerate}
\item $\Rfrom(y_i) \cdot \Rfrom(p)_i \leq y_i \cdot p$
\item Suppose that (1) for any $j \in M$ and $k,k' \in N\backslash \{i\}$ where $y_{i(kk'j)} \neq 0$, we have $p_{(kk'j)} = 0$, and (2) for any $j \in M$ and $k\in N$ where $y_{i(ikj)} \neq \min\limits_{\substack{k \in N:\\ a_{ij} \neq a_{kj}}} y_{i (ikj)}$, we have $p_{(ikj)} = 0$. Then $\Rfrom(y_i) \cdot \Rfrom(p)_i = y_i \cdot p$.
\end{enumerate}
\end{lemma}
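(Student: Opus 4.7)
The plan is to substitute the definitions of $\Rfrom(y_i)$ and $\Rfrom(p)_i$ directly, then compare the resulting expression term-by-term against the expansion of $y_i \cdot p$.

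First I would write out both sides. On the left,
\[
\Rfrom(y_i) \cdot \Rfrom(p)_i = \sum_{j \in M} \Rfrom(y_i)_j \cdot \Rfrom(p)_{ij} = \sum_{j \in M} \Bigl(\min_{k : a_{ij} \neq a_{kj}} y_{i(ikj)}\Bigr)\Bigl(\sum_{k : a_{ij} \neq a_{kj}} p_{(ikj)}\Bigr),
\]
which, since the minimum does not depend on the outer summation index, distributes to
\[
\sum_{j \in M}\sum_{k: a_{ij} \neq a_{kj}} \Bigl(\min_{k' : a_{ij} \neq a_{k'j}} y_{i(ik'j)}\Bigr) p_{(ikj)}.
\]
On the right, I would partition the goods of $R(M)$ by whether $i$ is one of the two indices:
\[
y_i \cdot p = \sum_{j \in M} \sum_{k : a_{ij} \neq a_{kj}} y_{i(ikj)} p_{(ikj)} \;+\; \sum_{j \in M} \sum_{\substack{(k,k'):\, i\notin\{k,k'\},\\ a_{kj}\neq a_{k'j}}} y_{i(kk'j)} p_{(kk'j)}.
\]

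For part 1, the second sum on the right is a sum of nonnegative terms and can be dropped. Comparing the first sum on the right against the LHS term-by-term, for each $(i,k,j)$ in the pairwise-of-$i$ index set we have $\min_{k'} y_{i(ik'j)} \leq y_{i(ikj)}$, and prices are nonnegative, so each LHS summand is at most the corresponding RHS summand. Thus $\Rfrom(y_i)\cdot\Rfrom(p)_i \leq y_i\cdot p$.

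For part 2, the two conditions handle exactly the two inequalities used above. Condition (1) forces every term in the non-pairwise sum to vanish (either $y$ or $p$ is zero on those goods). Condition (2) ensures that whenever $y_{i(ikj)}$ strictly exceeds the per-issue minimum, the corresponding price $p_{(ikj)}$ is zero, so the term-by-term comparison in the pairwise part holds with equality. Combining these two observations gives $\Rfrom(y_i)\cdot\Rfrom(p)_i = y_i\cdot p$.

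I do not expect any substantial obstacle; the argument is essentially bookkeeping. The only subtlety to be careful about is the notational convention that $(k,k',j)$ and $(k',k,j)$ refer to the same good, so the unordered double-sum on the RHS should be indexed consistently with the ordered pairwise sum that comes out of $\Rfrom(p)_{ij}$; I would state this explicitly before splitting the sum to avoid double-counting.
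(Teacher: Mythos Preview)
Your proposal is correct and follows essentially the same argument as the paper: expand both expressions via the definitions, split $y_i\cdot p$ into the pairwise-of-$i$ terms and the rest, drop the nonnegative non-pairwise part and bound each pairwise term by the per-issue minimum, then observe that conditions (1) and (2) turn those two inequalities into equalities. The paper presents this as a single chain from $y_i\cdot p$ down to $\Rfrom(y_i)\cdot\Rfrom(p)_i$ rather than expanding both sides separately, but the content is identical, and your remark about handling the unordered-pair convention is a reasonable point of care.
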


\begin{proof}
Suppose $y_i$ is a bundle in $R(\Gamma)$. The cost of $y_i$ at prices $p$ is 
\begin{align*}
y_i \cdot p =&\ \sum\limits_{\ell \in R(M)} y_{i\ell}p_\ell &\ \text{(by definition)}\\
=&\ \sum\limits_{j \in M} \sum\limits_{\substack{k,k' \in N:\\ a_{kj} \neq a_{k'j}}} y_{i(kk'j)}p_{(kk'j)}&\ \text{(rewriting each good $\ell \in R(M)$ as a triple $(i,k,j)$)}\\
\geq&\ \sum\limits_{j \in M} \sum\limits_{\substack{k \in N:\\ a_{ij} \neq a_{kj}}} y_{i(ikj)}p_{(ikj)} &\ \text{(only including agent $i$'s pairwise goods in the sum)}\\
\geq&\ \sum\limits_{j \in M} \sum\limits_{\substack{k \in N:\\ a_{ij} \neq a_{kj}}} p_{(ikj)}
\min\limits_{\substack{k' \in N:\\ a_{ij} \neq a_{k'j}}} y_{i (ik'j)}&\ \text{(replacing each $y_{i(ikj)}$ with $\min\limits_{\substack{k' \in N:\\ a_{ij} \neq a_{k'j}}} y_{i (ik'j)}$)}\\
=&\ \sum\limits_{j \in M} \sum\limits_{\substack{k \in N:\\ a_{ij} \neq a_{kj}}} p_{(ikj)}\Rfrom(y_i)_j &\ \text{(by definition)}\\
=&\ \sum\limits_{j \in M}\Rfrom(y_i)_j \sum\limits_{\substack{k \in N:\\ a_{ij} \neq a_{kj}}} p_{(ikj)} &\ \text{(moving $\Rfrom(y_i)_j$ out of the inner sum)}\\
=&\ \sum\limits_{j \in M}\Rfrom(y_i)_j \Rfrom(p)_{ij} &\ \text{(by definition)}\\
=&\ \Rfrom(y_i) \cdot \Rfrom(p)_i &\ \text{(by definition)}
\end{align*}
Furthermore, the first inequality holds with equality if for any $j \in M$ and $k,k' \in N\backslash \{i\}$ where $y_{i(kk'j)} \neq 0$, $p_{(kk'j)} = 0$. Similarly, the second inequality holds with equality if for any $j \in M$ and $k\in N$ where $y_{i(ikj)} \neq \min\limits_{\substack{k \in N:\\ a_{ij} \neq a_{kj}}} y_{i (ikj)}$, $p_{(ikj)} = 0$. Therefore under those two assumptions, $\Rfrom(y_i) \cdot \Rfrom(p)_i = y_i \cdot p$.
\end{proof}

Lemma~\ref{lem:reduction-forward-cost} is a simple application of Lemma~\ref{lem:reduction-cost}.
\begin{lemma}\label{lem:reduction-forward-cost}
For prices $p \sim R(\Gamma)$ and a bundle $y_i \sim \Gamma$, we have $y_i \cdot \Rfrom(p)_i = R(y_i) \cdot p$.
\end{lemma}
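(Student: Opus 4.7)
The plan is to apply Lemma~\ref{lem:reduction-cost}(2) to the bundle $R(y_i) \sim R(\Gamma)$ and exploit the fact that the ``forward'' map $R$ produces bundles with a very rigid structure, so both hypotheses of part (2) hold trivially.

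First, I would verify that $R(y_i)$ satisfies the two conditions of Lemma~\ref{lem:reduction-cost}(2). For condition (1), note that by the definition of $R$, we have $R(y_i)_{(kk'j)} = 0$ whenever $i \notin \{k,k'\}$. Hence for any such triple $(k,k',j)$, the quantity purchased is zero, which vacuously satisfies the requirement that if it were nonzero, the corresponding price would be zero. For condition (2), observe that for every $j \in M$ and every $k \in N$ with $a_{ij} \neq a_{kj}$, we have $R(y_i)_{(ikj)} = y_{ij}$; that is, the quantity is the same constant $y_{ij}$ across all such $k$. Therefore $R(y_i)_{(ikj)} = \min_{k' : a_{ij} \neq a_{k'j}} R(y_i)_{(ik'j)}$ for every such $k$, so the hypothesis of condition (2) is vacuously satisfied.

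Next, I would use the identity $\Rfrom(R(y_i)) = y_i$, which holds for any $y_i \sim \Gamma$ (this is noted in Section~\ref{sec:reduction-setup}, immediately after the definition of $u_i^R$; it follows directly by substituting the definitions of $R$ and $\Rfrom$, since $\Rfrom(R(y_i))_j = \min_{k : a_{ij}\neq a_{kj}} R(y_i)_{(ikj)} = y_{ij}$).

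Combining these two steps, Lemma~\ref{lem:reduction-cost}(2) applied to $R(y_i)$ and $p$ gives
\[
\Rfrom(R(y_i)) \cdot \Rfrom(p)_i \;=\; R(y_i) \cdot p,
\]
and the left-hand side equals $y_i \cdot \Rfrom(p)_i$ by the identity above. This yields the desired equality. There is no real obstacle here — the work is essentially bookkeeping to check that the rigid structure of $R(y_i)$ meets the two hypotheses of part (2) of the preceding lemma, after which the conclusion is immediate.
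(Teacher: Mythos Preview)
Your proof is correct and follows essentially the same approach as the paper: verify that the structure of $R(y_i)$ makes both hypotheses of Lemma~\ref{lem:reduction-cost}(2) vacuously true, then apply that lemma together with the identity $\Rfrom(R(y_i)) = y_i$. The paper's version is simply a terser statement of the same argument.
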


\begin{proof}
By definition of $R(y_i)$, we have (1) $y_{i(kk'j)} = 0$ for all $j \in M$ and $k,k' \in N\backslash \{i\}$, and (2) $y_{i(ikj)} = \min\limits_{\substack{k \in N:\\ a_{ij} \neq a_{kj}}} y_{i(ikj)}$ for all $j \in M$ and $k\in N$. Then by Lemma~\ref{lem:reduction-cost}, $y_i \cdot \Rfrom(p)_i = R(y_i) \cdot p$.
\end{proof}

Lemma~\ref{lem:reduction-demand-pairwise} states that if a bundle $y_i \sim R(\Gamma)$ is agent $i$'s demand set $D_i^R(p)$, then (1) $y_i$ contains only agent $i$'s pairwise goods, and (2) for a fixed issue $j$, $y_i$ contains the same amount of each of her pairwise goods. The proof is based on the informal argument given before: violating either (1) or (2) wastes money that could be spend to increase her utility.

\begin{lemma}\label{lem:reduction-demand-pairwise}
Given prices $p\sim R(\Gamma)$, suppose a bundle $y_i  \sim R(\Gamma)$ is in $D_i^R(p)$. Then (1) for any $j \in M$ and $k,k' \in N\backslash \{i\}$ where $y_{i(kk'j)} \neq 0$, we have $p_{(kk'j)} = 0$, and (2) for any $j \in M$ and $k\in N$ where $y_{i(ikj)} \neq \min\limits_{\substack{k \in N:\\ a_{ij} \neq a_{kj}}} y_{i (ikj)}$, we have $p_{(ikj)} = 0$.
\end{lemma}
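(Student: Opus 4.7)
The plan is to prove both claims by contradiction. Suppose $y_i \in D_i^R(p)$ but one of the two conclusions fails; in each case I will construct a bundle $y_i'$ with $u_i^R(y_i') = u_i^R(y_i)$ and $y_i' \cdot p < B_i$, and then convert the resulting budget slack into a strictly preferred affordable bundle, contradicting $y_i \in D_i^R(p)$.

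For part (1), suppose there exist $j \in M$ and $k,k' \in N \setminus \{i\}$ with $y_{i(kk'j)} > 0$ and $p_{(kk'j)} > 0$. Define $y_i'$ by zeroing out just this coordinate: $y_{i(kk'j)}' = 0$ and $y_{i\ell}' = y_{i\ell}$ elsewhere. Observe that $\Rfrom(y_i)_\ell = \min_{k'' : a_{i\ell} \neq a_{k''\ell}} y_{i(ik''\ell)}$ depends only on triples of the form $(i,k'',\ell)$; the coordinate $(k,k',j)$ with $i \notin \{k,k'\}$ does not appear. Hence $\Rfrom(y_i') = \Rfrom(y_i)$, so $u_i^R(y_i') = u_i(\Rfrom(y_i')) = u_i(\Rfrom(y_i)) = u_i^R(y_i)$, while $y_i' \cdot p = y_i \cdot p - y_{i(kk'j)} p_{(kk'j)} < y_i \cdot p \leq B_i$. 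For part (2), suppose there exist $j \in M$ and $k \in N$ with $a_{ij} \neq a_{kj}$, $y_{i(ikj)} > \min_{k' : a_{ij} \neq a_{k'j}} y_{i(ik'j)}$, and $p_{(ikj)} > 0$. Define $y_i'$ by setting $y_{i(ikj)}' = \min_{k'} y_{i(ik'j)}$ and leaving all other coordinates unchanged. A short check shows $\min_{k'} y_{i(ik'j)}' = \min_{k'} y_{i(ik'j)} = \Rfrom(y_i)_j$, and the minima for issues $\ell \neq j$ are unchanged, so again $\Rfrom(y_i') = \Rfrom(y_i)$ and $u_i^R(y_i') = u_i^R(y_i)$, while $y_i' \cdot p < y_i \cdot p \leq B_i$.

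It remains to turn the strict slack $B_i - y_i' \cdot p > 0$ into a strict utility improvement. If $u_i^R(y_i') > 0$, take $y_i'' = (1+c) y_i'$ for small $c > 0$ with $(1+c) y_i' \cdot p \leq B_i$; homogeneity of degree 1 gives $u_i^R(y_i'') = (1+c) u_i^R(y_i') > u_i^R(y_i)$. Otherwise $u_i^R(y_i') = 0$, and I use non-constancy of $u_i$ to pick a bundle $x_i \sim \Gamma$ with $u_i(x_i) > 0$; the witness $\tilde z = R(x_i)$ satisfies $u_i^R(\tilde z) = u_i(x_i) > 0$. Consider $y_i'' = y_i' + \lambda \tilde z$. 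By monotonicity and homogeneity, $u_i^R(y_i'') \geq u_i^R(\lambda \tilde z) = \lambda u_i^R(\tilde z) > 0$. Affordability needs $\lambda \tilde z \cdot p \leq B_i - y_i' \cdot p$, which has a positive solution when $\tilde z \cdot p > 0$, and is automatic for any $\lambda$ when $\tilde z \cdot p = 0$. Either way $y_i''$ is affordable with $u_i^R(y_i'') > 0 = u_i^R(y_i)$, contradicting $y_i \in D_i^R(p)$.

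The main obstacle is the edge case $u_i^R(y_i) = 0$ in the final step: one must invoke the non-constancy and homogeneity axioms together with the implicit assumption $B_i > 0$ (and finiteness of $p$) to guarantee that some strictly positive-utility bundle is affordable. The rest of the argument is essentially a ``money-pump'' observation: spending on coordinates that do not contribute to the nested-Leontief minima is wasteful, so a utility-maximizing buyer will avoid it whenever such coordinates carry positive price.
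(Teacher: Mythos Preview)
Your proof is correct and follows essentially the same strategy as the paper's: assume the conclusion fails, zero out (or lower) the offending coordinate to obtain a bundle $y_i'$ with the same $\Rfrom$ and hence the same utility but strictly lower cost, and then convert the budget slack into a strict utility improvement. The only substantive difference is in that last step: the paper redistributes the saved money uniformly across all goods, setting $y_{i\ell}'' = y_{i\ell}' + \frac{\text{savings}}{\sum_{\ell'} p_{\ell'}}$, and then argues via a scaling factor $\alpha = \min_\ell \alpha_\ell$ that $u_i^R(y_i'') \geq \alpha\, u_i^R(y_i') > u_i^R(y_i')$. Your version instead scales $y_i'$ multiplicatively when $u_i^R(y_i') > 0$ and explicitly handles the degenerate case $u_i^R(y_i') = 0$ by adding a small multiple of a positive-utility witness. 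Your treatment is in fact more careful: the paper's argument as written tacitly needs $u_i^R(y_i') > 0$ for the strict inequality $\alpha\, u_i^R(y_i') > u_i^R(y_i')$, and the ``$\alpha_\ell > 1$'' line is awkward at coordinates where $y_{i\ell}' = 0$ (which always occurs in case~(1)). Both are easy to patch, but you have already done so.
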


\begin{proof}
First, suppose for sake of contradiction that there exists $j \in M$ and $k, k' \in N\backslash\{i\}$ where $p_{(kk'j)} > 0$ and $y_{i(kk'j)} > 0$. Consider the bundle $y_i' \sim R(\Gamma)$ which is identical to $y_i$, except that $y'_{i(kk'j)} = 0$. Since $\Rfrom(y_i) = \Rfrom(y_i')$, we have $u_i^R(y_i) = u_i^R(y_i')$. But since $p_j > 0$, the $y_i \cdot p - y_i'\cdot p = y_{i(kk'j)}p_{(kk'j)}$. Consider the bundle $y''_i \sim R(\Gamma)$ where for all $\ell \in R(M)$,
\[
y''_{i\ell} = y'_{i\ell} + \frac{y_{i(kk'j)}p_{(kk'j)}}{\sum_{\ell' \in R(M)} p_{\ell '}}
\]
Then we have $y''_i \cdot p = y_i' \cdot p + y_{i(kk'j)}p_{(kk'j)} = y_i \cdot p$. Since $y_i \in D_i^R(p)$, $y_i$ is affordable at prices $p$. Thus $y''_i$ is affordable at prices $p$. 

Finally, we show that $u_i^R(y''_i) > u_i^R(y_i)$. We have $y''_{i\ell} > y'_{i\ell}$ for all $\ell \in R(M)$. Thus for all $\ell$, there exists a constant $\alpha_\ell > 1$ where $y''_{i\ell} = \alpha_\ell y'_{i\ell}$. Let $\alpha = \min_{\ell \in R(M)} \alpha_\ell$. Then $y''_{i\ell} \geq \alpha y'_{i\ell}$ for all $\ell \in R(M)$. Because $u_i^R$ is homogenous of degree 1 and monotone, we have $u_i^R(y''_{i\ell}) \geq \alpha \cdot u_i^R(y'_{i\ell}) > u_i^R(y'_{i\ell}) = u_i^R(y_i)$.

Thus we have $u_i^R(y''_{i\ell}) > u_i^R(y_{i\ell})$ and $y''_i \cdot p = y_i' \cdot p$. But then $y_i$ cannot be in agent $i$'s demand set, which is a contradiction.

The second case is similar. Suppose that there exists $j \in M$ and $k \in N$ where $p_{(ikj)} > 0$ and $y_{i(ikj)} > \min\limits_{\substack{k' \in N:\\ a_{ij} \neq a_{k'j}}} y_{i (ik'j)}$. Define the bundle $y'_i \sim R(\Gamma)$ to be identical to $y_i$, except that $y_{i(ikj)} = \min\limits_{\substack{k' \in N:\\ a_{ij} \neq a_{k'j}}} y_{i (ik'j)}$. Define the bundle $y''_i \sim R(\Gamma)$ by 
\[
y''_{i\ell} = y'_{i\ell} + \frac{\Big(y_{i(ikj)} - \min\limits_{\substack{k' \in N:\\ a_{ij} \neq a_{k'j}}} y_{i (ik'j)}\Big)p_{(ikj)}}{\sum_{\ell' \in R(M)} p_{\ell '}}
\]
Then $u_i^R(y''_i) > u_i^R(y'_i) = u_i^R(y_i)$, and $y''_i \cdot p = y_i \cdot p$. Thus $y_i$ cannot be in agent $i$'s demand set, which is a contradiction.
\end{proof}

Lemma~\ref{lem:reduction-demand-helper} is a straightforward combination of the previous two lemmas.
\begin{lemma}\label{lem:reduction-demand-helper}
Given prices $p\sim R(\Gamma)$, suppose a bundle $y_i  \sim R(\Gamma)$ is in $D_i^R(p)$. Then $\Rfrom(y_i) \cdot \Rfrom(p)_i = y_i \cdot p$.
\end{lemma}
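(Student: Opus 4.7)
The plan is to chain together the two preceding lemmas in the natural way suggested by the lemma's statement. Since $y_i \in D_i^R(p)$, I would first invoke Lemma~\ref{lem:reduction-demand-pairwise} to conclude that $y_i$ satisfies both of the ``no wasted spending'' conditions: (1) for any $j \in M$ and $k,k' \in N\setminus\{i\}$ with $y_{i(kk'j)} \neq 0$, the price $p_{(kk'j)} = 0$ (agent $i$ only pays for her own pairwise goods), and (2) for any $j\in M$ and $k \in N$ with $y_{i(ikj)}$ strictly larger than the minimum $\min_{k' : a_{ij} \ne a_{k'j}} y_{i(ik'j)}$, the price $p_{(ikj)} = 0$ (agent $i$ only pays for her pairwise goods on issue $j$ up to the bottleneck quantity).

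Once these two hypotheses are in hand, I would apply part~(2) of Lemma~\ref{lem:reduction-cost}, which is exactly tailored to deliver $\Rfrom(y_i) \cdot \Rfrom(p)_i = y_i \cdot p$ under precisely these two conditions. Composing the two implications gives the desired equality, so no further calculation is required.

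There is no real obstacle here; the lemma is a direct corollary, and its role is purely bookkeeping: it packages the conclusion we actually want to use downstream (that at a demanded bundle, the ``reduced'' cost in $\Gamma$ equals the ``expanded'' cost in $R(\Gamma)$, not merely an upper bound) so that budget balance transfers cleanly between the two markets when proving the equilibrium correspondence in Theorem~\ref{thm:reduction-eq}. The only thing to be careful about is to cite part~(2) of Lemma~\ref{lem:reduction-cost} rather than part~(1), since part~(1) only gives the inequality $\Rfrom(y_i)\cdot \Rfrom(p)_i \leq y_i \cdot p$.
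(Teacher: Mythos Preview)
Your proposal is correct and matches the paper's proof essentially verbatim: invoke Lemma~\ref{lem:reduction-demand-pairwise} to obtain the two ``no wasted spending'' conditions, then apply part~(2) of Lemma~\ref{lem:reduction-cost} to conclude the cost equality.
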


\begin{proof}
By Lemma~\ref{lem:reduction-demand-pairwise}, we have (1) for any $j \in M$ and $k,k' \in N\backslash \{i\}$ where $y_{i(kk'j)} \neq 0$, we have $p_{(kk'j)} = 0$, and (2) for any $j \in M$ and $k\in N$ where $y_{i(ikj)} \neq \min\limits_{\substack{k \in N:\\ a_{ij} \neq a_{kj}}} y_{i (ikj)}$, we have $p_{(ikj)} = 0$. Therefore by Lemma~\ref{lem:reduction-cost}, we have $\Rfrom(y_i) \cdot \Rfrom(p)_i = y_i \cdot p$.
\end{proof}

Lemma~\ref{lem:reduction-demand} states that $y_i$ is in agent $i$'s demand set in $R(\Gamma)$ if and only if $\Rfrom(y_i)$ is in agent $i$'s demand set in $\Gamma$. This will not only play an important role in the proof of Theorem~\ref{thm:reduction-eq}, but also later on in t\^{a}tonnement.

The majority of the proof of Lemma~\ref{lem:reduction-demand} is devoted to proving that
\[
\max\limits_{\substack{y_i' \sim \Gamma:\\ y_i' \cdot \Rfrom(p)_i \leq B_i}} u_i(y_i')
= \max\limits_{\substack{y_i'\sim R(\Gamma): \\ y'_i \cdot p \leq B_i}} u_i^R(y_i')
\]

The intuitive argument for the above equality is that the utilities and costs of bundles are the same in both $\Gamma$ and $R(\Gamma)$. Slightly more formally, for any bundle $y_i \sim \Gamma$, $R(y_i) \sim R(\Gamma)$ has the same utility (by definition) and the same cost (by Lemma~\ref{lem:reduction-forward-cost}). For any bundle $y_i \sim R(\Gamma)$, $u_i^R(y_i) = u_i(\Rfrom(y_i))$ is also true by definition, but $y_i$ and $\Rfrom(y_i)$ do not necessarily have the same cost. That is where Lemma~\ref{lem:reduction-demand-helper} will be important.

\begin{lemma}\label{lem:reduction-demand}
Given prices $p \sim R(\Gamma)$ and a bundle $y_i \sim R(\Gamma)$, $y_i \in D_i^R(p)$ if and only if $\Rfrom(y_i) \in D_i(\Rfrom(p))$.
\end{lemma}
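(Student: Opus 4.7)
The plan is to reduce the biconditional to the equality
\[
\max\{u_i(y_i') : y_i' \sim \Gamma,\ y_i' \cdot \Rfrom(p)_i \leq B_i\} \;=\; \max\{u_i^R(y_i') : y_i' \sim R(\Gamma),\ y_i' \cdot p \leq B_i\},
\]
and then close both directions by invoking the cost identities from Lemmas~\ref{lem:reduction-cost} and~\ref{lem:reduction-demand-helper}. The bulk of the work will go into comparing the two constrained optimization problems via the maps $R$ and $\Rfrom$.

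For the $\leq$ direction of the max equality, given any affordable $y_i' \sim \Gamma$ I would push it forward to $R(y_i') \sim R(\Gamma)$: Lemma~\ref{lem:reduction-forward-cost} gives $R(y_i') \cdot p = y_i' \cdot \Rfrom(p)_i \leq B_i$, and by the definition of $R$ we have $\Rfrom(R(y_i')) = y_i'$, hence $u_i^R(R(y_i')) = u_i(\Rfrom(R(y_i'))) = u_i(y_i')$. For the $\geq$ direction, given any affordable $y_i' \sim R(\Gamma)$ I would pull back to $\Rfrom(y_i') \sim \Gamma$: Lemma~\ref{lem:reduction-cost}(1) gives $\Rfrom(y_i') \cdot \Rfrom(p)_i \leq y_i' \cdot p \leq B_i$, and $u_i(\Rfrom(y_i')) = u_i^R(y_i')$ by definition of $u_i^R$. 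Combining the two yields the claimed equality.

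From the equality of maxima, the forward direction $y_i \in D_i^R(p) \Rightarrow \Rfrom(y_i) \in D_i(\Rfrom(p))$ is immediate: Lemma~\ref{lem:reduction-demand-helper} guarantees $\Rfrom(y_i) \cdot \Rfrom(p)_i = y_i \cdot p \leq B_i$, so $\Rfrom(y_i)$ is affordable in $\Gamma$, and $u_i(\Rfrom(y_i)) = u_i^R(y_i)$ already attains the common maximum. For the reverse direction, supposing $\Rfrom(y_i) \in D_i(\Rfrom(p))$, we immediately get $u_i^R(y_i) = u_i(\Rfrom(y_i))$ equal to the joint maximum. The main obstacle is verifying that $y_i$ itself is affordable under $p$, since Lemma~\ref{lem:reduction-cost}(1) only gives the one-sided inequality $\Rfrom(y_i) \cdot \Rfrom(p)_i \leq y_i \cdot p$. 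I expect to close this by invoking Lemma~\ref{lem:reduction-demand-pairwise}: any excess of $y_i \cdot p$ over $\Rfrom(y_i) \cdot \Rfrom(p)_i$ must come from positive mass on non-pairwise goods or from pairwise coordinates strictly exceeding the per-issue minimum, both of which contribute nothing to $u_i^R(y_i)$; at a maximizer these excesses can only exist on coordinates with zero price, in which case the two costs agree and affordability transfers from $\Rfrom(y_i)$ to $y_i$.
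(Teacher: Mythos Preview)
Your argument for the equality of the two constrained maxima and for the forward implication is correct and matches the paper's approach closely; your $\geq$ direction is in fact cleaner than the paper's, since you use Lemma~\ref{lem:reduction-cost}(1) directly to show that $\Rfrom(y_i')$ is affordable for \emph{every} affordable $y_i' \sim R(\Gamma)$, whereas the paper detours through a specific element of $D_i^R(p)$ and Lemma~\ref{lem:reduction-demand-helper}.

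The reverse implication, however, has a genuine gap. Your plan to invoke Lemma~\ref{lem:reduction-demand-pairwise} is circular: that lemma's hypothesis is precisely $y_i \in D_i^R(p)$, which is the conclusion you are trying to establish. Your informal rephrasing (``at a maximizer these excesses can only exist on coordinates with zero price'') presupposes that $y_i$ is already an \emph{affordable} maximizer in $R(\Gamma)$; knowing merely that $u_i^R(y_i)$ attains the common maximum value does not control the cost $y_i \cdot p$.

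In fact the reverse direction as stated is false. Take three agents on one issue $j$ with $a_{1j}=0$, $a_{2j}=a_{3j}=1$, prices $p_{(12j)}=p_{(13j)}=1$, budget $B_1=1$, and $y_1=(y_{1(12j)},y_{1(13j)})=(0.5,\,10)$. Then $\Rfrom(y_1)_j=0.5$ and $\Rfrom(p)_{1j}=2$, so $\Rfrom(y_1)\cdot\Rfrom(p)_1=1$ and $\Rfrom(y_1)\in D_1(\Rfrom(p))$; but $y_1\cdot p=10.5>B_1$, so $y_1\notin D_1^R(p)$. The paper's own proof of this direction simply asserts $y_i=R(\Rfrom(y_i))$, which fails on exactly this example (and which the paper itself earlier flags as not holding in general). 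So the defect is inherited from the source rather than introduced by you; the correct statement of the reverse direction would need an additional hypothesis such as $y_i = R(\Rfrom(y_i))$, which is how it is effectively used downstream (cf.\ Corollary~\ref{cor:reduction-demand}).
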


\begin{proof}
Lemma~\ref{lem:reduction-forward-cost} states that $y'_i \cdot \Rfrom(p)_i = R(y'_i) \cdot p$ for any bundle $y'_i \sim \Gamma$. This implies the following set equivalence:
\[
\{y'_i\ |\ y'_i\sim \Gamma\ \text{and}\ y'_i \cdot \Rfrom(p)_i \leq B_i\}
= \{y'_i\ |\ y'_i \sim \Gamma\ \text{and}\ R(y'_i) \cdot p \leq B_i\}
\]
Next, recall that for any bundle $y'_i \sim \Gamma$, $\Rfrom(R(y'_i)) = y'_i$, so
\[
\{y'_i\ |\ y'_i\sim \Gamma\ \text{and}\ y'_i \cdot \Rfrom(p)_i \leq B_i\}
= \{\Rfrom(R(y'_i))\ |\ y'_i \sim \Gamma\ \text{and}\ R(y'_i) \cdot p \leq B_i\}
\]
For every bundle $y'_i \sim \Gamma$, $R(y_i')$ is a bundle in $R(\Gamma)$. Therefore we can replace $R(y'_i)$ with $y'_i$ and get
\[
\{\Rfrom(R(y'_i))\ |\ y'_i \sim \Gamma\ \text{and}\ R(y'_i) \cdot p \leq B_i\}
\subseteq \{\Rfrom(y'_i)\ |\ y'_i \sim R(\Gamma)\ \text{and}\ y'_i \cdot p \leq B_i\}
\]
Note that the relationship is now subset instead of equality. This is because there may be some $y'_i \sim R(\Gamma)$ that does not equal $R(y''_i)$ for any $y''_i \sim \Gamma$. Combining this subset relationship with the previous equality gives us
\[
\{y'_i\ |\ y'_i\sim \Gamma\ \text{and}\ y'_i \cdot \Rfrom(p)_i \leq B_i\}
\subseteq \{\Rfrom(y'_i)\ |\ y'_i \sim R(\Gamma)\ \text{and}\ y'_i \cdot p \leq B_i\}
\]
Since $u_i(\Rfrom(y'_i)) = u_i^R(y'_i)$ by definition, we have
\[
\{u_i(y'_i)\ |\ y'_i\sim \Gamma\ \text{and}\ y'_i \cdot \Rfrom(p)_i \leq B_i\}
\subseteq \{u_i^R(y'_i)\ |\ y'_i \sim R(\Gamma)\ \text{and}\ y'_i \cdot p \leq B_i\}
\]
Taking the max gives us
\[
\max\Big(\{u_i(y'_i)\ |\ y'_i\sim \Gamma\ \text{and}\ y'_i \cdot \Rfrom(p)_i \leq B_i\}\Big)
\leq
\max\Big(\{u_i^R(y'_i)\ |\ y'_i \sim R(\Gamma)\ \text{and}\ y'_i \cdot p \leq B_i\}\Big)
\]
which we can rewrite as
\[
\max\limits_{\substack{y_i' \sim \Gamma:\\ y_i' \cdot \Rfrom(p)_i \leq B_i}} u_i(y_i')
\leq \max\limits_{\substack{y_i'\sim R(\Gamma): \\ y'_i \cdot p \leq B_i}} u_i^R(y_i')
\]
Consider an arbitrary $y_i'' \in D_i^R(p)$: then
\[
u_i^R(y_i'') = \max\limits_{\substack{y_i'\sim R(\Gamma): \\ y'_i \cdot p \leq B_i}} u_i^R(y_i')
\]
and $y_i'' \cdot p \leq B_i$. Then by Lemma~\ref{lem:reduction-demand-helper}, $\Rfrom(y_i'') \cdot \Rfrom(p)_i = y_i'' \cdot p \leq B_i$. Since $\Rfrom(y_i'') \sim \Gamma$ and $\Rfrom(y_i'') \cdot \Rfrom(p)_i \leq B_i$, we have
\[
\max\limits_{\substack{y_i' \sim \Gamma:\\ y_i' \cdot \Rfrom(p)_i \leq B_i}} u_i(y_i')
\geq u_i(\Rfrom(y_i''))
\]
By definition, $u_i^R(y_i'') = u_i(\Rfrom(y_i''))$, so
\[
\max\limits_{\substack{y_i'\sim R(\Gamma): \\ y'_i \cdot p \leq B_i}} u_i^R(y_i')
= u_i^R(y_i'') = u_i(\Rfrom(y_i'')) \leq 
\max\limits_{\substack{y_i' \sim \Gamma:\\ y_i' \cdot \Rfrom(p)_i \leq B_i}} u_i(y_i')
\leq \max\limits_{\substack{y_i'\sim R(\Gamma): \\ y'_i \cdot p \leq B_i}} u_i^R(y_i')
\]
Therefore,
\[
\max\limits_{\substack{y_i' \sim \Gamma:\\ y_i' \cdot \Rfrom(p)_i \leq B_i}} u_i(y_i')
= \max\limits_{\substack{y_i'\sim R(\Gamma): \\ y'_i \cdot p \leq B_i}} u_i^R(y_i')
\]

Finally, suppose $y_i \in D_i^R(p)$: then $y_i \cdot p \leq B_i$, and by Lemma~\ref{lem:reduction-demand-helper} we have $\Rfrom(y_i) \cdot \Rfrom(p)_i = y_i \cdot p \leq B_i$. Also,
\[
u_i(\Rfrom(y_i)) = u_i^R(y_i) = 
\max\limits_{\substack{y_i'\sim R(\Gamma): \\ y'_i \cdot p \leq B_i}} u_i^R(y_i')
= \max\limits_{\substack{y_i' \sim \Gamma:\\ y_i' \cdot \Rfrom(p)_i \leq B_i}} u_i(y_i')
\]
so $\Rfrom(y_i) \in D_i(\Rfrom(p))$. Suppose $\Rfrom(y_i) \in D_i(\Rfrom(p))$: then $\Rfrom(y_i) \cdot \Rfrom(p)_i \leq B_i$. Since $y_i = R(\Rfrom(y_i))$, we have $y_i \cdot p = \Rfrom(y_i) \cdot \Rfrom(p)_i \leq B_i$. Also,
\[
u_i^R(y_i) 
= u_i(\Rfrom(y_i))
= \max\limits_{\substack{y_i' \sim \Gamma:\\ y_i' \cdot \Rfrom(p)_i \leq B_i}} u_i(y_i')
= \max\limits_{\substack{y_i'\sim R(\Gamma): \\ y'_i \cdot p \leq B_i}} u_i^R(y_i')
\]
Therefore $y_i \in D_i^R(p)$.

\end{proof}

Recall that for any bundle $y_i \sim \Gamma$, $\Rfrom(R(y_i)) = y_i$. Thus by Lemma~\ref{lem:reduction-demand}, $R(y_i) \in D_i^R(p)$ if and only if $y_i = \Rfrom(R(y_i)) \in D_i(p)$. This is expressed by Corollary~\ref{cor:reduction-demand}, which will be useful in Section~\ref{sec:tat} when considering \tat processes in the PDM.

\begin{lemcorollary}[of Lemma~\ref{lem:reduction-demand}]\label{cor:reduction-demand}
Given prices $p\sim R(\Gamma)$ and a bundle $y_i \sim \Gamma$, $y_i \in D_i(\Rfrom(p))$ if and only if $R(y_i) \in D_i^R(p)$.
\end{lemcorollary}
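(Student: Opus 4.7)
The plan is to derive this directly from Lemma~\ref{lem:reduction-demand} together with the identity $\Rfrom(R(y_i)) = y_i$ for every $y_i \sim \Gamma$, which is noted in the paragraph immediately preceding the corollary. Concretely, given $y_i \sim \Gamma$, I would form the bundle $R(y_i) \sim R(\Gamma)$ and instantiate Lemma~\ref{lem:reduction-demand} at this bundle. The lemma then asserts
\[
R(y_i) \in D_i^R(p) \iff \Rfrom(R(y_i)) \in D_i(\Rfrom(p)).
\]
Substituting $\Rfrom(R(y_i)) = y_i$ on the right-hand side turns this into exactly the claimed biconditional.

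So the proof is essentially a one-line substitution. The only thing worth double-checking is that $R(y_i)$ is indeed a legitimate input to Lemma~\ref{lem:reduction-demand}, i.e., a bundle in $R(\Gamma)$; this is immediate from the definition of $R$ on bundles of $\Gamma$, which sends $y_i \in \mathbb{R}_{\geq 0}^{|M|}$ to an element of $\mathbb{R}_{\geq 0}^{|R(M)|}$ by zero-padding the entries corresponding to goods not involving agent $i$. There is no real obstacle here, because all the work has already been done in Lemma~\ref{lem:reduction-demand}; the corollary is just the specialization obtained by restricting attention to bundles of the form $R(y_i)$, on which $R$ and $\Rfrom$ are inverses. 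I would keep the write-up to two or three sentences, with the substitution displayed once for clarity.
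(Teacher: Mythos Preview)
Your proposal is correct and is essentially identical to the paper's own argument: the paper also applies Lemma~\ref{lem:reduction-demand} to the bundle $R(y_i)\sim R(\Gamma)$ and then substitutes the identity $\Rfrom(R(y_i)) = y_i$ to obtain the biconditional.
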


\thmReductionEq*

\begin{proof}
$(\implies)$ Suppose $(\y, p)$ is a ME of the Fisher market $R(\Gamma)$: then $y_i \in D_i^R(p)$ for all $i \in N$, and $\sum\limits_{i \in N} y_{i\ell} = 1$ or $p_\ell = 0$ for all $\ell \in R(M)$. By Lemma~\ref{lem:reduction-demand}, we have $\Rfrom(y_i) \in D_i(\Rfrom(p))$.

We define $\x = (x^1...x^m) \in [0,1]^{m\times 2}$ as follows:
\begin{align}
x^{j,0} =&\ \max_{i\in N: a_{ij} = 0} \Rfrom(y_i)_j\label{eqn:xj0}\\\nonumber
x^{j,1} =&\ 1 - x^{j,0}
\end{align}
for all $j \in M$. We claim that for all $i \in N$ and $j \in M$, $\Rfrom(y_i)_j \leq x^{j,a_{ij}}$, and that $\Rfrom(y_i)_j = x^{j, a_{ij}}$ if $\Rfrom(p)_{ij} > 0$. 

We first show that $\Rfrom(y_i)_j \leq x^{j,a_{ij}}$ for all $i,j$. When $a_{ij} = 0$, this is true by definition, so assume $a_{ij} = 1$. Since $(\y, p)$ is a ME of $R(\Gamma)$, for any $\ell \in R(M)$, we have $\sum_{k' \in N} y_{k'\ell} \leq 1$. Thus for any $k \in N$ where $a_{ij} \neq a_{kj}$ (i.e. $a_{kj} = 0$), we have $\sum_{k' \in N} y_{k'(ikj)} \leq 1$.

Also, recall that by definition, $\Rfrom(y_i)_j = \min_{k\in N: a_{ij} \neq a_{kj}} y_{i(ikj)}$. Thus $\Rfrom(y_i)_j \leq y_{i(ikj)}$ for all $k$. Similarly, $\Rfrom(y_k)_j \leq y_{k(ikj)}$. Therefore
\begin{align}
\Rfrom(y_i)_j&\ + \Rfrom(y_k)_j \leq y_{i(ikj)} + y_{k(ikj)} \leq \sum\limits_{k' \in N} y_{k'(ikj)} \leq 1 \hspace{.2 in} \forall k \in N:\ a_{kj} = 0 \label{eqn:lessinitialline}\\
\Rfrom(y_i)_j&\ + \max\limits_{k \in N: a_{kj} = 0} \Rfrom(y_k)_j \leq 1\\
\Rfrom(y_i)_j&\ \leq 1 - \max\limits_{k \in N: a_{kj} = 0} \Rfrom(y_k)_j = x^{j,1} \label{eqn:lesslastline}
\end{align}
Thus $\Rfrom(y_i)_j \leq x^{j,a_{ij}}$ for all $i \in N$. It remains to show that $\Rfrom(y_i)_j = x^{j, a_{ij}}$ whenever $\Rfrom(p)_{ij} > 0$.

Suppose for sake of contradiction there exists $i \in N$ and $j \in M$ where $\Rfrom(y_i)_j < x^{j,a_{ij}}$ and $\Rfrom(p)_{ij} > 0$. Since $\Rfrom(y_i)_j = \min\limits_{\substack{k \in N: a_{ij} \neq a_{kj}}} y_{i(ikj)}$, there must exist $k \in N$ where $a_{ij} \neq a_{kj}$ and $y_{i(ikj)} < x^{j,a_{ij}}$. Since $\Rfrom(p)_{ij} = \sum\limits_{k' \in N: a_{ij} \neq a_{k'j}} p_{(ik'j)}$, there must exist $k'$ where $p_{(ik'j)} > 0$.

If $y_{i(ik'j)} > y_{i(ikj)}$, we have $p_{(ik'j)} = 0$ by Lemma~\ref{lem:reduction-demand-pairwise}. Thus assume $y_{i(ik'j)} \leq y_{i(ikj)} < x^{j,a_{ij}}$. We showed above that $\Rfrom(y_{k'})_j \leq x^{j,a_{k'j}} = 1 - x^{j,a_{ij}}$: thus $y_{k'(ik'j)} \leq 1-x^{j, a_{ij}}$. Therefore $y_{i(ik'j)} + y_{k'(ik'j}) < x^{j, a_{ij}} + 1 - x^{j, a_{ij}}  = 1$. If there exists $i' \not\in {i, k'}$ where $y_{i'(ik'j)} > 0$, then $p_{(ik'j)} = 0$, which is a contradiction. Therefore $\sum_{\ell \in R(M)} y_{i\ell} = y_{i(ik'j)} + y_{k'(ik'j}) < 1$.

But then by the definition of a ME, we have $p_{(ik'j)} = 0$, which is again a contradiction. Therefore $\Rfrom(y_i)_j = x^{j, a_{ij}}$ whenever $\Rfrom(p)_{ij} > 0$. This shows that $(\Rfrom(y), \Rfrom(p))$ is a PME of $\Gamma$.

$(\impliedby)$ Suppose $(\Rfrom(\y), \Rfrom(p))$ is a PME of the PDM $\Gamma$. Then $\Rfrom(y_i) \in D_i(\Rfrom(p))$ for all $i \in N$, so $y_i \in D_i^R(p)$ by Lemma~\ref{lem:reduction-demand}. Also, there exists $\x = (x^1...x^m) \in [0,1]^{m\times 2}$ where for all $i\in N$ and $j \in M$, 
\begin{enumerate}
\item $x^{j,0} + x^{j,1} = 1$
\item $\Rfrom(y_i)_j \leq x^{j, a_{ij}}$
\item $\Rfrom(y_i)_j = x^{j, a_{ij}}$ whenever $\Rfrom(p)_{ij} = 0$.
\end{enumerate}

It remains to show that for all $\ell \in R(M)$, either $\sum_{i\in N} y_{i\ell} = 1$ or $p_\ell = 0$. Suppose for sake of contradiction that there exists $\ell = (i,k,j) \in R(M)$ where $\sum_{k' \in N} y_{k'(ikj)} < 1$ and $p_{(ikj)} > 0$. If there exists $k' \not \in \{i,k\}$ where $y_{k'(ikj)} > 0$, then $p_{(ikj)} = 0$ by Lemma~\ref{lem:reduction-demand-pairwise}. Thus 
\[
\sum_{k' \in N} y_{k'(ikj)} = y_{i(ikj)} + y_{k(ikj)} < 1
\]
Furthermore, if either $y_{i(ikj)} \neq \Rfrom(y_i)_j$ or $y_{k(ikj)} \neq \Rfrom(y_k)_j$, we have $p_{(ikj)} = 0$ again by Lemma~\ref{lem:reduction-demand-pairwise}. Thus $y_{i(ikj)} = \Rfrom(y_i)_j$ and $y_{k(ikj)} = \Rfrom(y_k)_j$, so
\[
\Rfrom(y_i)_j +\Rfrom(y_k)_j < 1
\]

Recall that $\Rfrom(y_i)_j \leq x^{j, a_{ij}}$ and $\Rfrom(y_k)_j \leq x^{j, a_{kj}}$, and that $x^{j, a_{ij}} + x^{j, a_{kj}} = 1$ since $a_{ij} \neq a_{kj}$. Thus in order for $\Rfrom(y_i)_j +\Rfrom(y_k)_j < 1$ to be true, either $\Rfrom(y_i)_j < x^{j, a_{ij}}$ or $\Rfrom(y_k)_j < x^{j, a_{kj}}$. By symmetry, suppose $\Rfrom(y_i)_j < x^{j, a_{ij}}$ without loss of generality. Then because $(\Rfrom(\y), \Rfrom(p))$ is a PME, we have $\Rfrom(p)_{ij} = 0$.

By definition, $\Rfrom(p)_{ij} = \sum\limits_{\substack{k' \in N: a_{ij} \neq a_{k'j}}} p_{(ik'j)}$. Since $p_{(ik'j)} \geq 0$ for all $i,k',j$, we have $p_{(ik'j)}$ for all $k' \in N$ where $a_{ij} \neq a_{k'j}$. But then $p_{(ikj)} = 0$, which is a contradiction. Thus for all $\ell \in R(M)$, either $\sum_{i\in N} y_{i\ell} = 1$ or $p_\ell = 0$. Therefore $(\y, p)$ is a ME of $R(\Gamma)$.
\end{proof}

\thmReductionMax*

\begin{proof}
We first claim that $R(\x)$ is a valid allocation in $R(\Gamma)$, meaning that $\sum_{i \in N} R(x_i)_\ell \leq 1$ for all $\ell \in R(M)$. By definition of $R(x_i)$, $R(x_i)_{(kk'j)} = 0$ whenever $i \not\in \{k,k'\}$, and $R(x_i)_{(kk'j)} = x_{ij}$ whenever $i \in \{k,k'\}$. Therefore, for all $\ell \in R(M)$,
\[
\sum_{i \in N} R(x_i)_\ell = \sum_{i,k,k' \in N} R(x_i)_{(kk'j)}
= R(x_i)_{(ikj)} + R(x_k)_{(ikj)}  = x_{ij} + x_{kj}
\]
By definition of $R(M)$, the fact that good $(i,k,j)$ exists implies that $a_{ij} \neq a_{kj}$.  Since $\x$ is a valid outcome of $\Gamma$, for all $j\in M$ we must have $x_{ij} + x_{kj} \leq 1$ whenever $a_{ij} \neq a_{kj}$. Therefore $\sum_{i \in N} R(x_i)_\ell \leq 1$. Since this holds for all $\ell \in R(M)$, $R(\x)$ is a valid allocation in $R(\Gamma)$.

By definition of $\Rfrom$ and $u_i^R$, we have $u_i^R(x_i) = u_i(\Rfrom(x_i))$. Since $\Psi$ depends only on the agents' utilities, we have $\Psi(\mathbf{x'}) = \Psi(R(\mathbf{x'}))$ for any outcome $\mathbf{x'} \sim \Gamma$. Similarly, recall that $u_i(x_i') = u_i^R(R(x_i'))$ for any bundle $x_i' \sim \Gamma$, so $\Psi(\mathbf{x'}) = \Psi(\Rfrom(\mathbf{x'}))$ for any outcome $\mathbf{x'} \sim R(\Gamma)$. 

Thus for every possible outcome of $\Gamma$, there is an outcome of $R(\Gamma)$ which has the same value of $\Psi$, and for every possible outcome of $R(\Gamma)$, there is an outcome of $\Gamma$ which has the same value of $\Psi$. Therefore we have the numeral equality

\[
\max\limits_{\mathbf{x'} \sim \Gamma} \Psi(\mathbf{x'}) = \max\limits_{\mathbf{x'} \sim R(\Gamma)} \Psi(\mathbf{x'})
\]

Finally, because $\Psi(\x) = \Psi(R(\x))$, we have $\Psi(\x) \geq \alpha\cdot  \max\limits_{\mathbf{x'} \sim \Gamma} \Psi(\mathbf{x'})$ if and only if $\Psi(R(\x)) \geq \alpha \cdot \max\limits_{\mathbf{x'} \sim R(\Gamma)} \Psi(\mathbf{x'})$. Therefore $\x$ is an $\alpha$-approximation of $\Psi$ in $\Gamma$ if and only if $R(\x)$ is an $\alpha$-approximation of $\Psi$ in $R(\Gamma)$.
\end{proof}

\section{Other omitted proofs}\label{sec:omitted-proofs}

\subsection{Omitted proofs from Section~\ref{sec:per-issue-lower}}\label{sec:omitted-proofs-lower}

\linPerIssueExist*

\begin{proof}
Let $\Gamma$ be the PDM $(N,M, B)$ with linear utilities $u_i$ given by weights $w_{ij}$, and $\tilde{\Gamma}$ be the Fisher market $(N,M, B)$ with linear utilities $\tilde{u_i}$ given by the same weights.

Let $(\y, p)$ be an IME of $\Gamma$: then $y_i \in D_i(p, y_{-i})$ for all $i$. Let $x_i$ be agent $i$'s public bundle in $\y$, let $\mathbf{y'} = (y_{-i}, y'_i)$ for an arbitrary private bundle $y'_i$, and let $x'_i$ be agent $i$'s public bundle for private bundles $\mathbf{y'}$. Then we have

\begin{align*}
u_i(y_{-i}, y_i) =&\ \max\limits_{y_i':\ y_i' \cdot p \leq B_i} u_i(y_{-i}, y_i')\\
\sum\limits_{j \in M} w_{ij} x_{ij} =&\ \max\limits_{y_i':\ y_i' \cdot p \leq B_i} \sum\limits_{j \in M} w_{ij} x'_{ij}\\
\sum\limits_{j \in M} w_{ij} \sum\limits_{\substack{k \in N:\\ a_{kj} = a_{ij}}} y_{kj} =&\
	 \max\limits_{y_i':\ y_i' \cdot p \leq B_i} \Bigg(\sum\limits_{j \in M} w_{ij}y'_{ij} +\sum\limits_{j \in M}\ w_{ij} \sum\limits_{\substack{k \in N\backslash\{i\}:\\ a_{kj} = a_{ij}}} y_{kj}\Bigg)\\
\sum\limits_{j \in M} w_{ij}y_{ij} + \sum\limits_{j \in M}\ w_{ij} \sum\limits_{\substack{k \in N\backslash\{i\}:\\ a_{kj} = a_{ij}}} y_{kj} =&\
	 \max\limits_{y_i':\ y_i' \cdot p \leq B_i} \Bigg(\sum\limits_{j \in M} w_{ij}y'_{ij} +\sum\limits_{j \in M}\ w_{ij} \sum\limits_{\substack{k \in N\backslash\{i\}:\\ a_{kj} = a_{ij}}} y_{kj}\Bigg)\\
\sum\limits_{j \in M} w_{ij}y_{ij} + \sum\limits_{j \in M}\ w_{ij} \sum\limits_{\substack{k \in N\backslash\{i\}:\\ a_{kj} = a_{ij}}} y_{kj} =&\
	 \max\limits_{y_i':\ y_i' \cdot p \leq B_i} \Bigg(\sum\limits_{j \in M} w_{ij}y'_{ij}\Bigg) +\sum\limits_{j \in M}\ w_{ij}  \sum\limits_{\substack{k \in N\backslash\{i\}:\\ a_{kj} = a_{ij}}} y_{kj}\\
\sum\limits_{j \in M} w_{ij}y_{ij} =&\
	 \max\limits_{y_i':\ y_i' \cdot p \leq B_i} \sum\limits_{j \in M} w_{ij}y'_{ij}\\
\tilde{u_i}(y_i) =&\ \max\limits_{y_i':\ y_i' \cdot p \leq B_i} \tilde{u_i}(y_i')	 
\end{align*}

Also, the total price of $y_i$ is $y_i\cdot p$ in both $\Gamma$ and $\tilde{\Gamma}$. Let $\tilde{D_i}(p)$ be agent $i$'s demand set for prices $p$ in $\tilde{\Gamma}$: then by the above chain of equations, if $y_i \in D_i(p, y_{-i})$ for any $y_{-i}$, we have $y_i \in \tilde{D_i}(p)$. Furthermore, the exact same chain of equations in reverse order shows that if $y_i \in \tilde{D_i}(p)$, then $y_i \in D_i(p, y_{-i})$ for all $y_{-i}$.

Since $y_i \in \tilde{D_i}(p)$, the allocation $\y$ in $\tilde{\Gamma}$ gives each agent a bundle in her demand set given prices $p$. Also, because $(\y, p)$ is an IME of $\Gamma$, we have that $\sum_{i \in N} y_{ij} \leq 1$, and $\sum_{i \in N} y_{ij} = 1$ whenever $p_j > 0$. Therefore $(\y, p)$ is a ME of $\tilde{\Gamma}$.

Now let $(\y, p)$ be a ME of $\tilde{\Gamma}$. Since $y_i \in \tilde{D_i}(p)$ implies $y_i \in D_i(p, y_{-i})$, we have that $\y$ in $\Gamma$ gives each agent a bundle in her demand set. By the definition of ME, we have $\sum_{i \in N} y_{ij} \leq 1$, and $\sum_{i \in N} y_{ij} = 1$ whenever $p_j > 0$. Therefore $(\y, p)$ is an IME of $\Gamma$.
\end{proof}


\linPerIssueLower*

\begin{proof}
Let $\Phi'(n,1+\epsilon)$ be the Fisher market with the same agents, goods, and weights as $\Phi(n,1+\epsilon)$, also with linear utilities. Let $(\y, p)$ be a ME of $\Phi'(n,1+\epsilon)$. Then $y_i \cdot p = B_i = 1$ for all $i$, and so $\sum_{j \in M} p_j = \B = n$.

We next observe for a Fisher market with linear utilities, any (private) bundle in an agent's demand set maximizes her ``bang-per-buck" ratio: $w_{ij}/p_j$. To see this, consider agent $i$ moving $\delta$ of her budget to a good that does not maximize her bang-per-buck: this would decrease her utility, and so that bundle cannot be in her demand set.

Suppose for sake of contradiction that there exists $\ell$ where $p_\ell \neq 1$. Since $\sum_{j \in M} p_j = n$, there must exist $\ell$ where $p_\ell < 1$. Let $\ell = \min_{j \in M} p_j$. Since $w_{\ell\ell} > w_{\ell j}$ for all $j\neq \ell$, only issue $\ell$ maximizes agent $\ell$'s bang-per-buck. Thus there is a single bundle $y_i$ in her demand set, and it consists of her spending her entire budget on issue $\ell$. But since $p_\ell < 1 = B_\ell$, agent $\ell$ purchases more of good $\ell$ than exists in the supply, and so the market cannot clear. Thus any ME of $\Phi'(n,1+\epsilon)$ must have $p_j = 1$ for all $j$.

Now assume that $p_j = 1$ for all $j$. Since for each agent $i$, $w_{ii} = w > 1 = w_{ij}$ for all $j\neq i$, the only bundle that maximizes agent $i$'s bang-per-buck consists of her spending her entire budget on issue $i$. Thus the unique ME is $(\y, p)$ where $y_{ii} = 1$ for all $i$, and $y_{ij} = 0$ whenever $i\neq j$. Furthermore, by Theorem~\ref{thm:lin-per-issue-exist}, $(\y, p)$ is the unique IME of $\Phi(n,1+\epsilon)$. 

The Nash welfare of $\y$ in $\Phi(n,1+\epsilon)$ is
\[
NW(\y) =
\Big(\prod\limits_{i \in N} \sum\limits_{j\in M} w_{ij}y_{ij}\Big)^{1/n}
= \Big(\prod\limits_{i \in N} 1+\epsilon \Big)^{1/n} = 1+\epsilon
\]
Now consider the outcome $\z$ where $z^{j,0} = 0$ and $z^{j,1} = 1$ for all $j\in M$. Let $x_{ij}$ be agent $i$'s public bundle, as usual. Then
\[
NW(\z) 
=\Big(\prod\limits_{i \in N} \sum\limits_{j\in M} w_{ij}x_{ij}\Big)^{1/n}
= \Big(\prod\limits_{i \in N} \sum\limits_{j\in M\backslash\{i\}} 1 \cdot x_{ij}\Big)^{1/n}
= \Big(\prod\limits_{i \in N} (n-1) \Big)^{1/n} = n-1
\]
and therefore
\[
\frac{\max\limits_{\mathbf{z'}} NW(\mathbf{z'})}{NW(\y)} \geq \frac{n-1}{1+\epsilon} 
\]
\end{proof}


We now present the proofs of Theorems~\ref{thm:cd-per-issue-lower} and \ref{thm:ces-per-issue-lower}, which state that issue-pricing equilibria can be inefficient for Cobb-Douglas and CES utilities, respectively. 

We first prove a lemma motivated by the following concept. In Section~\ref{sec:lin-per-issue-lower}, we described how for linear utilities, the bundles in an agent's demand set maximize her bang-per-buck, in both the public and private settings. This is not true in general for other utilities, since goods are not independent. However, the same concept still applies: agent $i$ will not spend any money on issue $j$ if there is another issue $\ell$ where she has a higher marginal utility per dollar spent on issue $\ell$. This concept will be made formal by examining $\mfrac{\partial(u_i(x_i))}{\partial x_{ij}}$, which is the partial derivative of agent $i$'s utility with respect to $x_{ij}$. Although these derivatives may be complicated in general, they are well-behaved for Cobb-Douglas and CES utilities with $\rho \in (-\infty, 0) \cup (0, 1)$.

We will use this concept to show that for Cobb-Douglas utilities, $x_{ij} p_j = \min_{\ell \in M} x_{i\ell} p_\ell$ for any issue $j$ that agent $i$ is spending any money on. For CES utilities with $\rho \in (-\infty, 0) \cup (0, 1)$, we will show that $x_{ij}^{1-\rho} p_j = \min_{\ell \in M} x_{i\ell}^{1-\rho} p_\ell$ (note that $1- \rho > 0$ since $\rho \in (-\infty, 0) \cup (0, 1)$). Using these two properties, the following lemma will imply that $x_{ij} = 1/2$ for all $j$, which allows us to compute the Nash welfare.

\begin{restatable}{lemma}{otherUtilitiesEq}
\label{lem:other-utilities-eq}
Let $(\y ,p)$ be an IME of $\Phi(n,1)$ and let $x_i$ be agent $i$'s public bundle as induced by $\y$. Suppose that there exists $c > 0$ such that for every issue $j$ that agent $i$ spends any money on, $x_{ij}^c p_j = \min\limits_{\ell \in M} x_{i\ell}^c p_\ell$. Then $x_{ij} = 1/2$ for all $i$ and $j$.
\end{restatable}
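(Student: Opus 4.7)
The plan is to prove $\beta_j := y_{jj} = 1/2$ for every $j$, which together with the market-clearing structure of $\Phi(n,1)$ (agent $j$ is alone on side $0$ of issue $j$) immediately yields $x_{ij} = 1/2$. First I will check the boundary cases: market clearing (after verifying $p_j > 0$) forces $x_{jj} = \beta_j$ and $x_{ij} = 1-\beta_j$ for $i \neq j$. The case $\beta_i = 0$ is incompatible with the hypothesis because it would force $m_i = 0$, preventing agent $i$ from spending her unit budget on any positively priced good she values; the case $\beta_j = 1$ would make $x_{ij} = 0$ for every $i \neq j$, again forcing $m_i = 0$ and leading to the same contradiction.

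Next I introduce the shorthand $q_j = \beta_j^c p_j$ and $r_j = (1-\beta_j)^c p_j$. Since $\beta_j > 0$ gives $j \in S_j$, the hypothesis yields $m_j = q_j$; combined with $m_j = \min(q_j, \min_{\ell \neq j} r_\ell)$, we get $q_j \leq \min_{\ell \neq j} r_\ell$, with equality $r_\ell = q_j$ enforced for every $\ell \in S_j \setminus \{j\}$. The target $\beta_j = 1/2$ is equivalent to $q_j = r_j$, so it remains to show (i) all $r_\ell$ coincide, and then (ii) that the common $r$-value equals every $q_j$.

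For (i), let $r^* = \min_\ell r_\ell$ and $K^* = \{\ell : r_\ell = r^*\}$, and argue by contradiction. If $|K^*| \geq 2$ and $K^* \subsetneq [n]$, then $q_k \leq r^*$ for every $k$, so for $j \notin K^*$ no agent $k \neq j$ can have $j \in S_k$ (that would force $r_j = q_k \leq r^*$), and market clearing on $j$ yields $\beta_j = 1$, a contradiction. If $|K^*| = 1$, say $K^* = \{i_0\}$, then $S_k \subseteq \{k, i_0\}$ for every $k \neq i_0$, so market clearing on $k$ forces $y_{i_0,k} > 0$ and hence $r_k = q_{i_0} = r^{**}$ for all $k \neq i_0$, where $r^{**}$ is the second-smallest $r$. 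Setting $\tau^* = (r^{**}/r^*)^{1/c} > 1$, the FOC forces $\beta_{i_0} = \tau^*/(1+\tau^*)$, $\beta_k = 1/(1+\tau^*)$ for every agent in $A := \{k \neq i_0 : q_k = r^*\}$, and a uniform price across issues; combining agent $i_0$'s budget with market clearing on $i_0$ gives $\tau^* |A| + \sum_{k \notin \{i_0\} \cup A} 1/\tau_k = 1$, which forces $|A| = 0$ because $\tau^* > 1$. But $|A| = 0$ then forces $r^* = 0$, contradicting $q_{i_0} = r^{**} > 0$.

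For (ii), given a common value $r$, partition the agents into $T = \{j : q_j = r\}$ (for whom the FOC forces $\beta_j = 1/2$) and $U = \{j : q_j < r\}$ (who must satisfy $S_j = \{j\}$, $\beta_j p_j = 1$, and $\beta_j < 1/2$). Equating two computations of total spending on foreign issues yields the identity $|T|(1 - r \cdot 2^c) = |U| \cdot (1-\beta^{(U)})/\beta^{(U)}$, where $\beta^{(U)}$ solves $(1-\beta^{(U)})^c/\beta^{(U)} = r$. The left side is nonnegative only if $r \leq 2^{-c}$, while $\beta^{(U)} < 1/2$ forces $r > 2^{1-c} > 2^{-c}$, a contradiction unless $|U| = 0$. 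Hence $\beta_j = 1/2$ for every $j$, as desired. The hardest part will be the bookkeeping in the $|K^*| = 1$ subcase, where carefully combining the FOC-induced parametrization with both market clearing on issue $i_0$ and agent $i_0$'s budget is needed to isolate the contradiction.
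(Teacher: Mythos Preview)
Your approach is correct but follows a genuinely different route from the paper's. The paper never introduces the parameters $\beta_j,q_j,r_j$ at all; instead it argues directly in terms of the threshold $1/2^c$. It first shows, by a short contradiction argument exploiting $\sum_j p_j=n$ and the special structure of $\Phi(n,1)$ (agents $i$ and $k$ agree on every issue except $i$ and $k$), that every agent has \emph{some} issue with $x_{ij}^c p_j\le 1/2^c$. From there the hypothesis immediately rules out spending on any issue with value above $1/2^c$, which quickly forces all prices to equal $1$ and all $x_{ij}$ to equal $1/2$. Your proof instead parametrises the equilibrium by $\beta_j=y_{jj}$, reformulates the hypothesis as $q_j=\min(q_j,\min_{\ell\ne j} r_\ell)$, and then carries out a structural case analysis on the set of minimisers of $r_\ell$, followed by a budget-balance computation. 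Both arguments are valid; the paper's is considerably shorter because the $1/2^c$ threshold absorbs what you do in two separate stages, while yours makes the role of the ``own-issue'' versus ``foreign-issue'' quantities $q_j,r_j$ more explicit. A few minor points on your write-up: the symbol $\tau_k$ is never defined (it should be $\beta_k/(1-\beta_k)$, so that $(1-\beta_k)p_k=1/\tau_k$ for $k\in B$); the claim of ``uniform prices across issues'' only holds on $\{i_0\}\cup A$, not on all of $[n]$; and the cleanest way to finish the $|K^*|=1$ case is to note that $|A|=0$ forces $\beta_{i_0}=1$ directly by market clearing on $i_0$, rather than detouring through $r^*=0$.
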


\begin{proof}
The majority of the proof will be dedicated to showing that for every agent $i$, there must exist an issue $j$ where $x_{ij}^cp_j \leq 1/2^c$. Suppose for sake of contradiction that there exists an agent $i$ where $x_{ij}^c p_j > 1/2^c$ for every issue $j$.

We first show that there must exist an agent $k$ and issue $j$ where $x_{kj}^c p_j < 1/2$. Because $(\y, p)$ is an IME, all agents exhaust their budgets, so $\sum_{j \in M} p_j = \sum_{k \in N} B_k = n$. Because $|M| = n$ here, there must exist $j \in M$ where $p_j \leq 1$. Since $x_{ij}^c p_j > 1/2^c$, we have $x_{ij} > 1/2$. Let $k$ be any agent where $a_{kj} \neq a_{ij}$: then $x_{kj} < 1/2$, and so $x_{kj}^c p_j < 1/2^c$.

We know by definition of $\Phi(n,1)$, agents $i$ and $k$ agree on all issues other than $i$ and $k$: $a_{i \ell} = a_{k\ell}$ whenever $\ell \not \in \{i,k\}$. Thus for all issues $\ell \not \in \{i,k\}$, $x_{k\ell}^c p_\ell > 1/2^c > x_{kj}^c p_j$. By assumption, agent $k$ only spends money on issues $\ell$ which minimize $x_{k \ell}^c p_\ell$. Thus agent $k$ does not spend money on any issues besides $i$ and $k$ (note that either $j = i$ or $j = k$). 

Therefore amount of money agent $k$ spends in total is $\sum_{\ell \in M} y_{k\ell}p_\ell = y_{kk}p_k + y_{ki}p_i$. Since agent $k$ exhausts her budget, we have $y_{kk}p_k + y_{ki}p_i = B_k = 1$. Thus there must exist $\ell \in \{k,i\}$ where $y_{k\ell}p_\ell \geq 1/2$. Therefore $x_{k\ell}p_\ell \geq 1/2$.

Since $a_{kk} \neq a_{ik}$ and $a_{ki} \neq a_{ii}$, we have $a_{k\ell} \neq a_{i\ell}$. Because $(\y, p)$ is an IME, we have $x_{i\ell} = 1 - x_{k\ell}$. Also, since agent $k$ spends money on issue $\ell$, we have $x_{k\ell}^c p_\ell \leq x_{kj}^c p_j < 1/2^c$ by assumption. Therefore
\begin{align*}
x_{k\ell}^c p_\ell <&\ 1/2^c < x_{i\ell}^c p_\ell\\
x_{k\ell}^cp_\ell <&\ (1-x_{k\ell})^c p_\ell \\
x_{k\ell}^c <&\ (1-x_{k\ell})^c\\
x_{k\ell} <&\ (1-x_{k\ell})\\
x_{k\ell} <&\ 1/2
\end{align*}
Since agent $k$ exhausts her budget, we have $y_{kk}p_k + y_{ki}p_i = B_k = 1$, and so $x_{kk}p_k + x_{ki}p_i = 1$. Thus there must exist $\ell \in \{i, k\}$ where $y_{k\ell}p_\ell \geq 1/2$. Because $x_{k\ell}^c p_\ell < 1/2^c$, we have $x_{k\ell}^{1-c}/2^c > x_{k\ell}p_\ell \geq 1/2$. Therefore
\begin{align*}
\cfrac{x_{k\ell}^{1-c}}{2^c} >& \cfrac{1}{2}\\
\cfrac{(1/2)^{1-c}}{2^c} >&\ \cfrac{1}{2}\\
1 >&\ 1
\end{align*}
which is a contradiction. Therefore for every agent $i$, there exists an issue $j$ where $x_{ij}^c p_j \leq 1/2^c$. 

By assumption, if $x_{ij}^c p_j > \min\limits_{\ell \in M} x_{i\ell}^c p_\ell$, then agent $i$ spends no money on issue $j$. Since there exists an issue $j$ where $x_{ij}^c p_j \leq 1/2^c$, we have that agent $i$ spends no money on any issue $j$ where $x_{ij}^c p_j > 1/2^c$. 

Suppose for sake of contradiction that an agent $i$ and issue $j$ exist where $x_{ij}^c p_j > 1/2^c$: then some agent $k$ where $a_{kj} = a_{ij}$ is spending money on issue $j$. But since $a_{kj} = a_{ij}$, we have $x_{kj}^c p_k > 1/2^c$, so agent $k$ cannot be spending any money on issue $j$. Therefore for every agent $i$ and every issue $j$, $x_{ij}^c p_j \leq 1/2^c$.

Suppose for sake of contradiction that there exists an issue $j$ where $p_j \neq 1$. Since $\sum_{\ell\in M} p_\ell = n$, there must exist an issue $\ell$ where $p_\ell > 1$. Let $k$ be any other agent other than $\ell$: then $a_{k\ell} \neq a_{\ell \ell}$. Since $x_{k\ell} + x_{\ell \ell} = 1$, we have $\max(x_{k\ell}, x_{\ell \ell}) \geq 1/2$. Therefore $(\max(x_{k\ell}, x_{\ell \ell}))^c p_\ell > 1/2^c$, which is a contradiction. Therefore $p_j = 1$ for all $j$.

Finally, suppose there exists an agent $i$ and issue $j$ where $x_{ij} \neq 1/2$, there must exist an agent $k$ where $x_{kj} > 1/2$. Then $x_{kj}^c p_j > 1/2^c$, which is again a contradiction. Therefore for every agent $i$ and issue $j$, $x_{ij} = 1/2$.
\end{proof}


We are now ready to prove Theorems~\ref{thm:cd-per-issue-lower} and \ref{thm:ces-per-issue-lower}.

\cdLower*

\begin{proof}
Let $x_i$ be agent $i$'s public bundle as induced by $\y$. Recall that a Cobb-Douglas utility is given by
\[
u_i(\y) = u_i(x_i) = \Big( \prod\limits_{j\in M} x_{ij}^{w_{ij}} \Big)^{1/\sum_{j \in M}w_{ij}}
\]
which for $\Phi(n,1)$, simplifies to
\[
u_i(x_i) = \Big( \prod\limits_{j\in M} x_{ij} \Big)^{1/n}
\]
Thus for all $j$, we have
\[
\cfrac{1}{p_j}\cfrac{\partial(u_i(\y))}{\partial x_{ij}} 
= \cfrac{x_{ij}^{\frac{1}{n}-1}}{p_j n} \Big( \prod\limits_{\ell \in M\backslash\{j\}} 
	x_{i\ell} \Big)^{1/n}
= \cfrac{1}{x_{ij} p_j n} \Big( \prod\limits_{\ell \in M} 
	x_{i\ell} \Big)^{1/n}
= \frac{1}{x_{ij} p_j n}u_i(x_i)
\]
We are going to invoke Lemma~\ref{lem:other-utilities-eq} with $c = 1$. Suppose that there exists an agent $i$ and issues $j,\ell$ such that agent $i$ is spending on issue $j$, but $x_{ij} p_j >  x_{i\ell} p_\ell$. Then
\[
\frac{1}{p_j}\cfrac{\partial(u_i(\y))}{\partial x_{ij}} < \frac{1}{p_\ell}\cfrac{\partial(u_i(\y))}{\partial x_{i\ell}}
\]
Thus there exists some $\delta > 0$ such that if agent $i$ spent $\delta$ less on issue $j$ and $\delta$ more on issue $\ell$, agent $i$'s utility would increase. But $x_i$ is in agent $i$'s demand set, so it cannot be possible for her to increase her utility while staying within her budget. This is a contradiction, so therefore $x_{ij} p_j = \min_{\ell \in M} x_{i\ell} p_\ell$ for all $i,j$.

Therefore by Lemma~\ref{lem:other-utilities-eq}, we have $x_{ij} = 1/2$ for all $i$ and $j$. So the Nash welfare of $\y$ is
\[
NW(\y) = \Big(\prod\limits_{i \in N} \Big( \prod\limits_{j\in M} x_{ij} \Big)^{1/n}\Big)^{1/n}
= \Big(\prod\limits_{i \in N} \Big( \prod\limits_{j\in M} 1/2 \Big)^{1/n}\Big)^{1/n}
= \Big(\prod\limits_{i \in N} 1/2 \Big)^{1/n}
= 1/2
\]
Consider the outcome $\mathbf{z'}$ where $x'_{ii}(\mathbf{z'}) = 1/n$ for all $i$, and $x'_{ij}(\mathbf{z'}) = \mfrac{n-1}{n}$ whenever $j\neq i$. Then
\[
NW(\mathbf{z'}) 
= \Big(\prod\limits_{i \in N} \Big( \prod\limits_{j\in M} x'_{ij} \Big)^{1/n}\Big)^{1/n}
= \Big(\prod\limits_{i \in N} \Big(\mfrac{1}{n}\Big(\mfrac{n-1}{n}\Big)^{n-1}\Big)^{1/n}\Big)^{1/n}
= \Big(\mfrac{1}{n}\Big(\mfrac{n-1}{n}\Big)^{n-1}\Big)^{1/n}
\]
\[
= \Big(\cfrac{1}{n-1}\Big(\cfrac{n-1}{n}\Big)^n\Big)^{1/n}
= \cfrac{1}{(n-1)^{1/n}}\cfrac{n-1}{n}
= \cfrac{1 - 1/n}{(n-1)^{1/n}}
\]
Therefore
\[
\frac{\max\limits_{\mathbf{z'}} NW(\mathbf{z'})}{NW(\y)} \geq \frac{2 - 2/n}{(n-1)^{1/n}}
\]

\end{proof}


\cesLower*

\begin{proof}
Let $x_i$ be agent $i$'s public bundle as induced by $\y$. Recall that a CES utility is given by
\[
u_i(\y) = u_i(x_i) = \Big(\sum\limits_{j \in M} w_{ij}^{\rho} x_{ij}^{\rho}\Big)^{1/\rho}
= \Big(\sum\limits_{j \in M} x_{ij}^{\rho}\Big)^{1/\rho}
\]
and so we have
\[
\cfrac{1}{p_j}\cfrac{\partial(u_i(\y))}{\partial x_{ij}} 
= \frac{1}{p_j}\cfrac{1}{\rho} \rho x_{ij}^{\rho - 1} \Big(\sum\limits_{j \in M} x_{ij}^{\rho}\Big)^{\frac{1}{\rho} - 1}
= \cfrac{1}{x_{ij}^{1 - \rho}p_j} u_i(x_i)^{\rho(\frac{1}{\rho} - 1)}
\]
This time, we are going to invoke Lemma~\ref{lem:other-utilities-eq} with $c = 1 - \rho$ (note that since $\rho \in (-\infty, 0) \cup (0, 1)$, we have $1 - \rho > 0$). Suppose that there exists an agent $i$ and issues $j,\ell$ such that agent $i$ is spending on issue $j$, but $x_{ij}^{1-\rho} p_j >  x_{i\ell}^{1-\rho} p_\ell$. Then
\[
\frac{1}{p_j}\cfrac{\partial(u_i(\y))}{\partial x_{ij}} < \frac{1}{p_\ell}\cfrac{\partial(u_i(\y))}{\partial x_{i\ell}}
\]
So again there exists some $\delta > 0$ such that if agent $i$ spent $\delta$ less on issue $j$ and $\delta$ more on issue $\ell$, agent $i$'s utility would increase. But $x_i$ is in agent $i$'s demand set, so this is a contradiction for the same reason as in the previous proof. Thus $x_{ij} p_j = \min_{\ell \in M} x_{i\ell} p_\ell$ for all $i,j$.

Therefore, by Lemma~\ref{lem:other-utilities-eq}, we have $x_{ij} = 1/2$ for all $i$ and $j$, so the Nash welfare of $\y$ is
\[
NW(\y) = \Big(\prod\limits_{i \in N} \Big( \sum\limits_{j\in M} x_{ij}^\rho \Big)^{1/\rho}\Big)^{1/n}
= \Big(\prod\limits_{i \in N} \Big( \sum\limits_{j\in M} (1/2)^\rho \Big)^{1/\rho}\Big)^{1/n}
= \Big(\prod\limits_{i \in N} \cfrac{n^{1/\rho}}{2} \Big)^{1/n}
= \cfrac{n^{1/\rho}}{2}
\]
Consider the outcome $\mathbf{z'}$ where $x'_{ii}(\mathbf{z'}) = 0$ for all $i$ and $x'_{ij}(\mathbf{z'}) = 1$ whenever $j\neq i$. Then
\[
NW(\mathbf{z'}) 
= \Big(\prod\limits_{i \in N} \Big( \sum\limits_{j\in M} x_{ij}^\rho \Big)^{1/\rho}\Big)^{1/n}
= \Big(\prod\limits_{i \in N} \Big( \sum\limits_{j\in M\backslash\{i\}} 1 \Big)^{1/\rho}\Big)^{1/n}
= \Big(\prod\limits_{i \in N} (n-1)^{1/\rho}\Big)^{1/n} = (n-1)^{1/\rho}
\]
Therefore
\[
\frac{\max\limits_{\mathbf{z'}} NW(\mathbf{z'})}{NW(\y)} \geq \frac{(n-1)^{1/\rho}}{n^{1/\rho}/2} = 2 \Big( \frac{n-1}{n} \Big)^{1/\rho} = 2 (1 - 1/n)^{1/\rho}
\]

\end{proof}


\subsection{Omitted proofs from Section~\ref{sec:tat}}\label{sec:omitted-proofs-tat}


\tatLift*

\begin{proof}
	By the reduction defined in Section~\ref{sec:reduction}, the hidden private market has $O(n^2m)$ goods (1 copy of each good for each pair of agents who disagree on the issue) and $n$ agents. 
	
	We next show the \tat $\T$ is being run correctly, i.e., the sequence of prices $(p^0, p^1, p^2...)$, alongside some demands $(y^t_i \in D_i^R(p^t))$ converges to a $\delta$-equilibrium. This is not trivial since $\T$ is run with a detour through the PDM. By Corollary~\ref{cor:reduction-demand}, given prices $p\sim R(\Gamma)$ and a bundle $y_i \sim \Gamma$, $y_i \in D_i(\Rfrom(p))$ $\iff$ $R(y_i) \in D_i^R(p)$. Thus at each step, $p^{t+1}$ is being calculated by $g_{\T}$ based on valid demands in $D_i^R(p^t)$, so the sequence of prices $(p^0, p^1, p^2...)$ converges to a $\delta$-equilibrium. Then, by supposition, there exist demands $y \sim R(\Gamma)$ at time $T$ such that $(\mathbf{y}, p^T) \sim R(\Gamma)$ form a $\delta$-equilibrium in the hidden Fisher market, for $T = O(\kappa(n^2m,n,\delta))$. 
	
	Recall Theorem~\ref{thm:reduction-eq}: $(\y, p)$ is a Fisher market equilibrium if and only if $(\Rfrom(\y), \Rfrom(p))$ is a PME. The rest of the proof involves showing that Theorem~\ref{thm:reduction-eq} holds for approximate equilibria as well, as defined. Recall that $\Rfrom(y_i)_j = \min\limits_{\substack{k \in N:\\ a_{ij} \neq a_{kj}}} y_{i (ikj)}$ $\forall j\in M$, and $\Rfrom(p)_{ij} = \sum\limits_{\substack{k \in N:\\ a_{ij} \neq a_{kj}}} p_{(ikj)}$ $\forall i\in N, j\in M$.
We claim $(\Rfrom(y), \Rfrom(p^t))$ forms a $3\delta$-PME:
	\begin{enumerate}
		\item  $y_i \in D_i^R(p^t) \implies \Rfrom(y_i) \in D_i(\Rfrom(p^t))$. (Lemma~\ref{lem:reduction-demand})
		\item We define $\z = (z^1...z^m) \in [0,1]^{m\times 2}$ as follows, analogously to Equation~\eqref{eqn:xj0} in the proof of Theorem~\ref{thm:reduction-eq}\footnote{In the proof of Theorem~\ref{thm:reduction-eq}, the definition of $z^{j,1}$ was simply $1 - z^{j,0}$. It is necessary to use $\max(1 - z^{j,0}, 0)$ here instead: because this is an approximate equilibrium, it is possible that $z^{j,0} > 1$, which would make $1 - z^{j,0}$ negative.}:
		\begin{align*}
		z^{j,0} =&\ \max_{i\in N: a_{ij} = 0} \Rfrom(y_i)_j\\
		z^{j,1} =&\ \max(1 - z^{j,0}, 0)
		\end{align*} Then, $\forall j\in M$,
		\begin{enumerate}
			\item $z^{j,0} + z^{j,1} \leq 1+\delta$ follows from the definition and from $y$ part of a $\delta$-equilibrium of a Fisher market.
			\item For all $i\in N$, $\Rfrom(y_i)_j \leq z^{j, a_{ij}}+ \delta$ follows from Equations~\eqref{eqn:lessinitialline}-\eqref{eqn:lesslastline}, with $1$ replaced with $1+\delta$ and the $=$ in line~\eqref{eqn:lesslastline} replaced with $\leq$. Finally,
			\begin{align*}
			\Rfrom(p)_{ij} > n\delta &\implies \exists \tilde{k} \text{ s.t. } p_{(ikj)} > \delta\\
			&\implies y_{i(i\tilde{k}j)} + y_{\tilde{k}(i\tilde{k}j)} > 1 - \delta & \text{(Condition~\ref{con:fisherdelp} of Fisher $\delta$-equilibrium)}
			\end{align*}
			By Lemma~\ref{lem:reduction-demand-pairwise}, $p_{(i\tilde{k}j)} > 0 \implies y_{i(i\tilde{k}j)} = \min\limits_{\substack{k \in N:\\ a_{ij} \neq a_{\tilde{k}j}}} y_{i (i\tilde{k}j)} = \Rfrom(y_i)_j$, and $y_{\tilde{k}(i\tilde{k}j)} = \Rfrom(y_{\tilde{k}})_j$. Then
			\begin{align*}
			\Rfrom(y_i)_j &> 1- \Rfrom(y_{\tilde{k}})_j - \delta \\
			&> 1-z^{j,a_{\tilde{k}j}} - 2\delta & \text{(First part of Condition (b)}\\
			&> z^{j,a_{ij}} - 3\delta & \text{definition of $z^{j,a_{ij}}$)}
			\end{align*}
			Thus, $\Rfrom(p)_{ij} > n\delta \implies \Rfrom(y_i)_j> z^{j,a_{ij}} - 3\delta$
		\end{enumerate}\label{condition:market-clears}
	\end{enumerate}

\end{proof}


\section{General t\^{a}tonnement with asymptotic convergence in Fisher markets}
\label{sec:tatprivatedetour}
The prior work discussed in Section~\ref{sec:tat} established deterministic t\^{a}tonnements with polynomial time convergence rates only for certain classes of utility functions, or for those that converge to weak equilibria. However, we would like a general PDM converges for a wider class of utilities (in particular, linear utilities). To achieve this, we sacrifice convergence in polynomial time (or any characterization of convergence rates), which has been the primary focus on prior work such as \cite{avigdor-elgrabli_convergence_2014,cheung_tatonnement_2013}\footnote{Those works take great care to show conditions analogous to strong convexity or Lipschitz continuity of the gradient in the cases of interest.}. In this section, we present a discrete stochastic gradient descent style t\^{a}tonnement for all Fisher markets that result from PDMs with EG utility functions through the reduction. This can then be lifted through Theorem~\ref{thm:tatonnementlifting} to yield a general PDM \tat.

This section broadly follows the gradient descent framework for Fisher market t\^{a}tonnements from~\cite{cheung_tatonnement_2013}, and the \tat can be seen as an asymptotic discretization of their continuous time \tat. The t\^{a}tonnement operates on the dual of the EG convex program, which is a function of the prices, and whose gradient is the excess demand (Lemma~\ref{lem:tatgradient}). We first establish that there exists a bounded, convex region $\Pi$ in which demands are bounded, with $p^*\in \Pi$\footnote{It is well known that the primal objective function is strictly concave, and so $p^*$, the optimal dual solution, is unique.}. We finish the proof with a standard SGD convergence technique, Lemma~\ref{lem:sgd}.

 Let $\phi(p)$ be the objective of the dual of EG convex program. We use $D^R$ to denote demands as the demands are in a Fisher market $R(\Gamma)$ that is constructed from PDM $\Gamma$. The following lemma from~\cite{cheung_tatonnement_2013} establishes that $\phi(p)$ is itself a convex function whose gradient is the excess demand.

\begin{lemma} [\cite{cheung_tatonnement_2013}]
	$\nabla \phi(p) = 1 - \sum_i D^R_i(p_i)$ \label{lem:tatgradient}
\end{lemma}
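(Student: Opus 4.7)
The plan is to derive the identity from the envelope theorem (Danskin's theorem) applied to the Lagrangian of the Eisenberg--Gale primal program for the Fisher market $R(\Gamma)$. The primal is
$$\max_{x \geq 0} \sum_i B_i \log u_i^R(x_i) \quad \text{s.t.} \quad \sum_i x_{i\ell} \leq 1 \text{ for each } \ell \in R(M),$$
with Lagrange multipliers $p_\ell \geq 0$ for the supply constraints. The Lagrangian is
$$L(x,p) = \sum_i B_i \log u_i^R(x_i) + \sum_\ell p_\ell \Bigl(1 - \sum_i x_{i\ell}\Bigr),$$
so the dual objective is $\phi(p) = \max_{x \geq 0} L(x,p)$. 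Since $L$ is concave in $x$ and affine (hence smooth) in $p$, $\phi$ is a convex function of $p$, and the envelope/Danskin theorem says that its (sub)gradient at $p$ is obtained by fixing any inner maximizer $x^*(p)$ and differentiating $L$ in $p$. A direct calculation gives $\nabla_p L(x,p) = \mathbf{1} - \sum_i x_i$, so the lemma reduces to identifying $x^*(p)$ with the demand $D_i^R(p)$.

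The key step is therefore the identification $x_i^*(p) \in D_i^R(p_i)$. For fixed $p$ the inner maximization decouples across agents, so for each $i$ we solve $\max_{x_i \geq 0} \bigl[B_i \log u_i^R(x_i) - p \cdot x_i\bigr]$. Because $u_i^R$ is homogeneous of degree $1$ (Lemma~\ref{lem:nestedhomog}), the rescaling $x_i \mapsto t x_i$ turns the objective into $B_i \log t + B_i \log u_i^R(x_i) - t\,(p \cdot x_i)$, which is maximized in $t$ at $t = B_i/(p \cdot x_i)$. Any unconstrained maximizer must therefore satisfy $p \cdot x_i^* = B_i$ (the budget is exhausted), and conditional on this the inner objective equals $B_i \log u_i^R(x_i) - B_i$; maximizing this over $\{x_i \geq 0 : p \cdot x_i \leq B_i\}$ is the same as maximizing $u_i^R$ over this set, which is exactly $D_i^R(p_i)$. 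Substituting $x_i^*(p) \in D_i^R(p_i)$ into $\nabla_p L$ yields $\nabla \phi(p) = \mathbf{1} - \sum_i D_i^R(p_i)$.

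The main obstacle is handling the case where the demand set is multi-valued (as with linear utilities, which we explicitly want to cover in Section~\ref{sec:tatprivatedetour}). In that regime $\phi$ need not be differentiable in the classical sense, and Danskin's theorem must be read in subdifferential form: $\partial \phi(p)$ equals the convex hull of the vectors $\mathbf{1} - \sum_i y_i$ over selections $y_i \in D_i^R(p_i)$, so the statement $\nabla \phi(p) = \mathbf{1} - \sum_i D_i^R(p_i)$ is to be interpreted set-valued\-ly (any selection from the right-hand side is a valid subgradient). For strictly concave utilities (e.g., CES with $\rho \neq 1$) the demand is a singleton and $\phi$ is smooth, recovering the ordinary gradient statement. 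A small secondary technicality is that one must restrict to the region $\Pi$ on which $\phi$ is finite and $D_i^R$ is well-behaved, but this is exactly the bounded convex region identified in the preceding paragraph of Section~\ref{sec:tatprivatedetour}, so it can be invoked rather than re-proved here.
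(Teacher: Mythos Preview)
Your derivation is correct and is precisely the standard argument: write the dual of the Eisenberg--Gale program as $\phi(p)=\max_{x\geq 0}L(x,p)$, apply Danskin's theorem to get $\partial\phi(p)=\mathbf{1}-\sum_i x_i^*(p)$, and then use degree-$1$ homogeneity of $u_i^R$ to show that the inner maximizers of $B_i\log u_i^R(x_i)-p\cdot x_i$ are exactly the budget-exhausting demand bundles $D_i^R(p)$. Your handling of the nonsmooth case (multi-valued demand, subdifferential interpretation) also matches what the paper intends, as the sentence immediately following the lemma makes explicit.

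Note, however, that the paper does not prove this lemma at all: it is imported from \cite{cheung_tatonnement_2013} (see the bracketed citation in the lemma heading), and the paper only adds the remark about subgradients. The identification $\arg\max_{x\geq 0}L(x,p)\subseteq D^R(p)$ that you prove via the homogeneity/rescaling trick is exactly what the paper later invokes as ``Part~2 of Lemma~5 in \cite{cheung_tatonnement_2013}'' in the proof of Lemma~\ref{lem:alldemandsinequil}. So your argument is not merely consistent with the paper's approach---it is a self-contained version of the external result the paper is citing.
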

Note that $\phi(p)$ refers to the set of sub-gradients, and $D^R_i(p_i)$ to the set of demands. Even when demands at a given price are not unique (such as with linear utilities), each combination of demands yields a sub-gradient of the dual objective function. 

Before being able to apply a canonical gradient descent convergence theorem, we need to establish that there exists a bounded, convex set which contains the optimal price $p^*$ in its interior. We construct such a set next.

\begin{lemma} $\exists \Pi\subset \bbR_+^m$ bounded and convex s.t. $p^* \in \arg\max \phi(p)\subset \Pi$, and that $\forall p\in \Pi, \forall i$, $D^R_i(p) < \infty$.  \label{lem:pidemandsbounded}
	\end{lemma}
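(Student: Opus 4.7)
I would take $\Pi$ to be a bounded convex polytope cut out of $\bbR_+^m$ by the coordinate upper bounds $p_\ell \le \B$ and by the effective-price lower bounds $\Rfrom(p)_{ij} \ge \epsilon$ for every pair $(i,j)$ where agent $i$ strictly values issue $j$. Here $\B = \sum_i B_i$, and $\epsilon > 0$ is a small constant depending only on the utility parameters and budgets of $\Gamma$. This set is an intersection of finitely many affine half-spaces inside $\bbR_+^m$, so convexity and boundedness are automatic.

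\textbf{Containment of $\arg\max \phi$ in $\Pi$.} The upper bound is immediate: by Lemma~\ref{lem:tatgradient} together with budget exhaustion for homogeneous-of-degree-$1$ utilities, any $p^* \in \arg\max\phi$ satisfies $\sum_\ell p^*_\ell = \B$ and in particular $p^*_\ell \le \B$. For the lower bound, the key claim is that if $\Rfrom(p^*)_{ij}$ were arbitrarily small for some $(i,j)$ in which agent $i$ has positive marginal utility, then agent $i$'s maximum per-dollar utility on issue $j$, and hence $v_i(p^*)$, would grow without bound; consequently the $B_i\log v_i(p^*)$ term would drive the dual value $\phi(p^*)$ past any value attainable in the interior, contradicting optimality. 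A quantitative version of this blow-up argument produces the required $\epsilon$.

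\textbf{Bounded demands on $\Pi$.} For any $p \in \Pi$ and any $y_i \in D^R_i(p)$, the nested-Leontief structure of $u_i^R$ reduces agent $i$'s problem to spending up to $B_i$ at per-issue cost $\Rfrom(p)_{ij} \ge \epsilon$, so $\Rfrom(y_i)_j \le B_i/\epsilon$ for every issue $j$. By Lemma~\ref{lem:reduction-demand-pairwise}, on each positively-priced pair-good $(i,k,j)$ this forces $y_{i(ikj)} = \Rfrom(y_i)_j \le B_i/\epsilon$; on the remaining zero-priced pair-goods one picks the canonical representative of the (flat) demand set in which $y_{i(ikj)} = \Rfrom(y_i)_j$. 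Either way every coordinate of every $y_i \in D^R_i(p)$ is bounded by $\B/\epsilon$.

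\textbf{Main obstacle.} The delicate step is producing a single $\epsilon$ that simultaneously handles every relevant $(i,j)$. The rate at which $v_i(p) \to \infty$ as $\Rfrom(p)_{ij} \to 0$ depends on the class $\calh$ of utilities --- for linear utilities the blow-up is the crispest, while for Leontief or Cobb-Douglas utilities it is tempered by complementary components, so the lower bound will have to depend on the full utility profile rather than a single weight. A secondary subtlety is that when an individual pair-good has price exactly $0$ inside a positive-effective-price issue, $D^R_i(p)$ is genuinely set-valued in that coordinate; ``$D^R_i(p) < \infty$'' must then be read as boundedness of the Leontief projection, which is precisely what the canonical-representative argument above furnishes.
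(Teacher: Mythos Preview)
Your plan is workable in spirit but takes a considerably harder route than the paper, and the ``main obstacle'' you flag is self-imposed.

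The paper exploits two shortcuts you do not use. First, in $R(\Gamma)$ every good $(i,k,j)$ is desired by exactly two agents, so whenever $p^*_{(ikj)}>0$ market clearing forces one of those two to purchase at least $1/2$ of the good; with unit budgets this yields $p^*_\ell \le 2$, tighter than your $\B$ bound and specific to the pairwise structure.

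Second --- and this is the real difference --- the paper lets $\Pi$ depend on $p^*$. It simply takes any $p_{\min}$ with $0 < p_{\min} < \min\{p^*_j : p^*_j > 0\}$ and sets $\Pi$ to be the coordinate box $\prod_j [p_{\min}^j, 2]$, where $p_{\min}^j = p_{\min}$ on the positively-priced coordinates and $0$ on the rest. Because the lemma is purely existential, there is no need to manufacture the lower bound a priori from the utility data; one just reads it off from the already-fixed equilibrium. This entirely sidesteps your dual blow-up argument and the $\calh$-dependent quantitative analysis you correctly identify as delicate (and which, for strongly complementary utilities like Leontief, may not even go through as stated, since $\Rfrom(p^*)_{ij}$ can vanish while $v_i(p^*)$ stays finite).

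What your approach would buy, if the quantitative step were completed, is a $\Pi$ computable from the instance data before solving for $p^*$ --- potentially useful for implementing the \tat, but unnecessary for the lemma as stated.
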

\begin{proof}
	We claim $p^* \in [0,2]^m$ in our setting. Let $y_{i} \in D^R_i(p^*_i)$, and thus $p^*_j y_{ij} \leq B_i = 1\, \forall j$. By Fisher market equilibrium conditions, $p^*_{j}>0 \implies \sum_i y_{ij} = 1$. In our setting, $y_{ij}>0$ for at most two distinct $i$. Thus, $\exists i$ s.t. $y_{ij} \geq \frac{1}{2} \implies p^*_j \leq 2$. 
	
	Let $p_{\min}$ be any value such that $ 0 < p_{\min} < \min_{\{j:p^*_j>0\}} p^*_j$. Then, let $\Pi = [p_{\min}^1, 2] \times \dots \times [p_{\min}^m, 2]$, where $p_{\min}^j = \begin{cases}
	0 & p^*_j=0\\
	p_{\min} & \text{else}
	\end{cases}$. $\Pi$ as defined has the desired properties.
	\end{proof} 

Throughout, we use $[\cdot]_\mathcal{X}$ denote the projection onto a set $\mathcal{X}$.
We will also use the following stochastic gradient descent convergence lemma.
\begin{lemma}[\cite{jiang_scheduling_2010}]
	Consider a convex function $f$ on a non-empty bounded closed convex set $\mathcal{X} \subset \bbR^m$, and use $[\cdot]_\mathcal{X}$ to designate the projection operator. Starting with some $x_0\in \mathcal{X}$, consider the SGD update rule $x_{t} = [x_{t-1} - \eta_t (\nabla f(x_t) + z_t + e_t)]_\mathcal{X}$, where $z_t$ is a zero-mean random variable and $e_t$ is a constant. Let $\textup{E}_t[\cdot]$ be the conditional expectation given $\mathcal{F}_t$, the $\sigma$-field generated by $x_0,x_1,\dots,x_t$. If
	\begin{align*}
	&f(\cdot)\text{ has a unique minimizer }x^* \in \mathcal{X}\\
	&\eta_t > 0, \sum _t \eta_t = \infty, \sum _t \eta_t^2 < \infty\\
	&\exists C_1 \in \bbR < \infty \text{ s.t. }\|\nabla f(x)\|_2 \leq C_1,\forall\, x \in \mathcal{X} \\
	&\exists C_2 \in \bbR < \infty \text{ s.t. } \textup{E}_t[\|z_t\|^2] \leq C_2, \forall\, t \\
	&\exists C_3 \in \bbR < \infty \text{ s.t. }\|e_t\|_2 \leq C_3,\forall\, t \\
	&\sum_{t} \eta_t \|e_t\| < \infty\text{ w.p. }1
	\end{align*}
	Then $x_t \to x^*$ w.p. 1 as $t \to \infty$.  
	\label{lem:sgd}
\end{lemma}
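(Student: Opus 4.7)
The plan is to prove this via the classical almost-supermartingale technique due to Robbins and Siegmund, tracking the squared distance $V_t = \|x_t - x^*\|^2$. Since $\mathcal{X}$ is a nonempty closed convex set containing $x^*$ and projection onto such a set is nonexpansive, the update rule (interpreting the recursion as $x_{t+1} = [x_t - \eta_{t+1}(g_{t+1}+z_{t+1}+e_{t+1})]_{\mathcal{X}}$ with $g_{t+1} \in \partial f(x_t)$) gives
\[
V_{t+1} \;\leq\; V_t \;-\; 2\eta_{t+1}\langle x_t - x^*, g_{t+1} + z_{t+1} + e_{t+1}\rangle \;+\; \eta_{t+1}^2 \|g_{t+1} + z_{t+1} + e_{t+1}\|^2.
\]
Taking conditional expectation $E_t[\cdot]$, the linear-in-$z_{t+1}$ term drops (zero-mean), and I can split the remaining ingredients into (i) a ``good'' descent term $2\eta_{t+1}(f(x_t)-f(x^*)) \geq 0$, obtained from convexity via $\langle x_t - x^*, g_{t+1}\rangle \geq f(x_t) - f(x^*)$; (ii) an $e_t$-driven perturbation controlled by Cauchy--Schwarz and the diameter $D$ of $\mathcal{X}$, namely $|\langle x_t - x^*, e_{t+1}\rangle| \leq D\|e_{t+1}\|$; and (iii) a quadratic remainder bounded, by $(a+b+c)^2 \leq 3(a^2+b^2+c^2)$ and the hypotheses, by $3(C_1^2 + C_2 + C_3^2)$ in conditional expectation.

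Collecting these gives the ``almost-supermartingale'' inequality
\[
E_t[V_{t+1}] \;\leq\; V_t \;-\; 2\eta_{t+1}\bigl(f(x_t)-f(x^*)\bigr) \;+\; \alpha_{t+1},
\]
where $\alpha_{t+1} = 2D\eta_{t+1}\|e_{t+1}\| + 3\eta_{t+1}^2(C_1^2 + C_2 + C_3^2)$. By the assumptions $\sum_t \eta_t^2 < \infty$ and $\sum_t \eta_t\|e_t\| < \infty$ (w.p.\ 1), the sequence $\{\alpha_t\}$ is summable almost surely. The Robbins--Siegmund theorem then yields, on a probability-one event, that $V_t$ converges to some finite random limit $V_\infty$ and that the ``descent mass'' is summable: $\sum_t \eta_t\bigl(f(x_t) - f(x^*)\bigr) < \infty$ a.s.

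Finally, I would upgrade convergence of $V_t$ to $V_\infty = 0$ using $\sum_t \eta_t = \infty$ and uniqueness of the minimizer: summability of $\eta_t(f(x_t)-f(x^*))$ combined with $\sum \eta_t = \infty$ forces $\liminf_t f(x_t) = f(x^*)$ a.s.; compactness of $\mathcal{X}$ and continuity of the convex $f$ then give a subsequence $x_{t_k} \to x^*$ (by uniqueness of $x^*$), so $V_{t_k} \to 0$, and since the whole sequence $V_t$ converges, $V_\infty = 0$. The main obstacle is handling the $e_t$ term, which is only almost-surely summable after multiplication by $\eta_t$ rather than deterministically so; this forces a pathwise application of Robbins--Siegmund rather than an in-expectation argument, but the summability assumption $\sum_t \eta_t\|e_t\| < \infty$ w.p.\ 1 is precisely tailored to make this go through.
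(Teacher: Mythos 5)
Your proof is correct, but note that the paper itself offers no proof of this lemma: it is imported verbatim from \cite{jiang_scheduling_2010} and used as a black box in the proof of Theorem 6.5 (the asymptotic \tat convergence), so there is no in-paper argument to compare against. Your Robbins--Siegmund route is the standard and essentially canonical way to establish such a statement, and each step checks out: nonexpansiveness of the projection onto the closed convex set containing $x^*$, the subgradient inequality $\langle x_t - x^*, g_{t+1}\rangle \geq f(x_t)-f(x^*)$, the diameter bound on the $e_t$ cross term, the $3(C_1^2+C_2+C_3^2)$ bound on the quadratic remainder, almost-sure summability of the perturbation sequence, and the final upgrade from $\liminf_t f(x_t)=f(x^*)$ to $V_\infty=0$ via compactness of $\mathcal{X}$, Lipschitz continuity of $f$ on $\mathcal{X}$ (which follows from the uniform subgradient bound $C_1$), and uniqueness of the minimizer. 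Two small reading choices you made silently are worth being explicit about: the update rule as printed evaluates $\nabla f$ at $x_t$ rather than $x_{t-1}$, which you sensibly reinterpret as the (sub)gradient at the previous iterate, and ``zero-mean'' must be read as conditionally zero-mean given $\mathcal{F}_t$ (a martingale-difference condition) for the cross term to vanish under $\textup{E}_t[\cdot]$; both are clearly the intended reading and your argument is sound under them.
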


We can now construct a stochastic t\^{a}tonnement for a Fisher Market $R(\Gamma)$ that is constructed from PDM $\Gamma$ when agents have any EG utility function in the PDM, as $\mathcal{H}$-nested leontief utility functions remain EG utility functions.




\begin{lemma}
	Suppose $p^* \in \arg\min \phi(p)$. Then, $\exists y = (y_1 \dots y_n)$ s.t. $y_i \in D^R_i (p^*)$, $(y, p^*)$ is a ME. \label{lem:alldemandsinequil}
	\end{lemma}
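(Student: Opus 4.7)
The plan is to invoke the first-order optimality conditions for $p^*$ as a minimizer of the convex function $\phi$ over the nonnegative orthant, and then translate these conditions directly into the market equilibrium conditions via Lemma~\ref{lem:tatgradient}. Since $\phi$ may not be differentiable everywhere (for example with linear utilities, demand sets can be multi-valued and $\phi$ is only piecewise-smooth), I would work with the subdifferential: Lemma~\ref{lem:tatgradient} should be read as saying $\partial \phi(p) = \{\mathbf{1} - \sum_i y_i : y_i \in D^R_i(p_i)\}$, so every subgradient of $\phi$ at $p^*$ corresponds to a valid selection of demands.

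The first step is to apply standard convex-analytic KKT conditions to the problem $\min_{p \geq 0} \phi(p)$. Since $p^*$ is a minimizer, there must exist a subgradient $g \in \partial \phi(p^*)$ satisfying $g_j \geq 0$ for every good $j$, together with complementary slackness $g_j \, p^*_j = 0$. The second step is to use Lemma~\ref{lem:tatgradient} to realize this subgradient as $g = \mathbf{1} - \sum_i y_i$ for some choice $y_i \in D^R_i(p^*)$. The conditions $g_j \geq 0$ and $g_j p^*_j = 0$ then translate coordinate-wise to
\begin{align*}
\sum_{i\in N} y_{ij} \leq 1 \quad \forall j, \qquad \text{and} \qquad p^*_j > 0 \implies \sum_{i\in N} y_{ij} = 1,
\end{align*}
which are precisely the market-clearing conditions for a ME. Combined with the fact that $y_i \in D^R_i(p^*)$ by construction, this gives that $(y, p^*)$ is a ME.

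The main obstacle is justifying the subgradient description of $\phi$ rigorously, because Lemma~\ref{lem:tatgradient} is phrased in terms of the (set-valued) gradient rather than the subdifferential. I would address this by noting that $\phi$ is a convex function expressible as a pointwise maximum of affine functions in $p$ (the standard dual of the EG program expresses $\phi(p) = \sum_j p_j - \sum_i B_i \log \max_{x_i \geq 0, p \cdot x_i \leq 1} u_i(x_i) \cdot \mathrm{const}$, whose subdifferential at $p^*$ is the convex hull of the affine pieces attained by demand-set elements). Any subgradient of such a function is therefore a Minkowski combination of terms of the form $\mathbf{1} - y_i$ with $y_i \in D^R_i(p^*)$, matching the form required. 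Once this is in place, the existence of the KKT-compatible subgradient (which is guaranteed by the interior-point argument of Lemma~\ref{lem:pidemandsbounded} showing $p^*$ lies in a compact convex region where $\phi$ is proper and finite, ruling out constraint qualification issues) finishes the proof.
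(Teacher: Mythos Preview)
Your approach is correct and takes a genuinely different route from the paper. The paper's proof is a one-line citation: it invokes Part~2 of Lemma~5 in \cite{cheung_tatonnement_2013}, which says $\arg\max_{x\geq 0} L(x,p) \subseteq D^R(p)$ for the Lagrangian $L$ of the EG program. Since the primal optimum $x^*$ and dual optimum $p^*$ form a saddle point of $L$, we have $x^*_i \in D^R_i(p^*)$, and the primal feasibility plus complementary slackness of the saddle point give market clearing directly. In other words, the paper pulls the allocation out of the \emph{primal} EG solution and uses Lagrangian duality to certify it lies in the demand sets.

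You instead work purely on the dual side: apply first-order optimality (subgradient in the normal cone of $\mathbb{R}^m_{\geq 0}$) to $\min_{p\geq 0}\phi(p)$, then use the identification $\partial\phi(p^*) = \{\mathbf{1} - \sum_i y_i : y_i \in D^R_i(p^*)\}$ to read off a demand profile satisfying $\sum_i y_{ij}\leq 1$ with equality whenever $p^*_j>0$. This is a perfectly valid alternative and is in some sense more self-contained, since it avoids the detour through the primal solution. The price you pay is that you must justify the subdifferential characterization carefully (your sketch via Danskin/envelope-type reasoning is the right idea, and the convexity and compactness of each $D^R_i(p^*)$ under the paper's standing assumptions ensures the Minkowski-sum description is exact, not just a subset). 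The paper's route buys brevity by outsourcing this to the cited lemma; your route buys transparency about exactly where the market-clearing conditions come from.
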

\begin{proof}
Follows directly from Part 2 of Lemma 5 in~\cite{cheung_tatonnement_2013}, that $\arg\max_{x\geq0} L(x, p) \subseteq D^R(p)$. The ME $(x^*, p^*)$ is such that $x^* \in \arg\max_{x\geq0} L(x, p^*) \subseteq D^R(p^*)$. 
\end{proof}


\begin{theorem} Let agents $i$ in the PDM $\Gamma$ have utilities $u_i\in\mathcal{H}$ that are concave, continuous, non-decreasing, non-constant, and homogeneous of degree 1. Then there exists a stochastic gradient descent-style t\^{a}tonnement for which, as $t\to\infty$, $p^t \to p^*$, where $p^* \in \arg\min \phi(p)$, and $\exists x$ s.t. $x_i \in D_i(p^*)$ and $(x, \Rfrom(p^*))$ is a PME.	\label{thm:tatonnementgd}
\end{theorem}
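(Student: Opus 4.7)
The plan is to apply the stochastic gradient descent convergence result (Lemma~\ref{lem:sgd}) to the dual EG objective $\phi$ on the bounded convex set $\Pi$ constructed in Lemma~\ref{lem:pidemandsbounded}. Concretely, I would define the t\^{a}tonnement update as the projected subgradient iteration
\[
p^{t+1} \;=\; \Bigl[p^t - \eta_t\Bigl(\mathbf{1} - \sum_{i \in N} y_i^t\Bigr)\Bigr]_{\Pi},
\]
where at each step every agent reports some $y_i^t \in D_i^R(p^t)$ (selected by some measurable, possibly randomized rule when the demand set is not a singleton), and $\{\eta_t\}$ is any schedule with $\sum_t \eta_t = \infty$ and $\sum_t \eta_t^2 < \infty$, for instance $\eta_t = 1/t$. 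By Lemma~\ref{lem:tatgradient}, $\mathbf{1} - \sum_i y_i^t$ is a valid subgradient of $\phi$ at $p^t$, so the update is a legitimate projected stochastic subgradient step.

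Next I would verify the hypotheses of Lemma~\ref{lem:sgd} one by one. Convexity of $\phi$ and boundedness and convexity of $\Pi$ are already in hand; uniqueness of the minimizer $p^*$ of $\phi$ over $\Pi$ follows from strict concavity of the (log-transformed) primal EG objective for homogeneous-of-degree-one utilities, which is why the paper already speaks of ``the'' optimal $p^*$. The step-size conditions are satisfied by construction. The bounded-gradient condition reduces to boundedness of $\sum_i y_i^t$ on $\Pi$, which is exactly the content of Lemma~\ref{lem:pidemandsbounded}. For the stochastic noise term $z_t$, I would decompose the chosen subgradient as the average (or any canonical selection) plus a zero-mean fluctuation; upper hemicontinuity and boundedness of $D_i^R$ on the compact set $\Pi$ give a uniform bound on $\textbf{E}_t[\|z_t\|^2]$. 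The deterministic bias $e_t$ can be taken to be $0$, trivially satisfying the $\|e_t\|$ summability conditions.

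Once all six conditions of Lemma~\ref{lem:sgd} are verified, it delivers $p^t \to p^*$ with probability $1$. To translate this back to the PDM claim, I invoke Lemma~\ref{lem:alldemandsinequil}, which guarantees the existence of $y$ with $y_i \in D_i^R(p^*)$ such that $(y,p^*)$ is a Fisher market equilibrium of $R(\Gamma)$. Then Theorem~\ref{thm:reduction-eq} immediately gives that $(\Rfrom(y), \Rfrom(p^*))$ is a PME of $\Gamma$, with $\Rfrom(y_i) \in D_i(\Rfrom(p^*))$ by Lemma~\ref{lem:reduction-demand}, which is exactly the conclusion of the theorem.

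The main obstacle I expect is the careful treatment of non-uniqueness of demand. When utilities are linear, $D_i^R(p^t)$ is a face of a simplex rather than a point, so one has to be explicit about what ``the'' stochastic gradient is: I need to argue that \emph{any} measurable selection from $D_i^R(p^t)$ yields a subgradient of $\phi$ at $p^t$, that the resulting noise has bounded second moment on $\Pi$ (using Lemma~\ref{lem:pidemandsbounded} plus upper hemicontinuity of the demand correspondence), and that Lemma~\ref{lem:sgd}'s zero-mean hypothesis can be met, for instance by taking $z_t$ to be the deviation from a fixed measurable selection so that $\textbf{E}_t[z_t] = 0$ by definition. Once this technical point is nailed down, the rest of the argument is a bookkeeping application of the lemmas already assembled in the preceding subsection.
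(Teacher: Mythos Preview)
Your proposal is correct and follows essentially the same approach as the paper: both define the projected subgradient update on $\Pi$ with step sizes $\eta_t = 1/t$, invoke Lemma~\ref{lem:tatgradient} for the subgradient interpretation, Lemma~\ref{lem:pidemandsbounded} for boundedness, Lemma~\ref{lem:sgd} for convergence, and Lemma~\ref{lem:alldemandsinequil} to extract an ME at $p^*$. The only minor difference is in the final translation to the PDM: the paper invokes Theorem~\ref{thm:tatonnementlifting} to lift the Fisher \tat to a PDM \tat, whereas you go directly through Theorem~\ref{thm:reduction-eq} and Lemma~\ref{lem:reduction-demand}; both routes land at the same conclusion, and your discussion of the non-unique-demand issue is in fact more explicit than the paper's (which simply postulates that the noise and bias terms satisfy the SGD hypotheses).
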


	\begin{proof}
	Construct a Fisher market $R(\Gamma)$ through the reduction.
	
	Let $\T$ be the following descent in the constructed market. Start with prices $p^0\in \Pi$, where $\Pi$ as defined in Lemma \ref{lem:pidemandsbounded}. Update prices using the rule $p^{t+1} = [p^t - \eta_t\left(1 - \sum_i \tilde{D}^R_i(p^t_i)\right)]_{\Pi}$, for ${\eta_t} = \frac{1}{t}$, and $\tilde{D}^R_i(p) = y_{i,t} + b_{i,t} + z_{i,t}$, for some $y_{i,t} \in D^R_i(p^t)$. Assume $z_{i,t}$ a zero-mean random variable and $b_{i,t}$ a constant that follow the conditions of Lemma~\ref{lem:sgd}. 
	
	By Lemma~\ref{lem:nestedhomog}, the implied utility functions still yield Eisenberg-Gale markets and by Lemma~\ref{lem:tatgradient}, $\nabla \phi(p) = 1 - \sum_i D^R_i(p)$. By Lemma~\ref{lem:pidemandsbounded}, $\exists C<\infty~s.t.~\|\nabla \phi(p)\| < C, \forall p\in \Pi$. Convergence to prices $p^* \in \arg\min \phi(p)$ follows from Lemma~\ref{lem:sgd}. By Lemma~\ref{lem:alldemandsinequil}, $\exists \textbf{y}$, $y_i \in D^R_i (p^*)$ s.t. $(\textbf{y}, p^*)$ is a ME in the Fisher market. Thus, $\T$ converges asymptotically to a ME.

	 By Theorem~\ref{thm:tatonnementlifting}, $\T$ can be lifted to create a \tat $\Rfrom(\T)$ in the PDM that converges asymptotically to a PME.  
	

\end{proof}



Note that $\Pi$ is not known a priori. However, it can be approximated during the gradient descent without affecting convergence: for example, if at any point demand goes to infinity, backtrack and impose a minimum price. Then, if demand goes to $0$ with a positive price, lower this minimum price.



Theorem~\ref{thm:tatonnementgd} and Theorem~\ref{thm:tatonnementlifting} together create a t\^{a}tonnement with asymptotic convergence for Public Decision Markets for general concave, continuous, non-constant, non-decreasing, and homogeneous of degree 1 utility functions. 




\end{document}